\theoremstyle{plain}
\newtheorem{assum}{Assumption}
\newtheorem{prbl}{Problem}
\newtheorem{rmk}{Remark}
\newtheorem{prop}{Proposition}
\newtheorem{defn}{Definition}
\newcommand{\Rset}{\mathbb{R}}
\newcommand{\hd}{{\hat{d}}}
\newcommand{\hx}{{\hat{x}}}
\newcommand{\bd}{{\bar{d}}}
\newcommand{\CC}{{\mathcal{C}}}
\newcommand{\DD}{{\mathcal{D}}}
\newcommand{\NN}{{\mathcal{N}}}
\newcommand{\PP}{{\mathcal{P}}}
\newcommand{\diag}{{\mbox{diag}}}                  
\newcommand{\abs}[1]{{|{#1}|}}                     
\newcommand{\norme}[2]{||{#1}||_{#2}}            
\newcommand{\rank}{\mbox{rank}}                    
\newcommand{\mbf}[1]{\mathbf{#1}}                  
\newcommand{\Zero}{\textbf{0}}
\newcommand{\subss}[2]{{#1}_{[#2]}}
\newcommand{\dx}{{\dot x}}
\newcommand{\matr}[1]{
\begin{bmatrix}
    #1
\end{bmatrix}
}
\begin{document}

     \title{Plug-and-play voltage and frequency control of islanded microgrids with meshed topology\thanks{The research leading to these results has received funding from the European Union Seventh Framework Programme [FP7/2007-2013]  under grant agreement n$^\circ$ 257462 HYCON2 Network of excellence.}}

    \author{Stefano Riverso%
       \thanks{Electronic address: \texttt{stefano.riverso@unipv.it}; Corresponding author}} 

     \author{Fabio Sarzo%
       \thanks{Electronic address: \texttt{fabio.sarzo01@universitadipavia.it}}}

     \author{Giancarlo Ferrari-Trecate%
       \thanks{Electronic address: \texttt{giancarlo.ferrari@unipv.it}} }

     \affil{Dipartimento di Ingegneria Industriale e dell'Informazione\\Universit\`a degli Studi di Pavia}
     \date{\textbf{Technical Report}\\ September, 2014}

     \maketitle

     \begin{abstract}
         In this paper we propose a new decentralized control scheme for Islanded microGrids (ImGs) composed by the interconnection of Distributed Generation Units (DGUs). Local controllers regulate voltage and frequency at the Point of Common Coupling (PCC) of each DGU and they are able to guarantee stability of the overall ImG. The control design procedure is decentralized, since, besides two global scalar quantities, the synthesis of a local controller uses only information on the corresponding DGU and lines connected to it. Most important, our design procedure enables Plug-and-Play (PnP) operations: when a DGU is plugged in or out, only DGUs physically connected to it have to retune their local controllers. We study the performance of the proposed controllers simulating different scenarios in MatLab/Simulink and using performance indexes proposed in IEEE standards. 

       \emph{Keywords}: Microgrid, islanded microgrid, off-grid, decentralized control, voltage and frequency control, plug-and-play.
     \end{abstract}
     
     \newpage

      \section{Introduction}
           In recent years, research on Islanded microGrids (ImG) has received major attention. ImGs are self-sufficient micro grids composed of several Distributed Generation Units (DGUs) designed to operate safely and reliably in absence of a connection with the main grid. Besides fostering the use of renewable generation, ImGs bring distributed generation sources close to loads and allow power to be delivered to rural areas, remote lands, islands or harsh environments \cite{Lasseter2002,Lasseter2004,Guerrero2013}. The interest in ImGs is also motivated by microgrids that normally operate in grid-connected mode and that can be switched off-grid for guaranteeing users remain powered in presence of grid faults. In particular, for buildings such hospitals and airports, ImGs offer a interesting solution for emergency generation since, differently from common diesel generators, power is produced and delivered to the main grid in absence of faults.

          For grid-connected microgrids, voltage and frequency are set by the main grid. However, in islanded mode, voltage and frequency control must be controlled by DGUs. This is a challenging task, especially if one allows for (a) meshed topology with the goal increasing redundancy and robustness to line faults; (b) decentralized regulation of voltage and frequency, meaning that each DGU is equipped with a local controller and controllers do not communicate in real-time.

          As reviewed in \cite{Guerrero2013} many available decentralized regulators are based on droop control \cite{Chandorkar1996,Katiraei2005,Piagi2006,Guerrero2007,Mehrizi-Sani2010,Guerrero2011}. The main drawback of applying the droop method to ImGs is that frequency and amplitude deviations can be heavily affected by loads. For these reasons, a secondary control layer to restore system frequency and voltage to nominal values is needed \cite{Chandorkar1996,Katiraei2006,IEEE2011}. 

          Stability is another critical issue in ImGs controlled in a decentralized way \cite{Guerrero2013}. The key challenge is to guarantee stability is not spoiled by the interaction among DGUs and, in the context of droop control, this issue has been investigated only recently \cite{Simpson-Porco2013}. For regulators not based on droop control, almost all studies focused on radial ImGs (i.e. DGUs are not connected in a loop fashion) while control of ImGs with meshed topology is still largely unexplored \cite{Guerrero2013}.

          In this paper we consider the design of decentralized regulators for meshed ImGs with a view on decentralization of the synthesis procedure. More specifically, we develop a Plug-and-Play (PnP) design algorithm where the synthesis of a local controller for a DGU requires parameters of transmission lines connected to it, the knowledge of two global scalar parameters, but not specific information about any other DGU. This implies that when a DGU is plugged in or out, only DGUs physically connected to it have to retune their local controllers.

          PnP control design for general linear constrained systems has been proposed in \cite{Riverso2013c,Riverso2014a} and \cite{Riverso2014_thesis}. PnP design for ImGs is however different since it is based on the concept of neutral interactions \cite{Lunze1992} rather than on robustness against subsystem coupling. Furthermore, for achieving neutral interactions among DGUs, we exploit Quasi-Stationary Line (QSL) approximations of line dynamics \cite{Venkatasubramanian1995}.   

          Our theoretical results are backed up by simulations using realistic models of Voltage Source Converters (VSCs), associated filters and transformers. As a first testbed, we consider two radially connected DGUs \cite{Moradi2010} and show that, in spite of QSL approximations, PnP controllers exhibit very good performances (measured as in the IEEE standard \cite{IEEE2009}) in terms of voltage tracking and robustness to nonlinear and unbalanced loads. We then consider a meshed ImG with 10 DGUs and discuss the real-time plugging in and out of a DGU.

          The paper is organized as follows. In Section \ref{sec:Model} we present dynamical models of ImGs and introduce the adopted line approximation. In Section \ref{sec:PnPctrl} we exploit the notion of neutral interactions for designing decentralized controllers and we discuss how to perform PnP operations. Moreover we will show how to improve performance and how to allow PnP capabilities of the proposed controllers. In Section \ref{sec:Simresults} we study performance of PnP controllers through simulation case studies. Section \ref{sec:conclusions} is devoted to some conclusions.

     \clearpage

     \section{Microgrid model}
          \label{sec:Model}
          In this section, we present dynamical models of ImGs used in this paper. For the sake of clearness, we first introduce an ImG consisting of two parallel DGUs and then generalize the model to ImGs composed of $N$ DGUs.\\
          As in \cite{Karimi2008,Moradi2010,Etemadi2012a,Etemadi2012,Babazadeh2013}, we consider the microgrid in Figure \ref{fig:schemairandist} where two DGUs, generally denoted with $i$ and $j$, are connected through a three-phase line with non-zero impedance ($R_{ij}$, $L_{ij}$). Each DGU is composed of a DC voltage source (representing a generic renewable resource), a voltage source converter (VSC), a series filter described by a resistance $R_{t}$ and an inductance $L_{t}$ and a step-up transformer (Y-$\Delta$) which connects the DGU to the remainder of the electrical network at Point of Common Coupling (PCC). Transformer parameters, except the transformation ratio $k$, are included in $R_{t}$ and $L_{t}$.
          \begin{figure}[!htb]
            \centering
            \ctikzset{bipoles/length=0.7cm}
\begin{circuitikz}[scale=1,transform shape]
\ctikzset{current/distance=1}
\draw
node[transformer] (Ti) at (0,0) {}
(Ti.center) node{$k_i$}
node[transformer] (Tj) at ($(Ti.center)+(5.4,0)$) {}
(Tj.center) node{$k_j$}

node[ocirc] (Aibattery) at ([xshift=-3.14cm,yshift=-0.2cm]Ti.A1) {}
node[ocirc] (Bibattery) at ([xshift=-3.14cm,yshift=0.2cm]Ti.A2) {}
(Bibattery) to [battery] (Aibattery) {}
node [rectangle,draw,minimum width=1cm,minimum height=2.4cm] (vsci) at ($0.5*(Aibattery)+0.5*(Bibattery)+(0.8,0)$) {\scriptsize{VSC $i$}}
(Aibattery) to [short] ([xshift=0.3cm]Aibattery)
(Bibattery) to [short] ([xshift=0.3cm]Bibattery)

 node[ocirc] (Ai) at ($(Aibattery)+(1.54,0.2)$) {}
 node[ocirc] (Bi) at ($(Bibattery)+(1.54,-0.2)$) {}
(Ai) to [short] ([xshift=-0.24cm]Ai)
(Bi) to [short] ([xshift=-0.24cm]Bi)
(Ai) to [R, l=\scriptsize{$R_{ti}$}] ($(Ai)+(0.8,0)$) {}
to [L, l=\scriptsize{$L_{ti}$}]($(Ai)+(1.5,0)$){}
to [short,i=\scriptsize{$I_{ti}$}]($(Ai)+(1.6,0)$){}
(Bi) to [short] (Ti.A2);
\begin{scope}[shorten >= 10pt,shorten <= 10pt,]
\draw[<-] (Ai) -- node[right] {$V_{ti}$} (Bi);
\end{scope};

\draw
($(Ti.B1)+(0.25,0.25)$) node[anchor=south]{\scriptsize{$PCC_i$}}
($(Ti.B1)+(0.25,0)$) node[anchor=south]{\scriptsize{$V_i$}}
($(Ti.B1)+(0.25,0)$) node[ocirc](PCCi){}
($(Ti.B1)+(0,-0.5)$) to [I](Ti.B2)--($(Ti.B2)+(0.5,0)$)
($(Ti.B1)+(0,-0.5)$) to [short,i<=\scriptsize{$I_{Li}$}] (Ti.B1)--($(Ti.B1)+(0.5,0)$)
($(Ti.B1)+(0.5,0)$) to [C, l=\scriptsize{$C_{ti}$}] ($(Ti.B2)+(0.5,0)$)

($(Ti.B1)+(0.5,0)$) to [short] ($(Ti.B1)+(0.7,0)$)
($(Ti.B1)+(0.7,0)$) to [short,i^<=\scriptsize{$I_{ij}$}] ($(Ti.B1)+(0.8,0)$)
($(Ti.B1)+(0.8,0)$) to [R, l=\scriptsize{$R_{ij}$}] ($(Ti.B1)+(1.6,0)$) {}
to [L, l=\scriptsize{$L_{ij}$}]($(Ti.B1)+(2.4,0)$){}
($(Ti.B1)+(2.4,0)$)--($(Ti.B1)+(2.5,0)$) to [short,i=\scriptsize{$I_{ji}$}] ($(Ti.B1)+(2.6,0)$)
($(Ti.B1)+(2.6,0)$) to [short] ($(Ti.B1)+(2.8,0)$)
($(Ti.B2)+(0.5,0)$) to [short] ($(Ti.B2)+(2.8,0)$)

($(Tj.A1)-(0.25,-0.2)$) node[anchor=south]{\scriptsize{$PCC_j$}}
($(Tj.A1)-(0.25,0)$) node[anchor=south]{\scriptsize{$V_j$}}
($(Tj.A1)-(0.25,0)$) node[ocirc](PCCj){}
($(Ti.B2)+(3.3,0)$)--($(Ti.B2)+(2.8,0)$) to [C, l=\scriptsize{$C_{tj}$}] ($(Ti.B1)+(2.8,0)$)--($(Ti.B1)+(3.3,0)$)
($(Ti.B1)+(3.3,-0.5)$) to [I]($(Ti.B2)+(3.3,0)$)
(Tj.A1) to [short,i=\scriptsize{$I_{Lj}$}] ($(Tj.A1)+(0,-0.5)$)

(Tj.B1) to [short,i<=\scriptsize{$I_{tj}$}]($(Tj.B1)+(0.1,0)$){}
($(Tj.B1)+(0.1,0)$) to [L, l=\scriptsize{$L_{tj}$}] ($(Tj.B1)+(0.8,0)$) {}
($(Tj.B1)+(0.8,0)$) to [R, l=\scriptsize{$R_{tj}$}] ($(Tj.B1)+(1.6,0)$) {}
 node[ocirc] (Aj) at ($(Tj.B1)+(1.6,0)$) {}
 node[ocirc] (Bj) at ($(Tj.B2)+(1.6,0)$) {}
(Aj) to [short] ([xshift=0.24cm]Aj)
(Bj) to [short] ([xshift=0.24cm]Bj)
(Bj) to [short] (Tj.B2);
\begin{scope}[shorten >= 10pt,shorten <= 10pt,]
\draw[<-] (Aj) -- node[left] {$V_{tj}$} (Bj);
\end{scope};

\draw
node[ocirc] (Ajbattery) at ([xshift=3.14cm,yshift=-0.2cm]Tj.B1) {}
node[ocirc] (Bjbattery) at ([xshift=3.14cm,yshift=0.2cm]Tj.B2) {}
(Bjbattery) to [battery] (Ajbattery) {}
node [rectangle,draw,minimum width=1cm,minimum height=2.4cm] (vsci) at ($0.5*(Ajbattery)+0.5*(Bjbattery)-(0.8,0)$) {\scriptsize{VSC $j$}}
(Ajbattery) to [short] ([xshift=-0.3cm]Ajbattery)
(Bjbattery) to [short] ([xshift=-0.3cm]Bjbattery)

node [rectangle,draw,minimum width=6.38cm,minimum height=3.3cm,dashed,label=\small\textbf{DGU $i$}] (DGUi) at ($0.5*(Aibattery)+0.5*(Bibattery)+(2.9,0)$) {}
node [rectangle,draw,minimum width=6.38cm,minimum height=3.3cm,dashed,label=\small\textbf{DGU $j$}] (DGUj) at ($0.5*(Ajbattery)+0.5*(Bjbattery)-(2.9,0)$) {}
node [rectangle,draw,minimum width=1.6cm,minimum height=3.3cm,dashed,label=\small\textbf{Line $ij$ and $ji$}] (Lineij) at ($0.5*(DGUi.center)+0.5*(DGUj.center)+(0,0)$){}

;\end{circuitikz}
            \caption{Electrical scheme of an ImG composed of two radially connected DGUs with unmodeled loads.}
            \label{fig:schemairandist}
          \end{figure}
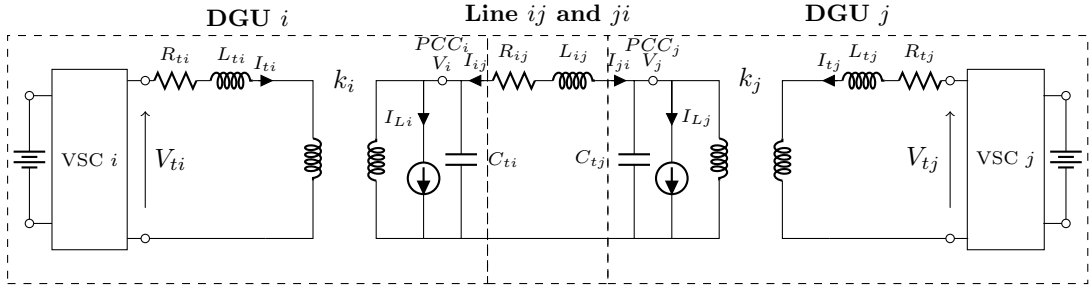
          
          Each DGU provides real and reactive power for its local loads connected to the PCC. We assume loads are unknown and, similarly to \cite{Babazadeh2013}, we treat load currents $I_L$ as disturbances for the DGUs. As shown in Figure \ref{fig:schemairandist}, at the PCC of each area we use a shunt capacitance $C_t$ for attenuating the impact of high-frequency harmonics of the load voltage. From the dynamical equations of this scheme in the \emph{abc}-frame, after applying the Park's Transformation \cite{Park1929} we obtain the following a \emph{dq}-frame rotating with speed $\omega_0$
          \begin{subequations}
            \label{eq:sysdistdq}            
            \begin{empheq}[left=$DGU \emph{i}\quad$\empheqlbrace]{align}
              \label{eq:sysdistdqB}\frac{dV_{i,dq}}{dt} + j\omega_0 V_{i,dq} &= \frac{k_i}{C_{ti}}I_{ti,dq}+\frac{1}{C_{ti}}I_{{ij},dq}-\frac{1}{C_{ti}}I_{Li,dq} \\
              \label{eq:sysdistdqA}\frac{dI_{ti,dq}}{dt} + j\omega_0 I_{ti,dq} &= -\frac{R_{ti}}{L_{ti}}I_{ti,dq}-\frac{k_i}{L_{ti}}V_{i,dq}+\frac{1}{L_{ti}}V_{ti,dq}
            \end{empheq}
            \begin{empheq}[left=$Line \emph{ij}\quad$\empheqlbrace]{align}
              \label{eq:sysdistdqC}\frac{dI_{ij,dq}}{dt} + j\omega_0 I_{ij,dq} &= \frac{1}{L_{ij}}V_{j,dq}-\frac{R_{ij}}{L_{ij}}I_{{ij},dq}-\frac{1}{L_{ij}}V_{i,dq}
            \end{empheq}
            \begin{empheq}[left=$Line \emph{ji}\quad$\empheqlbrace]{align}
              \label{eq:sysdistdqCji}\frac{dI_{ji,dq}}{dt} + j\omega_0 I_{ji,dq} &= \frac{1}{L_{ji}}V_{i,dq}-\frac{R_{ji}}{L_{ji}}I_{{ji},dq}-\frac{1}{L_{ji}}V_{j,dq}
            \end{empheq}
            \begin{empheq}[left=$DGU \emph{j}\quad$\empheqlbrace]{align}
              \label{eq:sysdistdqD}\frac{dV_{j,dq}}{dt} + j\omega_0 V_{j,dq} &= \frac{k_j}{C_{tj}}I_{tj,dq}+\frac{1}{C_{tj}}I_{{ji},dq}-\frac{1}{C_{tj}}I_{Lj,dq} \\
              \label{eq:sysdistdqE}\frac{dI_{tj,dq}}{dt} + j\omega_0 I_{tj,dq} &= -\frac{R_{tj}}{L_{tj}}I_{tj,dq}-\frac{k_j}{L_{tj}}V_{j,dq}+\frac{1}{L_{tj}}V_{tj,dq}
            \end{empheq}
          \end{subequations}
          Each state in \eqref{eq:sysdistdq}   can be split in two parts (the real component \emph{d-} and the imaginary component \emph{q-} of \emph{dq} reference frame, respectively). Note that in \eqref{eq:sysdistdqC} and \eqref{eq:sysdistdqCji}, one obtains two opposite line currents $I_{ij}$ and $I_{ji}$ so as to have a reference current entering in each DGU. In order to guarantee that $I_{ij}(t)=-I_{ji}(t)$, $\forall t\geq 0$, we introduce the following assumption.
          \begin{assum}
            \label{ass:lines}
            Initial states verify $I_{ij,dq}(0)=-I_{ji,dq}(0)$. Moreover it holds $L_{ij}=L_{ji}$ and $R_{ij}=R_{ji}$.
          \end{assum}
          \begin{rmk}
            System in \eqref{eq:sysdistdqC}, \eqref{eq:sysdistdqCji} can be seen as an expansion of the line model one can obtain by fixing a single reference direction for the line current and introducing a single state variable. For a definition of expansion of a system we defer the reader to Section 3.4 in \cite{Lunze1992}. System \eqref{eq:sysdistdq} can also be viewed as a system of differential-algebraic equations, given by \eqref{eq:sysdistdqB}-\eqref{eq:sysdistdqC}, \eqref{eq:sysdistdqD}, \eqref{eq:sysdistdqE} and $I_{ij}(t)=-I_{ji}(t)$.
          \end{rmk}
          Using the proposed notation for the lines, both DGU models have the same structure. Modeling the load current $I_{L*,dq}$, $*\in\{i,j\}$ as a disturbance, \eqref{eq:sysdistdq} can be represented through the linear system
          \begin{equation}
            \label{eq:sysdistABCDM}
            \begin{aligned}
              \dot{x}(t) &= Ax(t)+Bu(t)+Md(t)\\
              y(t)       &= Cx(t)
            \end{aligned}
          \end{equation}
          where $x=[V_{i,d},V_{i,q},I_{ti,d},I_{ti,q},I_{{ij},d},I_{{ij},q},I_{{ji},d},I_{{ji},q},V_{j,d},V_{j,q},I_{tj,d},I_{tj,q}]^T$, $u=[V_{ti,d},V_{ti,q},V_{tj,d},V_{tj,q}]^T$, $d=[I_{Li,d},I_{Li,q},I_{Lj,d},I_{Lj,q}]^T$, $y=[V_{i,d},V_{i,q},V_{j,d},V_{j,q}]^T$, are, respectively, the state, input, disturbance, and output of the system. The matrices in \eqref{eq:sysdistABCDM} are obtained from \eqref{eq:sysdistdq} and are given in Appendix \ref{sec:AppMasterSlave}. Model \eqref{eq:sysdistABCDM} is often referred in literature to as \emph{master-slave} model, due to the fact that, for control purposes, the state of the line is controlled by one DGU only \cite{Babazadeh2013}. In the next section, we propose an approximate model that allows one to describe each DGU as a dynamical system affected directly by state of the other DGU, hence avoiding the need of using the line current in the DGU state equations.

          \subsection{QSL model}
               \label{sec:newmodel}               
               As in equation (T1.10) in \cite{Venkatasubramanian1995}, we set $\frac{d I_{{ij},dq}}{dt}=0$ and $\frac{d I_{{ji},dq}}{dt}=0$ (see also \cite{Akagi2007} and references therein). Then, \eqref{eq:sysdistdqC} and \eqref{eq:sysdistdqCji} give the QSL model
               \begin{equation}
                 \label{eq:staticline}
                 \begin{aligned}
                   \bar{I}_{ij,dq} = \frac{V_{j,dq}}{(R_{ij}+ j\omega_0 L_{ij})} -\frac{V_{i,dq}}{(R_{ij}+ j\omega_0 L_{ij})}\\
                   \bar{I}_{ji,dq} = \frac{V_{i,dq}}{(R_{ji}+ j\omega_0 L_{ji})} -\frac{V_{j,dq}}{(R_{ji}+ j\omega_0 L_{ji})}\\
                 \end{aligned}
               \end{equation}
               We then replace variables $I_{ij,dq}$ and $I_{ji,dq}$, in \eqref{eq:sysdistdqA} and \eqref{eq:sysdistdqE} with the right-hand side of \eqref{eq:staticline}. Splitting complex \emph{dq} quantities in their \emph{d} and \emph{q} components one obtains the new model for DGU $i$
               \begin{equation}
                 \label{eq:newDGU}
                 \text{DGU}~i\quad\left\lbrace
                   \begin{aligned}
                     \frac{dV_{i,d}}{dt} &= \omega_0 V_{i,q} +\frac{k_i}{C_{ti}}I_{ti,d}-\frac{1}{C_{ti}}I_{Li,d}+\frac{1}{C_{ti}}\bar{I}_{ij,d}\\
                     \frac{dV_{i,q}}{dt} &= -\omega_0 V_{i,d} +\frac{k_i}{C_{ti}}I_{ti,q}-\frac{1}{C_{ti}}I_{Li,q}+\frac{1}{C_{ti}}\bar{I}_{ij,q}\\
                     \frac{dI_{ti,d}}{dt} &= -\frac{k}{L_{ti}}V_{i,d}-\frac{R_{ti}}{L_{ti}}I_{ti,d}+\omega_0 I_{ti,q} +\frac{1}{L_{ti}}V_{ti,d}\\
                     \frac{dI_{ti,q}}{dt} &= -\frac{k}{L_{ti}}V_{i,q}-\omega_0I_{ti,d}-\frac{R_{ti}}{L_{ti}} I_{ti,q} +\frac{1}{L_{ti}}V_{ti,q}\\
                   \end{aligned}
                 \right.
               \end{equation}
               Similarly, switching indexes $i$ and $j$ in \eqref{eq:newDGU} one obtains the model of DGU $i$
               \begin{equation}
                 \label{eq:subsysDGUi}
                 \subss{\Sigma}{i}^{DGU} :
                 \left\lbrace
                   \begin{aligned}
                     \subss{\dot{x}}{i}(t) &= A_{ii}\subss{x}{i}(t) + B_{i}\subss{u}{i}(t)+M_{i}\subss{d}{i}(t)+ \subss\xi i(t)\\
                     \subss{y}{i}(t)       &= C_{i}\subss{x}{i}(t)\\
                     \subss{z}{i}(t)       &= H_{i}\subss{y}{i}(t)\\
                   \end{aligned}
                 \right.
               \end{equation}
               where $\subss{x}{i}=[V_{i,d},V_{i,q},I_{ti,d},I_{ti,q}]^T$, $\subss{u}{i} = [V_{ti,d},V_{ti,q}]^T$, $\subss{d}{i} = [I_{Li,d},I_{Li,q}]^T$, $\subss{z}{i} = [V_{i,d},V_{i,q}]^T$ are the state, the control input, the exogenous input and the controlled variables. The measurable output is $\subss y i(t)$ and we assume $\subss{y}{i}=\subss{x}{i}$. Furthermore $\subss\xi i(t)=A_{ij}\subss x j$ is the coupling with DGU $j$. The matrices of $\subss{\Sigma}{i}^{DGU}$ are obtained from \eqref{eq:newDGU} and they are collected in Appendix \ref{sec:AppMasterMaster}. As for the line, we have the subsystem 
               \begin{equation}
                 \label{eq:subsysLine}
                 \subss{\Sigma}{ij}^{Line} :
                 \left\lbrace
                   \subss{\dot{x}}{l,ij}(t) = A_{ll,ij}\subss{x}{ll,ij}(t) + A_{li,ij}\subss{x}{i}(t) + A_{lj,ij}\subss{x}{j}(t)\\
                 \right.
               \end{equation}
                where $\subss{x}{l,ij}=[I_{ij,d},I_{ij,q}]^T$ is the state of the line. The matrices of \eqref{eq:subsysLine} are obtained from \eqref{eq:sysdistdqC} and collected in Appendix \ref{sec:AppMasterMaster}. From \eqref{eq:subsysDGUi} and \eqref{eq:subsysLine}, the overall model of the microgrid in Figure \ref{fig:schemairandist} is
                \begin{equation}
                  \label{eq:overallmodeltwoDGU}
                  \begin{aligned}
                    \begin{bmatrix}
                      \subss{\dx}{i} \\
                      \subss{\dx}{j} \\
                      \subss{\dx}{l,ij} \\
                      \subss{\dx}{l,ji}
                    \end{bmatrix} 
                    &= 
                    \begin{bmatrix}
                      A_{ii} & A_{ij} & 0 & 0 \\
                      A_{ji} & A_{jj} & 0 & 0 \\
                      A_{li,ij} & A_{lj,ij} & A_{ll,ij} & 0 \\
                      A_{li,ji} & A_{lj,ji} & 0 & A_{ll,ji}
                    \end{bmatrix}
                    \begin{bmatrix}
                      \subss{x}{i} \\
                      \subss{x}{j} \\
                      \subss{x}{l,ij} \\
                      \subss{x}{l,ji}
                    \end{bmatrix}
                    +
                    \begin{bmatrix}
                      B_{i} & 0\\
                      0 & B_{i} \\
                      0 & 0  
                    \end{bmatrix}
                    \begin{bmatrix}
                      \subss{u}{i} \\
                      \subss{u}{j}  
                    \end{bmatrix}
                    +
                    \begin{bmatrix}
                      M_{i} & 0\\
                      0 & M_{j} \\
                      0 & 0  
                    \end{bmatrix}
                    \begin{bmatrix}
                      \subss{d}{i} \\
                      \subss{d}{j}  
                    \end{bmatrix}
                    \\		
                    \begin{bmatrix}
                      \subss{y}{i}\\
                      \subss{y}{j}
                    \end{bmatrix}
                    &=
                    \begin{bmatrix}
                      C_{1} & 0 & 0 & 0 \\
                      0 & C_{2} & 0 & 0 \\
                    \end{bmatrix}
                    \begin{bmatrix}
                      \subss{x}{i} \\
                      \subss{x}{j} \\
                      \subss{x}{l,ij} 
                    \end{bmatrix}
                    \\
                    \begin{bmatrix}
                      \subss{z}{i}\\
                      \subss{z}{j}
                    \end{bmatrix}
                    &=
                    \begin{bmatrix}
                      H_{i} & 0\\
                      0 & H_{j}
                    \end{bmatrix}
                    \begin{bmatrix}
                      \subss{y}{i} \\
                      \subss{y}{j} \\ 
                    \end{bmatrix}.
                  \end{aligned}
                \end{equation}
                
                \begin{rmk}
                  \label{rmk:Astruct}
                  Note that $A$ has following the block-triangular structure
                  \begin{equation*}
                    \label{eq:blktriangmatrix}
                    A=\left[\begin{array}{cc|cc}
			A_{ii} & A_{ij} & 0 & 0  \\
			A_{ji} & A_{jj} & 0 & 0 \\ \hline	
			A_{li,ij} & A_{lj,ij} & A_{ll,ij} & 0 \\
                        A_{li,ji} & A_{lj,ji} & 0 & A_{ll,ji}
                      \end{array}\right]
                  \end{equation*}
                  and hence, its eigenvalues are the union of the eigenvalues of $\matr{ A_{ii} & A_{ij} \\ A_{ji} & A_{jj} }$, $A_{ll,ij}$ and $A_{ll,ji}$. Note that $A_{ll,ij}=A_{ll,ji}$. In particular, positivity of line parameters implies that the line dynamics is asymptotically stable. Therefore stability of \eqref{eq:overallmodeltwoDGU} depends on the stability of local DGUs interconnected through the QSL model \eqref{eq:staticline}. Furthermore, designing decentralized controllers $\subss{u}{*}= k_*(\subss y *)$, $*\in\{i,j\}$, such that the connection of the DGUs is asymptotically stable, makes the overall closed-loop model of the microgrid asymptotically stable as well. QSL approximations have been widely studied in literature, especially in the field of power theory (see e.g. \cite{Venkatasubramanian1995,Akagi2007} and references therein).
                \end{rmk}
                
          \subsection{QSL model of a microgrid composed of $N$ DGUs}
               Next, we generalize model \eqref{eq:overallmodeltwoDGU} to microgrids composed of $N$ DGUs. Let $\DD=\{1,\ldots,N\}$. Two DGUs $i$ and $j$ are neighbours if there is a transmission line connecting them and we denote with $\NN_i\subset\DD$ the set of neighbours of DGU $i$. The dynamics of DGU $i$, can be then described by model \eqref{eq:subsysDGUi} setting $\subss\xi i =\sum_{j\in\NN_i} A_{ij}\subss{x}{j}(t)$. The new matrices of $\subss{\Sigma}{i}^{DGU}$ are provided in Appendix \ref{sec:AppNDGunit}. The overall QSL ImG model is given by
               \begin{subequations}
                 \label{eq:stdform}
                 \begin{align}
                   \label{eq:stdformA}\mbf{\dot{x}}(t) &= \mbf{Ax}(t) + \mbf{Bu}(t)+ \mbf{Md}(t)\\
                   \label{eq:stdformB}\subss{\dot{x}}{l,ij}(t) &= A_{ll,ij}\subss{x}{l,ij}(t) + A_{li,ij}\subss{x}{i}(t) + A_{lj,ij}\subss{x}{j}(t),~\forall i\in\DD,~\forall j\in\NN_i
                 \end{align}
               \end{subequations}
               where $\mbf x = (\subss x 1,\ldots,\subss x N)\in\Rset^{4N}$, $\mbf u = (\subss u 1,\ldots,\subss u N)\in\Rset^{2N}$, $\mbf d = (\subss d 1,\ldots,\subss d N)\in\Rset^{2N}$, and matrices $\mbf{A}$, $\mbf{B}$, $\mbf M$, $A_{ll,ij}$, $A_{li,ij}$ and $A_{lj,ij}$ are reported in Appendices \ref{sec:AppMasterMaster} and \ref{sec:AppNDGunit}. Note that $\mbf x$ is not influenced by any line state $\subss x {l,ij}$ and therefore, the collective model embracing all states has a block-triangular structure as in \eqref{eq:overallmodeltwoDGU}. We equip \eqref{eq:stdform} with the equations
               \begin{equation}
                 \label{eq:stdformOut}
                 \begin{aligned}
                   \mbf{y}(t)       &= \mbf{Cx}(t)\\
                   \mbf{z}(t)       &= \mbf{Hy}(t)
                 \end{aligned}
               \end{equation}
               where $\mbf y = (\subss y 1,\ldots,\subss y N)\in\Rset^{4N}$ are the measured variables and $\mbf z = (\subss z 1,\ldots,\subss z N)\in\Rset^{2N}$ are the controlled variables (see Appendix \ref{sec:AppNDGunit} for definition of matrices $\mbf C$ and $\mbf H$). Since neither $\mbf y$ nor $\mbf z$ depend upon states $\subss x {l,ij}$, and, since $\subss x {l,ij}$ does not influence $\mbf x$ too, equations \eqref{eq:stdformB} will be omitted in the sequel.               

\clearpage

     \section{Plug-and-Play decentralized voltage control}
	  \label{sec:PnPctrl}
	  \subsection{Decentralized control scheme with integrators}
               \label{sec:ctrlint}
               In order to track a constant set-point $\mbf{z_{ref}}(t)$, when $\mbf{d}(t)$ is constant, we augment the ImG model with integrators \cite{Skogestad1996}. For zeroing the steady-state error, it must hold
               \begin{equation}
                 \begin{aligned}
                   \Zero &= \mbf{A\bar{x}}+\mbf{B\bar{u}}+\mbf{M\bd}\\
                   \mbf{z_{ref}} &= \mbf{HC\bar{x}}
                 \end{aligned}
               \end{equation}
               \begin{equation}
                 \label{eq:cond_integrators}
                 \Gamma\begin{bmatrix}
                   \mbf{\bar{x}}\\
                   \mbf{\bar{u}}
                 \end{bmatrix}
                 =\begin{bmatrix}
                   \Zero & \mbf{-M}\\
                   \mbf{I} & \Zero
                 \end{bmatrix}
                 \begin{bmatrix}
                   \mbf{z_{ref}}\\
                   \mbf{\bd}
                 \end{bmatrix},\qquad
                 \Gamma = \begin{bmatrix}
                   \mbf{A} & \mbf{B}\\
                   \mbf{HC} & \Zero
                 \end{bmatrix} \in \Rset^{6N\times 6N}
               \end{equation}
               where $\mbf{\bar{x}}$ and $\mbf{\bar{u}}$ are equilibrium states and inputs.

               \begin{prop}
                 \label{prop:target}
                 Given $\mbf{z_{ref}}$ and $\mbf{\bd}$, vectors $\mbf{\bar{x}}$ and $\mbf{\bar{u}}$ that satisfy \eqref{eq:cond_integrators} always exist.
               \end{prop}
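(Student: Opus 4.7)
My plan is to reduce the statement to the invertibility of the square matrix $\Gamma \in \Rset^{6N\times 6N}$, and then verify invertibility directly from the DGU block structure. Since $\Gamma$ is square, the existence of $(\mbf{\bar x},\mbf{\bar u})$ satisfying \eqref{eq:cond_integrators} for every right-hand side is equivalent to $\Gamma$ being non-singular, so it suffices to prove $\ker\Gamma=\{\mbf{0}\}$.

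Concretely, I would take any $(\mbf v_1,\mbf v_2)\in\Rset^{4N}\times\Rset^{2N}$ in $\ker\Gamma$ and show that it is zero by propagating the constraint DGU by DGU. The lower block-row of $\Gamma$ gives $\mbf{HC}\,\mbf v_1=\mbf{0}$. Since $\mbf{HC}$ extracts, in each DGU block, the pair of voltage components $(V_{i,d},V_{i,q})$ of $\mbf v_1$, this already forces all voltage components of $\mbf v_1$ to vanish. Substituting into the upper block-row $\mbf A\mbf v_1+\mbf B\mbf v_2=\mbf 0$ and reading it off DGU by DGU using \eqref{eq:newDGU}, the two equations coming from the voltage dynamics of DGU $i$ collapse to $(k_i/C_{ti})(I_{ti,d},I_{ti,q})=\mbf 0$, because every coupling term is, by \eqref{eq:staticline}, a linear combination of the (now-zero) voltages of DGU $i$ and its neighbours. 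Positivity of $k_i$ and $C_{ti}$ then yields $I_{ti,d}=I_{ti,q}=0$. Finally, the two equations coming from the filter-current dynamics of DGU $i$ reduce to $(1/L_{ti})(V_{ti,d},V_{ti,q})=\mbf 0$, and positivity of $L_{ti}$ forces the remaining control components to vanish. Thus $\mbf v_1=\mbf 0$, $\mbf v_2=\mbf 0$, and $\Gamma$ is non-singular, which proves both existence and uniqueness of the equilibrium pair.

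The main obstacle is really a bookkeeping verification: once all voltage components of $\mbf v_1$ are known to be zero, one must check that every off-diagonal block $A_{ij}$ of $\mbf A$ contributes nothing to the $i$-th DGU equations. This is transparent because, in the QSL model \eqref{eq:newDGU}, DGU $i$ sees DGU $j$ only through the static line currents $\bar I_{ij,dq}$ defined in \eqref{eq:staticline}, which are affine in the voltage variables $V_{i,dq}$ and $V_{j,dq}$; in particular, \eqref{eq:newDGU} contains no direct current-to-current coupling between distinct DGUs, so the columns of $A_{ij}$ associated with $I_{tj,dq}$ vanish identically. Once this structural observation is in hand, the kernel argument above proceeds mechanically and one never needs to invert $\mbf A$ explicitly.
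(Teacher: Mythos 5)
Your proof is correct, and it is worth noting how it relates to the paper's. The paper proves this proposition by citing the standard result (from Skogestad and Postlethwaite) that a solution of \eqref{eq:cond_integrators} exists iff the number of controlled variables does not exceed the number of inputs and $\rank(\Gamma)=6N$; the first condition is checked trivially, but the second --- which is the entire mathematical content, since for a square $\Gamma$ it is just nonsingularity --- is dismissed with ``can be easily proved using the definition of the matrices and the fact that electrical parameters are positive.'' You perform exactly that verification: reducing existence to $\ker\Gamma=\{\mbf 0\}$ and then propagating zeros through the DGU blocks (output rows kill the voltages, the voltage-dynamics rows then kill $I_{ti,dq}$ via $k_i/C_{ti}>0$, and the current-dynamics rows kill $V_{ti,dq}$ via $1/L_{ti}>0$). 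The one structural fact you rely on --- that the current columns of every coupling block $A_{ij}$ vanish --- is confirmed by \eqref{eq:Aij}, so the cascade goes through. Your argument is therefore a self-contained, more elementary version of the paper's proof that additionally yields uniqueness of $(\mbf{\bar x},\mbf{\bar u})$; the paper's route buys brevity and a connection to the invariant-zero interpretation, at the cost of leaving the decisive rank computation unverified.
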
 
               \begin{proof}
                 From \cite{Skogestad1996}, $\mbf{\bar{x}},\mbf{\bar{u}}$ verifying \eqref{eq:cond_integrators} exist if and only if the following two conditions are fulfilled.
                 \begin{enumerate}[(i)]
                 \item\label{enu:cond_integrators_p1} The number of controlled variables is not greater than the number of control inputs.
                 \item\label{enu:cond_integrators_p2} $\rank{(\Gamma)}=6N$. This is equivalent to require that the system under control has no invariant zeros.
                 \end{enumerate}
                 Condition (\ref{enu:cond_integrators_p1}) is verified since, in \eqref{eq:overallmodeltwoDGU}, $\subss u i$ and $\subss z i$ have the same size, $\forall i\in\DD$. Condition (\ref{enu:cond_integrators_p2}) can be easily proved using the definition of matrices $\mbf A$, $\mbf B$, $\mbf C$ and $\mbf H$ in \eqref{eq:stdform} and \eqref{eq:stdformOut} and the fact that electrical parameters are positive.
               \end{proof}
               \begin{figure}[!htb]
                 \centering
                 \tikzstyle{input} = [coordinate]
\tikzstyle{output} = [coordinate]
\tikzstyle{guide} = []
\tikzstyle{block} = [draw, rectangle, minimum height=1cm]

\begin{tikzpicture}
  \sbEntree{zref1}
  \sbDecaleNoeudy[3]{zref1}{zrefj}
  \sbDecaleNoeudy[3]{zrefj}{zrefN}
  \node [block, right of=zrefj,node distance=11cm,minimum height=4cm, minimum width=2cm] (microgrid) {\textbf{Microgrid}};
  \node [guide, right of=zrefj,xshift=-0.5cm] (zrefjline) {};
  \draw [draw] (zrefjline) -| node{$\vdots$} (zrefjline);
  
  \sbComph{sumret1}{zref1}
  \sbBloc{integrator1}{$\int dt$}{sumret1}
  \sbBloc[2.5]{controller1}{$K_1$}{integrator1}
  \sbRelier[$\subss{z_{ref}}{1}$]{zref1}{sumret1}
  \sbRelier{sumret1}{integrator1}
  \sbRelier[$\subss v 1$]{integrator1}{controller1}
  \node [guide, left of=microgrid,yshift=1.05cm,xshift=0.125cm] (u1end) {};
  \sbRelier[$\subss u 1$]{controller1}{u1end}
  
  \sbComp{sumretN}{zrefN}
  \sbBloc{integratorN}{$\int dt$}{sumretN}
  \sbBloc[4.5]{controllerN}{$K_N$}{integratorN}
  \sbRelier[$\subss{z_{ref}}{N}$]{zrefN}{sumretN}
  \sbRelier{sumretN}{integratorN}
  \sbRelier[$\subss v N$]{integratorN}{controllerN}
  \node [guide, left of=microgrid,yshift=-1.05cm,xshift=0.125cm] (uNend) {};
  \sbRelier[$\subss u N$]{controllerN}{uNend}
  
  \node [output, below of=microgrid,yshift=-1.8cm,xshift=0.5cm] (d1out) {};
  \node [output, below of=microgrid,yshift=-1.0cm,xshift=0.5cm] (d1outstart) {};
  \draw [draw,->,>=latex'] (d1out) -| node[yshift=-0.2cm]{$\subss d 1$} (d1outstart);
  
  \node [output, below of=microgrid,yshift=-1.5cm] (djout) {};
  \draw [draw] (djout) -| node {$\ldots$} (djout);
  
  \node [output, below of=microgrid,yshift=-1.8cm,xshift=-0.5cm] (dNout) {};
  \node [output, below of=microgrid,yshift=-1.0cm,xshift=-0.5cm] (dNoutstart) {};
  \draw [draw,->,>=latex'] (dNout) -| node[yshift=-0.2cm]{$\subss d N$} (dNoutstart);

  \node [output, above of=microgrid,yshift=1.8cm,xshift=0.5cm] (y1out) {};
  \node [output, above of=microgrid,yshift=1.0cm,xshift=0.5cm] (y1outstart) {};
  \node [output, above of=microgrid,yshift=1.5cm] (yjout) {};
  \node [output, above of=microgrid,yshift=1.2cm,xshift=-0.5cm] (yNout) {};
  \node [output, above of=microgrid,yshift=1.0cm,xshift=-0.5cm] (yNoutstart) {};
  \draw [draw] (y1outstart) -- node {} (y1out);
  \draw [draw,->,>=latex'] (y1out) -| node[yshift=0.2cm]{$\subss y 1$} (controller1);
  \draw [draw] (yjout) -| node {$\ldots$} (yjout);
  \draw [draw] (yNoutstart) -- node {} (yNout);
  \draw [draw,->,>=latex'] (yNout) -| node[yshift=0.2cm]{$\subss y N$} (controllerN);
  
  \node [guide, right of=microgrid,yshift=1.05cm,xshift=-.125cm] (z1) {};
  \node [guide, right of=microgrid,yshift=1.165cm,xshift=0.5cm] (z1near) {};
  \node [guide, right of=microgrid,yshift=1.05cm,xshift=1cm] (z1end) {};
  \draw [draw,->,>=latex',near end,swap] (z1) -- node[xshift=0.6cm] {$\subss{z} 1$} (z1end);
  \sbRenvoi[-6.8]{z1near}{sumret1}{}

  \node [guide, right of=microgrid,yshift=-1.05cm,xshift=-.125cm] (zN) {};
  \node [guide, right of=microgrid,yshift=-0.95cm,xshift=0.5cm] (zNnear) {};
  \node [guide, right of=microgrid,yshift=-1.05cm,xshift=1cm] (zNend) {};
  \draw [draw,->,>=latex',near end,swap] (zN) -- node[xshift=0.6cm] {$\subss{z} N$} (zNend);
  \sbRenvoi[6.8]{zNnear}{sumretN}{}
  
  \node [guide, right of=microgrid,xshift=0.5cm] (zj) {};
  \draw [draw] (zj) -| node {$\vdots$} (zj);
  
  \node [guide, left of=microgrid,xshift=-0.5cm] (uj) {};
  \draw [draw] (uj) -| node {$\vdots$} (uj);
  
\end{tikzpicture}
                 \caption{Control scheme with integrators for overall microgrid model.}
                 \label{fig:schemaint}
               \end{figure}
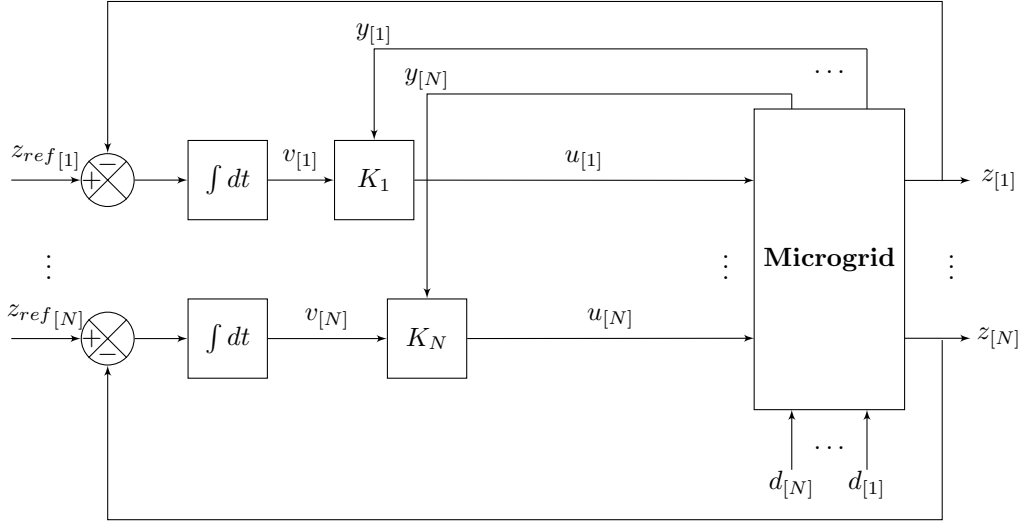
               The dynamics of the integrators is (see Figure \ref{fig:schemaint})
               \begin{equation}
                 \subss{\dot{v}}{i}(t) = \subss{e}{i}(t) = \subss{z_{ref}}{i}(t)-\subss{z}{i}(t) = \subss{z_{ref}}{i}(t)-H_{i}C_{i}\subss{x}{i}(t),
               \end{equation}
               and hence, the DGU model augmented with integrators is
               \begin{equation}
                 \label{eq:modelDGUgen-aug}
                 \subss{\hat{\Sigma}}{i}^{DGU} :
                 \left\lbrace
                   \begin{aligned}
                     \subss{\dot{\hat{x}}}{i}(t) &= \hat{A}_{ii}\subss{\hat{x}}{i}(t) + \hat{B}_{i}\subss{u}{i}(t)+\hat{M}_{i}\subss{\hat{d}}{i}(t)+ \sum_{j\in\NN_i}\hat{A}_{ij}\subss{\hat{x}}{j}(t)\\
                     \subss{\hat{y}}{i}(t)       &= \hat{C}_{i}\subss{\hat{x}}{i}(t)\\
                     \subss{z}{i}(t)       &= \hat{H}_{i}\subss{\hat{y}}{i}(t)
                   \end{aligned}
                 \right.
               \end{equation}
               where $\subss{\hat{x}}{i}=[\subss{x^T} i,v_{i,d},v_{i,q}]^T\in\Rset^6$ is the state, $\subss{\hat{y}}{i}=\subss{\hat{x}}{i}\in\Rset^6$ is the measurable output and $\subss{\hat{d}}{i}=[\subss{d}{i},\subss{z_{ref}}{i}]^T\in\Rset^4$ collects the exogenous signals (the current of the load and the reference signals, in $dq$ coordinates). Moreover, matrices $\hat{A}_{ii}, \hat{A}_{ij}, \hat{B}_{i}, \hat{C}_{i}, \hat{M}_{i}$ and $\hat{H}_{i}$  are defined as
               \begin{equation}
                 \label{eq:augith}
                 \hat{A}_{ii}=\begin{bmatrix}
                   A_{ii} & 0\\
                   -H_{i}C_{i} & 0
                 \end{bmatrix}
                 \hat{A}_{ij}=\begin{bmatrix}
                   A_{ij} &0\\
                   0&0
                 \end{bmatrix}
                 \hat{B}_{i}=\begin{bmatrix}
                   B_{i}\\
                   0
                 \end{bmatrix}
                 \hat{C}_{i}=\begin{bmatrix}
                   C_{i} & 0\\
                   0 & I
                 \end{bmatrix}
                 \hat{M}_{i}=\begin{bmatrix}
                   M_{i} & 0 \\
                   0 & I_2
                 \end{bmatrix}
                 \hat{H}_{i}=\begin{bmatrix}
                   H_{i} & 0
                 \end{bmatrix}.
               \end{equation}
               The following proposition guarantees that the pair $(\hat{A}_{ii},\hat{B}_{i})$ is controllable, hence system \eqref{eq:modelDGUgen-aug} can be stabilized.
               \begin{prop}
                 \label{prop:local_controllability}
                 The pair $(\hat{A}_{ii},\hat{B}_i)$ is controllable.
               \end{prop}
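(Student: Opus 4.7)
The plan is to invoke the Popov-Belevitch-Hautus (PBH) rank test and exploit the block structure of the augmented matrices in \eqref{eq:augith}. Writing
\begin{equation*}
[\lambda I_6 - \hat{A}_{ii},\ \hat{B}_i] = \begin{bmatrix} \lambda I_4 - A_{ii} & 0 & B_i \\ H_i C_i & \lambda I_2 & 0 \end{bmatrix},
\end{equation*}
we must establish that this matrix has full row rank $6$ for every $\lambda\in\Cset$. The natural split is the case $\lambda\neq 0$ (where the integrator block is already nonsingular) versus $\lambda=0$ (the only eigenvalue introduced by the integrators).

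For $\lambda\neq 0$, elementary column operations that use the nonsingular block $\lambda I_2$ to zero out $H_iC_i$ show that the rank equals $2+\rank[\lambda I_4 - A_{ii},\,B_i]$. Hence I only need to verify that $(A_{ii},B_i)$ itself has no uncontrollable mode away from the origin. Using the explicit structure of $B_i$ (which, from \eqref{eq:newDGU} and Appendix \ref{sec:AppMasterMaster}, injects $V_{ti,d}$ and $V_{ti,q}$ only into the $I_{ti,d}$- and $I_{ti,q}$-equations through $1/L_{ti}$) and observing that $A_{ii}$ couples the currents $I_{ti,dq}$ back to the voltages $V_{i,dq}$ through $k_i/C_{ti}$, the $4\times 4$ matrix $[B_i,\,A_{ii}B_i]$ becomes block-triangular with diagonal blocks proportional to $I_2$, with proportionality constants involving only positive electrical parameters. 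This already yields rank $4$, hence controllability of $(A_{ii},B_i)$ globally. Note that the QSL self-coupling modifies only columns $1$ and $2$ of $A_{ii}$ (the voltage columns) and therefore cannot spoil this argument.

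For $\lambda = 0$, the rank condition becomes
\begin{equation*}
\rank\begin{bmatrix} A_{ii} & B_i \\ H_i C_i & 0 \end{bmatrix} = 6,
\end{equation*}
which is the local analogue of condition (\ref{enu:cond_integrators_p2}) of Proposition \ref{prop:target}. Since $C_i=I_4$ and $H_i=[I_2,\ 0_{2\times 2}]$, the bottom rows $[I_2,\ 0,\ 0]$ allow me to eliminate the first two columns of $A_{ii}$ by row/column reduction. The remaining rank-$4$ condition reduces to nonsingularity of $[(A_{ii})_{:,3:4},\,B_i]$, which from the explicit entries in Appendix \ref{sec:AppMasterMaster} is block upper-triangular with diagonal blocks $(k_i/C_{ti})I_2$ and $(1/L_{ti})I_2$; its determinant is $(k_i/(C_{ti}L_{ti}))^2\neq 0$ by positivity of the electrical parameters. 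Here too, columns $3$--$4$ of $A_{ii}$ are untouched by the QSL self-coupling, so the argument is clean.

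The main potential obstacle is bookkeeping: the QSL approximation modifies the voltage columns of $A_{ii}$ and, in the $N$-DGU setting of Appendix \ref{sec:AppNDGunit}, also aggregates contributions from all neighbours into the diagonal block. I expect the argument to go through unchanged because both rank checks depend only on the current columns of $A_{ii}$ (via the $k_i/C_{ti}$ coupling) and on $B_i$, neither of which is altered by the QSL substitution or by the presence of neighbours. Thus both PBH conditions hold and controllability of $(\hat{A}_{ii},\hat{B}_i)$ follows.
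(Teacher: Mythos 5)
Your proof is correct, but it reaches the conclusion by a different route than the paper. The paper works with the Kalman controllability matrix and factorizes it as $\hat{M}^C_i=\hat{M}^C_{i,1}\hat{M}^C_{i,2}$, where $\hat{M}^C_{i,1}=\left[\begin{smallmatrix}A_{ii} & B_i\\ -H_iC_i & 0\end{smallmatrix}\right]$ is square and $\hat{M}^C_{i,2}$ contains the controllability matrix of $(A_{ii},B_i)$; full rank of both factors is then asserted from positivity of the electrical parameters. You instead run the PBH test and split on $\lambda\neq 0$ versus $\lambda=0$. The two arguments are equivalent at bottom: your $\lambda=0$ condition is exactly nonsingularity of the paper's $\hat{M}^C_{i,1}$ (the ``no invariant zero at the origin'' condition already invoked in Proposition \ref{prop:target}), and your $\lambda\neq 0$ reduction to $\rank[\lambda I_4-A_{ii},\ B_i]=4$ is exactly what full rank of $\hat{M}^C_{i,2}$ encodes via Cayley--Hamilton. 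What your version buys is explicitness: you actually exhibit why positivity of the parameters forces the two rank conditions, by computing $[B_i,\ A_{ii}B_i]$ and the $4\times 4$ block $[(A_{ii})_{:,3:4},\ B_i]$ and noting that both are block-(anti)triangular with nonsingular $2\times 2$ diagonal blocks $\frac{k_i}{C_{ti}}I_2$ and $\frac{1}{L_{ti}}I_2$ — a step the paper compresses into one sentence. Your closing observation that the QSL self-coupling terms and the neighbour sums in \eqref{eq:Aii} only perturb the voltage columns of $A_{ii}$, which enter neither rank check, is a genuinely useful addition, since it makes clear that the proposition holds uniformly over the network topology. Two trivial quibbles: the matrix $[(A_{ii})_{:,3:4},\ B_i]$ is block \emph{lower} triangular in the natural row ordering, and in the $\lambda\neq 0$ branch you in fact prove controllability of $(A_{ii},B_i)$ for all $\lambda$, so the restriction ``away from the origin'' is unnecessary (though harmless).
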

               \begin{proof}
                 Using the definition of controllability matrix, we have that
                 \begin{equation}
                   \label{eq:ctrb}
                   \begin{aligned}
                     \hat{M}^C_i &= \begin{bmatrix}
                       \hat B_{i} & \hat{A}_{ii}\hat{B}_i & \hat{A}_{ii}^2\hat{B}_i & \hat{A}_{ii}^3\hat{B}_i & \hat{A}_{ii}^4\hat{B}_i & \hat{A}_{ii}^5\hat{B}_i
                     \end{bmatrix}\\
                     &= \begin{bmatrix}
                       B_{i} & A_{ii}B_i & A_{ii}^2B_i & A_{ii}^3B_i & A_{ii}^4B_i & A_{ii}^5B_i \\
                       0 & -H_iC_iB_i & -H_iC_iA_{ii}B_i & -H_iC_iA_{ii}^2B_i & -H_iC_iA_{ii}^3B_i & -H_iC_iA_{ii}^4B_i
                     \end{bmatrix}\\
                     &= \underbrace{\begin{bmatrix}
                         A_{ii} & B_i \\ -H_iC_i & 0 
                       \end{bmatrix}}_{\hat{M}^C_{i,1}}\underbrace{\begin{bmatrix}
                         0 & B_i & A_{ii}B_i & A_{ii}^2B_i & A_{ii}^3B_i & A_{ii}^4B_i \\ I & 0 & 0 & 0 & 0 & 0 
                       \end{bmatrix}}_{\hat{M}^C_{i,2}}.
                   \end{aligned}
                 \end{equation}
                 Matrices $\hat{M}^C_{i,1}$ and $\hat{M}^C_{i,2}$ have always full rank, since all electrical parameters are positive, hence $\rank(\hat{M}^C_i)=6$. Therefore the pair $(\hat{A}_{ii},\hat{B}_i)$ is controllable.
               \end{proof}
               The overall augmented system is obtained from \eqref{eq:modelDGUgen-aug} as
               \begin{equation}
                 \label{eq:sysaugoverall}
                 \left\lbrace
                   \begin{aligned}
                     \mbf{\dot{\hat{x}}}(t) &= \mbf{\hat{A}\hat{x}}(t) + \mbf{\hat{B}u}(t)+ \mbf{\hat{M}\hat{d}}(t)\\
                     \mbf{\hat{y}}(t)       &= \mbf{\hat{C}\hat{x}}(t)\\
                     \mbf{z}(t)       &= \mbf{\hat{H}\hat{y}}(t)
                   \end{aligned}
                 \right.
               \end{equation}
               where $\mbf{\hat{x}}$, $\mbf{\hat{y}}$ and $\mbf{\hat{d}}$ collect variables $\subss{\hat{x}}{i}$, $\subss{\hat{y}}{i}$ and $\subss{\hat{d}}{i}$ respectively, and matrices $\mbf{\hat{A}}, \mbf{\hat{B}}, \mbf{\hat{C}}, \mbf{\hat{M}}, \mbf{\hat{H}}$ collect matrices in \eqref{eq:augith}.

	  \subsection{Decentralized PnP control based on neutral interactions}
               In this section, we present a decentralized control approach that ensures asymptotic stability for the augmented system \eqref{eq:sysaugoverall}. Furthermore, local controllers are synthesized in a decentralized fashion allowing PnP operations. As shown in Appendix \ref{sec:AppUnstable}, decentralized design of local controllers can fail to guarantee voltage and frequency stability of the whole ImG, if coupling among DGUs is neglected.\\
               We equip each DGU $\subss{\hat{\Sigma}}{i}^{DGU}$ with the following state-feedback controller
               \begin{equation}
                 \label{eq:ctrldec}
                 \subss{\CC}{i}:\qquad \subss{u}{i}(t)=K_{i}\subss{\hat{y}}{i}(t)=K_{i}\subss{\hat{x}}{i}(t)
               \end{equation}
               where $K_{i}\in\Rset^{2\times6}$. Note that controllers $\subss{\CC}{i}$, $i\in\DD$ are decentralized since the computation of $\subss{u}{i}(t)$ requires the state of $\subss{\hat{\Sigma}}{i}^{DGU}$ only. Let nominal subsystems be given by \eqref{eq:modelDGUgen-aug} without coupling terms. We design local controllers $\subss{\CC}{i}$ such that the nominal closed-loop subsystem  
               \begin{equation}
                 \label{eq:modelDGUgen-aug-closed}
                 \left\lbrace
                   \begin{aligned}
                     \subss{\dot{\hat{x}}}{i}(t) &= (\hat{A}_{ii}+ \hat{B}_{i}K_{i})\subss{\hat{x}}{i}(t)+\hat{M}_{i}\subss{\hd}{i}(t)\\
                     \subss{\hat{y}}{i}(t)       &= \hat{C}_{i}\subss{\hat{x}}{i}(t)\\
                     \subss{z}{i}(t)       &= \hat{H}_{i}\subss{\hat{y}}{i}(t)
                   \end{aligned}
                 \right.
               \end{equation}
               is asymptotically stable. From Lyapunov theory, we can achieve this aim if there exists a symmetric matrix $P_{i}\in\Rset^{6\times6}, P_{i}\ge 0$ such that 
               \begin{equation}
                 \label{eq:Lyapeqnith}
                 (\hat{A}_{ii}+\hat{B}_{i}K_{i})^T P_{i}+P_{i}(\hat{A}_{ii}+\hat{B}_{i}K_{i})<0.
               \end{equation}
               Using \eqref{eq:sysaugoverall} and \eqref{eq:ctrldec}, the closed-loop system (accounting for coupling terms) is
               \begin{equation}
                 \label{eq:sysaugoverallclosed}
                 \left\lbrace
                   \begin{aligned}
                     \mbf{\dot{\hat{x}}}(t) &= (\mbf{\hat{A}+\hat{B}K})\mbf{\hat{x}}(t)+ \mbf{\hat{M}\hd}(t)\\
                     \mbf{\hat{y}}(t)       &= \mbf{\hat{C}\hat{x}}(t)\\
                     \mbf{z}(t)       &= \mbf{\hat{H}\hat{y}}(t)
                   \end{aligned}
                 \right.
               \end{equation}
               where $\mbf{K}=\diag(K_{1},\dots,K_{N})$. In the sequel, we will refer to \eqref{eq:sysaugoverallclosed} as the closed-loop QSL microgrid. System \eqref{eq:sysaugoverallclosed} is asymptotically stable if the matrix $\mbf{P}=\diag(P_{1},\dots,P_{N})$ satisfies 
               \begin{equation}
                 \label{eq:Lyapeqnoverall}
                 (\mbf{\hat{A}}+\mbf{\hat{B}K})^T \mbf{P}+\mbf{P}(\mbf{\hat{A}}+\mbf{\hat{B}K})<0.
               \end{equation}
               Note that \eqref{eq:Lyapeqnith} does not imply \eqref{eq:Lyapeqnoverall}, i.e. coupling terms might spoil stability of \eqref{eq:sysaugoverallclosed}. In order to derive conditions such that \eqref{eq:Lyapeqnith} guarantees \eqref{eq:Lyapeqnoverall} we will exploit the concept of neutral interactions between subsystems (see Chapter 7 in \cite{Lunze1992}). To this purpose let us define $\mbf{\hat{A}_{D}}=\diag(\hat{A}_{ii},\dots,\hat{A}_{NN})$ and $\mbf{\hat{A}_{C}=\hat{A}-\hat{A}_{D}}$.
               \begin{defn}
                 \label{def:neutrality}
                 DGU interactions are \emph{neutral} if the matrix $\mbf{\hat{A}_{C}}$ can be factorized as
                 \begin{equation}
                   \mbf{\hat{A}_{C}=SP}
                 \end{equation}
                 where $\mbf{S}$ is a skew-symmetric matrix (i.e. $\mbf{S=-S^T}$).
               \end{defn}
               In order to ensure asymptotic stability of \eqref{eq:sysaugoverallclosed}, we will exploit the following assumptions.
               \begin{assum} 
                 \label{ass:ctrl}
                 \begin{enumerate}[(i)]
                 \item\label{assum:line} The shunt capacitances at all PCCs are identical, i.e. $C_{ti}=C_{s}$, $\forall i\in\DD$.
                 \item\label{assum:pstruct} Decentralized controllers $\subss{\CC}{i}$, $i\in\DD$ are designed such that \eqref{eq:Lyapeqnith} holds with 
                   \begin{equation}
                     \label{eq:pstruct}
                     P_{i}=\left( \begin{array}{cc|cccc}
                         \eta & 0 & 0 & 0& 0 & 0\\
                         0 & \eta & 0 & 0& 0 & 0\\ \hline
                         0 & 0 & \bullet & \bullet& \bullet & \bullet\\
                         0 & 0 & \bullet & \bullet& \bullet & \bullet\\
                         0 & 0 & \bullet & \bullet& \bullet & \bullet\\
                         0 & 0 & \bullet & \bullet& \bullet & \bullet
                       \end{array}\right).
                   \end{equation}
                   where $\bullet$ denotes arbitrary entries and $\eta>0$ is a parameter common to all matrices $P_i,~i\in\DD$.
                 \item\label{assum:etasmallest} It holds $\frac{\eta R_{ij}}{C_{s}Z_{ij}^2}\approx 0$, $\forall i\in\DD$, $\forall j\in\NN_i$, where $Z_{ij}=\abs{R_{ij}+j\omega_0L_{ij}}$.
                 \end{enumerate}
               \end{assum}
               Assumption \ref{ass:ctrl}-(\ref{assum:line}) provides a reference ImG model for which closed-loop asymptotic stability will be shown in Proposition \ref{prop:ctrldec} below. However, integrators in the control loop guarantee robustness of stability \cite{Skogestad1996} with respect to small deviations of capacitances $C_{ti}$ from the common value $C_s$. This feature will be shown through simulations in Section \ref{sec:differentC}. We also highlight that in electrical networks sometimes there are blocks of capacitors positioned at various PCCs that can be switched in steps so as to tune their total capacitance. In this case, Assumption \ref{ass:ctrl}-(\ref{assum:line}) can be directly fulfilled. As for Assumption \ref{ass:ctrl}-(\ref{assum:pstruct}) we show later that checking the existence of $P_i$ as in \eqref{eq:pstruct} and $K_i$ fulfilling \eqref{eq:Lyapeqnith} amounts to solving a convex optimization problem. Here, we just highlight that $\eta>0$ and $C_s>0$ are the only global parameters that must be known for designing local controllers. 
               \begin{rmk}
                 Assumption \ref{ass:ctrl}-(\ref{assum:etasmallest}) can be fulfilled in different ways. When an upper bound to all ratios $\frac{R_{ij}}{Z_{ij}^2}$ (which depend upon line parameters only) is known, it is enough to set the control design parameter $\eta$ sufficiently small. If, however, lines are mainly inductive, one has $\frac{R_{ij}}{Z_{ij}^2}\approx 0$ by construction and bigger values of $\eta$ can be used for synthesizing local controllers.
               \end{rmk}
               \begin{prop}
                 \label{prop:ctrldec}
                 Let Assumption \ref{ass:ctrl} holds. Then, DGU interactions are neutral and the overall closed-loop system \eqref{eq:sysaugoverallclosed} is asymptotically stable.
               \end{prop}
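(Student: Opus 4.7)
The plan is to combine the per-DGU Lyapunov inequality \eqref{eq:Lyapeqnith} with the cancellation of coupling terms provided by neutral interactions. Decomposing $\mbf{\hat{A}}=\mbf{\hat{A}_{D}}+\mbf{\hat{A}_{C}}$ and using that $\mbf{\hat{B}}$, $\mbf{K}$, $\mbf{P}$ are all block diagonal, the left-hand side of \eqref{eq:Lyapeqnoverall} splits as
\[
\diag\!\left((\hat{A}_{ii}+\hat{B}_{i}K_{i})^T P_{i} + P_{i}(\hat{A}_{ii}+\hat{B}_{i}K_{i})\right) + \left(\mbf{P}\mbf{\hat{A}_{C}}+\mbf{\hat{A}_{C}}^T\mbf{P}\right),
\]
whose first summand is negative definite by Assumption~\ref{ass:ctrl}-(\ref{assum:pstruct}). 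If interactions are neutral, $\mbf{\hat{A}_{C}}=\mbf{S}\mbf{P}$ with $\mbf{S}^T=-\mbf{S}$ makes the second summand $\mbf{P}\mbf{S}\mbf{P}+\mbf{P}\mbf{S}^T\mbf{P}=0$, so the whole problem reduces to verifying neutrality.

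Second, I would exploit the explicit form of the coupling. Substituting the QSL expressions \eqref{eq:staticline} into \eqref{eq:newDGU} and fixing $C_{ti}=C_{s}$ by Assumption~\ref{ass:ctrl}-(\ref{assum:line}), each matrix $\hat{A}_{ij}$ is zero except for an upper-left $2\times 2$ block
\[
G_{ij}=\frac{1}{C_{s}Z_{ij}^2}\begin{bmatrix} R_{ij} & \omega_0 L_{ij} \\ -\omega_0 L_{ij} & R_{ij}\end{bmatrix}.
\]
Combined with the sparsity pattern \eqref{eq:pstruct} of $P_{i}$ (a scalar $\eta$ on the upper $2\times 2$, no coupling to the remaining $4\times 4$ block), a direct block multiplication shows that each off-diagonal block of $\mbf{P}\mbf{\hat{A}_{C}}+\mbf{\hat{A}_{C}}^T\mbf{P}$ has a single non-zero $2\times 2$ entry, equal to $\eta(G_{ij}+G_{ji}^T)$. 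Assumption~\ref{ass:lines} ($R_{ij}=R_{ji}$, $L_{ij}=L_{ji}$) gives $G_{ij}=G_{ji}$, so $G_{ij}+G_{ji}^T=\frac{2R_{ij}}{C_{s}Z_{ij}^2}I_2$, which is exactly the quantity rendered negligible by Assumption~\ref{ass:ctrl}-(\ref{assum:etasmallest}) once multiplied by $\eta$. Hence the cross term vanishes, equivalently $\mbf{S}:=\mbf{\hat{A}_{C}}\mbf{P}^{-1}$ is skew-symmetric and neutrality holds in the sense of Definition~\ref{def:neutrality}.

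Putting the two pieces together yields \eqref{eq:Lyapeqnoverall}, and since each $P_{i}$ is positive definite (strict Lyapunov inequality with $\eta>0$ on the voltage block by construction), the standard quadratic Lyapunov theorem applied to \eqref{eq:sysaugoverallclosed} gives asymptotic stability. The delicate step is the vanishing of the coupling term: it is what dictates the sparsity \eqref{eq:pstruct} for $P_{i}$, because only a common $\eta$-isotropic voltage block, decoupled from the current/integrator states, reduces the off-diagonal pairs in $\mbf{P}\mbf{\hat{A}_{C}}+\mbf{\hat{A}_{C}}^T\mbf{P}$ to the scalar residual that Assumption~\ref{ass:ctrl}-(\ref{assum:etasmallest}) controls.
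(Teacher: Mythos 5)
Your proof is correct and follows essentially the same route as the paper: split the Lyapunov inequality \eqref{eq:Lyapeqnoverall} into the block-diagonal part (negative definite by \eqref{eq:Lyapeqnith}) and the coupling part, then show the latter vanishes because the structure \eqref{eq:pstruct} confines $P_i\hat{A}_{ij}$ to an upper-left $2\times 2$ block whose symmetric (resistive) residual $\frac{\eta R_{ij}}{C_sZ_{ij}^2}$ is killed by Assumption \ref{ass:ctrl}-(\ref{assum:etasmallest}), leaving a skew-symmetric remainder, i.e.\ neutrality. The only difference is presentational: you verify the cancellation of $\mbf{P\hat{A}_{C}}+\mbf{\hat{A}_{C}}^T\mbf{P}$ blockwise and deduce the factorization $\mbf{\hat{A}_{C}=SP}$ afterwards, whereas the paper establishes the factorization first and then cancels the cross term.
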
               
               \begin{proof}
                 We have to prove that \eqref{eq:Lyapeqnoverall} holds, i.e.
                 \begin{equation}
                   \label{eq:Lyap2part}
                   \underbrace{\mbf{(\hat{A}_{D}+\hat{B}K)}^T \mbf{P+P(\hat{A}_{D}+\hat{B}K)}}_{(a)}+\underbrace{\mbf{\hat{A}_{C}}^T \mbf{P+P\hat{A}_{C}}}_{(b)}<0.
                 \end{equation}
                 Note that term $(a)$ is a block diagonal matrix that collects on the diagonal all left hand sides of \eqref{eq:Lyapeqnith}. Hence term $(a)$ is a negative definite matrix. Next we prove that term $(b)$ is zero. Considering the terms $P_j\hat{A}_{ij}$ and using Assumption \ref{ass:ctrl}-(\ref{assum:etasmallest}), we obtain
                 \begin{equation}
                   \label{eq:couplingslyap}
                   P_i\hat{A}_{ij}=P_j\hat{A}_{ji}=
                   \renewcommand\arraystretch{2}
                   \begin{pmatrix}
                     \frac{\eta R_{ij}}{C_sZ_{ij}^2} & \frac{\eta X_{ij}}{C_sZ_{ij}^2}   & 0 & 0& 0 & 0 \\
                     -\frac{\eta X_{ij}}{C_sZ_{ij}^2} & \frac{\eta R_{ij}}{C_sZ_{ij}^2} & 0 & 0& 0 & 0 \\
                     0 & 0 & 0 & 0& 0 & 0\\
                     0 & 0 & 0 & 0& 0 & 0\\
                     0 & 0 & 0 & 0& 0 & 0\\
                     0 & 0 & 0 & 0& 0 & 0
                   \end{pmatrix}\approx \begin{pmatrix}
                     0 & \frac{\eta X_{ij}}{C_sZ_{ij}^2}   & 0 & 0& 0 & 0 \\
                     -\frac{\eta X_{ij}}{C_sZ_{ij}^2} & 0 & 0 & 0& 0 & 0 \\
                     0 & 0 & 0 & 0& 0 & 0\\
                     0 & 0 & 0 & 0& 0 & 0\\
                     0 & 0 & 0 & 0& 0 & 0\\
                     0 & 0 & 0 & 0& 0 & 0
                   \end{pmatrix},
                 \end{equation}
                 where $X_{ij}=\omega_0L_{ij}$. Hence term $(b)$ in \eqref{eq:Lyap2part} can be approximated using blocks given in \eqref{eq:couplingslyap}. In the following, with a little abuse of notation, we consider coupling terms $\hat{A}_{ij}$ with zero elements on the diagonal. In this case, there always exists a skew-symmetric matrix $S_{ij}$ such that $\hat{A}_{ij}=S_{ij}P_{j}$, e.g. $S_{ij}=\frac{1}{\eta}\hat{A}_{ij}$. Hence matrix $\mbf{S}$ composed of blocks $S_{ij}$ is skew-symmetric and such that $\mbf{\hat{A}_{C}=SP}$. Since $\mbf{\hat{A}_{C}}^T =\mbf{-PS}$, for term $(b)$ we have
                 \begin{equation*}
                   \begin{aligned}
                     (b)&= \mbf{\hat{A}_{C}}^T \mbf{P+P\hat{A}_{C}}\\
                     &= \mbf{PS}^T \mbf{P+PSP}\\
                     &= \mbf{-PSP+PSP}=0
                   \end{aligned}
                 \end{equation*}
                 We have then shown that inequality \eqref{eq:Lyap2part} holds.
               \end{proof}

               The main problem that still has to be solved for designing local controller $\subss{\CC}{i}$ is the following one.
               \begin{prbl}
                 \label{prbl:designPrbl}
                 Compute a matrix $K_{i}$ such that system \eqref{eq:modelDGUgen-aug-closed} is asymptotically stable and Assumption \ref{ass:ctrl}-(\ref{assum:pstruct}) is verified, i.e. \eqref{eq:Lyapeqnith} holds for a matrix $P_i$ structured as in \eqref{eq:pstruct}.
               \end{prbl}
               We solve Problem \ref{prbl:designPrbl} computing a matrix $K_{i}$, verifying  
               \begin{equation}
                 \label{eq:Lyapdecr}
                 (\hat{A}_{ii}+\hat{B}_{i}K_{i})^{T}P_{i}+P_{i}(\hat{A}_{ii}+\hat{B}_{i}K_{i})+\gamma_{i}^{-1}I\le 0
               \end{equation} 
               where $P_{i}$ is defined in \eqref{eq:pstruct} and $\gamma_{i}>0$ enforces a certain degree of robust stability for the origin of the closed-loop subsystem \eqref{eq:modelDGUgen-aug}, see \cite{Boyd1994}. Using the Schur complement, we can rewrite \eqref{eq:Lyapdecr} as 
               \begin{equation}
                 \label{eq:LyapSchur}
                 \begin{bmatrix}
                   (\hat{A}_{ii}+\hat{B}_{i}K_{i})^{T}P_{i}+P_{i}(\hat{A}_{ii}+\hat{B}_{i}K_{i}) & I\\
                   I & -\gamma_{i}I
                 \end{bmatrix}\le 0
               \end{equation}
               This inequality is nonlinear in $P_{i}$ and $K_{i}$. As in \cite{Boyd1994}, we introduce new matrices
               \begin{equation}
                 \label{eq:eqschur}
                 \begin{aligned}
                   Y_{i}&=P_{i}^{-1}\\
                   G_{i}&=K_{i}Y_{i}.
                 \end{aligned}
               \end{equation}
               Note that $Y_{i}$ has the same structure of $P_{i}$. By pre- and post-multiplying \eqref{eq:LyapSchur} with $\left[\begin{smallmatrix}
                   Y_{i} & 0\\
                   0 & I 
                 \end{smallmatrix}\right]$ and using \eqref{eq:eqschur} we obtain
               \begin{equation}
                 \begin{bmatrix}
                   Y_{i}\hat{A}_{ii}^{T}+G_{i}^{T}\hat{B}_{i}^{T}+\hat{A}_{ii}Y_{i}+\hat{B}_{i}G_{i} & Y_{i} \\
                   Y_{i} & -\gamma_{i}I
                 \end{bmatrix}\le 0
               \end{equation}
               Consider the following optimization problem
               \begin{subequations}
                 \label{eq:optproblem}
                 \begin{align}
                  \mathcal{O}: \min_{\substack{Y_{i},G_{i},\gamma_{i},\beta_{i},\delta_{i}}}\quad &\alpha_{i1}\gamma_{i}+\alpha_{i2}\beta_{i}+\alpha_{i3}\delta_{i}\nonumber \\
                   \label{eq:Ystruct}&Y_{i}=\left[ \begin{smallmatrix}
                       \eta^{-1} & 0 & 0 & 0& 0 & 0\\
                       0 & \eta^{-1} & 0 & 0& 0 & 0\\ 
                       0 & 0 & \bullet & \bullet& \bullet & \bullet\\
                       0 & 0 & \bullet & \bullet& \bullet & \bullet\\
                       0 & 0 & \bullet & \bullet& \bullet & \bullet\\
                       0 & 0 & \bullet & \bullet& \bullet & \bullet
                     \end{smallmatrix}\right]>0\\
                   \label{eq:LMIstab}&\begin{bmatrix}
                     Y_{i}\hat{A}_{ii}^{T}+G_{i}^{T}\hat{B}_{i}^{T}+\hat{A}_{ii}Y_{i}+\hat{B}_{i}G_{i} & Y_{i} \\
                     Y_{i} & -\gamma_{i}I
                   \end{bmatrix}\le 0\\
                   \label{eq:Gcostr}&\begin{bmatrix}
                     -\beta_{i}I & G_{i}^T\\
                     G_{i} & -I
                   \end{bmatrix}<0\\
                   \label{eq:Ycostr}&\begin{bmatrix}
                     Y_{i} & I\\
                     I & \delta_{i}I
                   \end{bmatrix}>0\\
                   &\gamma_{i}>0,\quad\beta_{i}>0\quad\delta_{i}>0
                 \end{align}
               \end{subequations} 
               where $\alpha_{i1}$, $\alpha_{i2}$ and $\alpha_{i3}$ represent positive weights and $\bullet$ are arbitrary entries.

               All constraints in \eqref{eq:optproblem} are Linear Matrix Inequalities (LMI) and therefore the optimization problem is convex and it can be efficiently solved with state-of-art LMI solvers \cite{Boyd1994}. Note that constraint \eqref{eq:Ystruct} guarantees that $P_i$ has the structure prescribed by Assumption \ref{ass:ctrl}-(\ref{assum:pstruct}). Moreover, stability of the nominal closed-loop subsystem \eqref{eq:modelDGUgen-aug-closed} is guaranteed by \eqref{eq:LMIstab}. In order to prevent $\norme{K_{i}}{2}$ from becoming too large we add the bounds $\norme{G_{i}}{2}<\sqrt{\beta_{i}}$ and $\norme{Y_{i}^{-1}}{2}<\delta_{i}$ that, via Schur complement, correspond to constraints \eqref{eq:Gcostr} and \eqref{eq:Ycostr}. These bound imply $\norme{K_{i}}{2}<\sqrt{\beta_i}\delta_{i}$ and then affect the magnitude of control variables. 

               We highlight that the constraints in \eqref{eq:optproblem} depend upon local fixed matrices ($\hat{A}_{ii}$, $\hat{B}_{i}$) and local design parameters ($\alpha_{i1}$, $\alpha_{i2}$, $\alpha_{i3}$). Therefore, once global parameters $\eta$ and $C_{s}$ are fixed, the computation of controller $\subss{\CC}{i}$ does not influence the computation of controllers $\subss{\CC}{j}, j\neq i$. If problem $\PP_i$ is feasible, then we obtain controller $\subss\CC i$ through $K_{i}=G_{i}Y_{i}^{-1}$. Moreover since all assumptions in Proposition \ref{prop:ctrldec} are verified, the overall closed-loop system \eqref{eq:sysaugoverallclosed} is asymptotically stable.

          \subsection{Enhancements of local controllers for improving performances}
               In the previous section we have shown how to design decentralized controllers $\subss{\CC}{i}$ guaranteeing asymptotic stability for the overall closed-loop system \eqref{eq:sysaugoverallclosed}. For improving performance in transient, we enhance controllers $\subss{\CC}{i}$ with feed-forward terms to
               \begin{enumerate}[(i)]
               \item pre-filter reference signals;
               \item compensate measurable disturbances.
               \end{enumerate}

               \subsubsection{Pre-filtering of reference signal}
                    \label{sec:pre-filter}
                    Pre-filtering is used to widen the bandwidth so as to speed up the response of the system. For each nominal closed-loop subsystem \eqref{eq:modelDGUgen-aug-closed}, the transfer function $\subss{F}{i}(s)$, from $\subss{z_{ref}}{i}(t)$ to the controlled variable  $\subss{z}{i}(t)$ is
                    \begin{equation}
                      \label{eq:F(s)}
                      \subss{F}{i}(s)=(\hat{H}_i \hat{C}_i) (sI-(\hat{A}_{ii}+\hat{B}_{i}K_i ))^{-1}\begin{bmatrix}
                        0 \\
                        I_2
                      \end{bmatrix} 
                    \end{equation}
                    Using a feedforward compensator $\subss{\tilde{C}}{i}(s)$, we filter the reference signal $\subss{z_{ref}}{i}(t)$ as shown in Figure \ref{fig:prefilter}. 
                    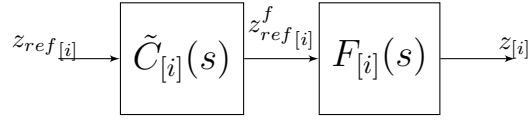
\begin{figure}[!htb]
                      \centering
                      \begin{tikzpicture}
  \begin{Large}
  \sbEntree{zrefi}
  \sbBloc{compensatori}{$\subss{\tilde{C}}{i}(s)$}{zrefi}
  \sbBloc{Fi}{$\subss{F}{i}(s)$}{compensatori}
  \sbSortie{zi}{Fi}
  \sbRelier{zrefi}{compensatori}
  \sbRelier{Fi}{zi}
  \end{Large}
  \sbNomLien{zrefi}{$\subss{z_{ref}} i$}
  \sbNomLien{zi}{$\subss z i$}
  \sbRelier[$\subss{z_{ref}^f} i$]{compensatori}{Fi}
\end{tikzpicture}
                      \caption{Block diagram of closed-loop DGU $i$ with prefilter.}
                      \label{fig:prefilter}
                    \end{figure}
                    The new transfer function from $\subss{z_{ref}}{i}(t)$ to $\subss{z}{i}(t)$ is
                    \begin{equation}
                      \label{eq:C(s)F(s}
                      \subss{\tilde{F}}{i}(s)=\subss{\tilde{C}}{i}(s)\subss{F}{i}(s)
                    \end{equation}
                    Defining a  desired transfer function $\subss{\tilde{F}}{i}(s)$ for each subsystem, we can compute, from \eqref{eq:C(s)F(s}, the pre-filter $\subss{\tilde{C}}{i}(s)$ as
                    \begin{equation}
                      \label{eq:prefilter}
                      \subss{\tilde{C}}{i}(s) = \subss{\tilde{F}}{i}(s)\subss{\hat{F}}{i}(s)^{-1}
                    \end{equation}
                    under the following conditions \cite{Skogestad1996}:
                    \begin{itemize}
                    \item $\subss{\hat{F}}{i}(s)$ must not have Right-Half-Plane (RHP) zeros that would become RHP poles of $\subss{\tilde{C}}{i}(s)$, making it unstable;
                    \item $\subss{\hat{F}}{i}(s)$ must not contain a time delay, otherwise $\subss{\tilde{C}}{i}(s)$ would have a predictive action
                    \item $\subss{\tilde{C}}{i}(s)$ must be realizable, i.e. it must have more poles than zeros.
                    \end{itemize}
                    If these conditions are verified, the filter $\subss{\tilde{C}}{i}(s)$ given by \eqref{eq:prefilter} is realizable and asymptotically stable (this condition is essential since $\subss{\tilde{C}}{i}(s)$ works in open-loop). Furthermore, since $\subss{\hat{F}}{i}(s)$ is asymptotically stable (thanks to the stabilizing controllers $\subss{\CC}{i}$ designed solving the problem $\PP_i$), the closed-loop system including filters $\subss{\tilde{C}}{i}(s)$ is asymptotically stable as well. Moreover, when some of the previous conditions do not hold, formula \eqref{eq:prefilter} cannot be used. Still, the compensator $\subss{\tilde{C}}{i}(s)$ can be designed for a given bandwidth, as shown in \cite{Skogestad1996}. 

	       \subsubsection{Compensation of measurable disturbances}
	            \label{sec:compensator}
                    The second enhancement is the compensation of measurable disturbances. Since we assumed load dynamics is not known, we have modeled, for each subsystem, the \emph{dq} components of load currents as a measurable disturbance $\subss{d}{i}(t)$. Considering new local controllers $\subss{\tilde{\CC}}{i}$ defined as
                    \begin{equation}
                      \subss{\tilde{\CC}}{i}:\quad \subss{u}{i}=K_{i}\subss{\hat{x}}{i}(t)+\subss{\tilde{u}}{i}(t)
                    \end{equation}
                    that are just controllers $\subss{\CC}{i}$ in \eqref{eq:ctrldec} with the additional term $\subss{\tilde{u}}{i}(t)$, we can rewrite \eqref{eq:modelDGUgen-aug-closed} as
                    \begin{equation}
                      \label{eq:subsysDGi-thAUGCLComp}
                      \subss{\tilde{\Sigma}}{i}^{DGU} :
                      \left\lbrace
                        \begin{aligned}
                          \subss{\dot{\hat{x}}}{i}(t) &= (\hat{A}_{ii}+ \hat{B}_{i}K_{i})\subss{\hat{x}}{i}(t)+\hat{M}_{i}\subss{\hat{d}}{i}(t)+\hat{B_{i}}\subss{\tilde{u}}{i}(t)\\
                          \subss{\hat{y}}{i}(t)       &= \hat{C}_{i}\subss{\hat{x}}{i}(t)\\
                          \subss{z}{i}(t)       &= \hat{H}_{i}\subss{\hat{y}}{i}(t)
                        \end{aligned}
                      \right..
                    \end{equation}
                    We now use the new input $\subss{\tilde{u}}{i}(t)$ to compensate the measurable disturbance $\subss{d}{i}(t)$ (recall that $\subss{\hat d}{i} = [\subss{d^T}{i}~\subss{z_{ref}^T}{i} ]^T$). From \eqref{eq:subsysDGi-thAUGCLComp}, the transfer function from the disturbance $\subss{d}{i}(t)$ to the controlled variable $\subss{z}{i}(t)$ is
                    \begin{equation}
                      \label{eq:Gd(s)}
                      G^{d}_{i}(s)=(\hat{H}_i \hat{C}_i)(sI-(\hat{A}_{ii}+\hat{B}_{i}K_i))^{-1}			\begin{bmatrix}
                        M_{i}\\
                        0
                      \end{bmatrix}. 
                    \end{equation}
                    Moreover, the transfer function from the new input $\subss{\tilde{u}}{i}(t)$ to the controlled variable $\subss{z}{i}(t)$ is
                    \begin{equation}
                      \label{eq:G(s)}
                      G_{i}(s)=(\hat{H}_i \hat{C}_i)(sI-(\hat{A}_{ii}+\hat{B}_{i}K_i ))^{-1}\hat{B}_i . 
                    \end{equation}
                    If we combine \eqref{eq:Gd(s)} and \eqref{eq:G(s)}, we obtain
                    \begin{equation}
                      \subss{z}{i}(s)=G_{i}(s)\subss{\tilde{u}}{i}(s)+G^{d}_{i}(s)\subss{d}{i}(s).
                    \end{equation}
                    In order to zero the effect of the disturbance on the controlled variable, we set
                    \begin{equation}
                      \subss{\tilde{u}}{i}(s)=N_{i}(s)\subss{d}{i}(s)
                    \end{equation}
                    where 
                    \begin{equation}
                      \label{eq:compensator}
                      \subss{N}{i}(s)=-G_{i}(s)^{-1}G^{d}_{i}(s)
                    \end{equation}
                    is the transfer function of the compensator. Note that $\subss{N}{i}(s)$ is well defined under the following conditions \cite{Skogestad1996}:
                    \begin{itemize}
                    \item $\subss{G}{i}(s)$ must not have RHP zeros that would become RHP poles of $\subss{N}{i}(s)$;
                    \item $\subss{G}{i}(s)$ must not contain a time delay, otherwise $\subss{N}{i}(s)$ would have a predictive action
                    \item $\subss{N}{i}(s)$ must be realizable, i.e. it must have more poles than zeros.
                    \end{itemize}
                    In this way, we can ensure that the compensator $\subss{N}{i}(s)$ is asymptotically stable, hence preserving asymptotic stability of the system. When some of the previous conditions do not hold, formula \eqref{eq:compensator} cannot be used and perfect compensation cannot be achieved. Still, the compensator $\subss{N}{i}(s)$ can be designed to reject disturbances within a given bandwidth, as shown in \cite{Skogestad1996}. The overall control scheme with the addition of the compensators is shown in Figure \ref{fig:compensator}.
                    \begin{figure}[!htb]
                      \centering
                      \tikzstyle{input} = [coordinate]
\tikzstyle{output} = [coordinate]
\tikzstyle{guide} = []
\tikzstyle{block} = [draw, rectangle, minimum height=1cm]

\begin{tikzpicture}
  \sbEntree{zref1}
  \sbDecaleNoeudy[3]{zref1}{zrefj}
  \sbDecaleNoeudy[3]{zrefj}{zrefN}
  \node [block, right of=zrefj,node distance=11cm,minimum height=4cm, minimum width=2cm] (microgrid) {\textbf{Microgrid}};
  \node [guide, right of=zrefj,xshift=-0.5cm] (zrefjline) {};
  \draw [draw] (zrefjline) -| node{$\vdots$} (zrefjline);
  
  \sbComph{sumret1}{zref1}
  \sbBloc{integrator1}{$\int dt$}{sumret1}
  \sbBloc[2.5]{controller1}{$K_1$}{integrator1}
  \sbSumb[5]{sumdist1}{controller1}
  \sbRelier[$\subss{z_{ref}}{1}$]{zref1}{sumret1}
  \sbRelier{sumret1}{integrator1}
  \sbRelier[$\subss v 1$]{integrator1}{controller1}
  \sbRelier{controller1}{sumdist1}
  \node [guide, left of=microgrid,yshift=1.05cm,xshift=0.125cm] (u1end) {};
  \sbRelier[$\subss u 1$]{sumdist1}{u1end}
  
  \sbComp{sumretN}{zrefN}
  \sbBloc{integratorN}{$\int dt$}{sumretN}
  \sbBloc[4.5]{controllerN}{$K_N$}{integratorN}
  \sbSumb[7.5]{sumdistN}{controllerN}
  \sbRelier[$\subss{z_{ref}}{N}$]{zrefN}{sumretN}
  \sbRelier{sumretN}{integratorN}
  \sbRelier[$\subss v N$]{integratorN}{controllerN}
  \sbRelier{controllerN}{sumdistN}
  \node [guide, left of=microgrid,yshift=-1.05cm,xshift=0.125cm] (uNend) {};
  \sbRelier[$\subss u N$]{sumdistN}{uNend}
  
  \node [output, below of=microgrid,yshift=-3.0cm,xshift=0.5cm] (d1out) {};
  \node [output, below of=microgrid,yshift=-2.8cm,xshift=0.5cm] (d1outnear) {};
  \node [output, below of=microgrid,yshift=-1.0cm,xshift=0.5cm] (d1outstart) {};
  \draw [draw,->,>=latex'] (d1out) -| node[yshift=-0.2cm]{$\subss d 1$} (d1outstart);
  \node [block, below of=uNend,yshift=-1.75cm,xshift=0.0cm] (N1) {$N_1(s)$};
  \draw [draw,->,>=latex',near end,swap] (d1outnear) -- (N1);
  \draw [draw,->,>=latex',near end,swap] (N1) -| node[xshift=0.35cm] {$\subss{\tilde u} 1$} (sumdist1);
  
  \node [output, below of=microgrid,yshift=-1.5cm] (djout) {};
  \draw [draw] (djout) -| node {$\ldots$} (djout);
  
  \node [output, below of=microgrid,yshift=-1.8cm,xshift=-0.5cm] (dNout) {};
  \node [output, below of=microgrid,yshift=-1.6cm,xshift=-0.5cm] (dNoutnear) {};
  \node [output, below of=microgrid,yshift=-1.0cm,xshift=-0.5cm] (dNoutstart) {};
  \draw [draw,->,>=latex'] (dNout) -| node[yshift=-0.2cm]{$\subss d N$} (dNoutstart);
  \node [block, below of=sumdistN,yshift=-0.55cm,xshift=0.0cm] (NN) {$N_N(s)$};
  \draw [draw,->,>=latex',near end,swap] (dNoutnear) -- (NN);
  \draw [draw,->,>=latex',near end,swap] (NN) -- node[xshift=0.35cm,yshift=-0.2cm] {$\subss{\tilde u} N$} (sumdistN);

  \node [output, above of=microgrid,yshift=1.8cm,xshift=0.5cm] (y1out) {};
  \node [output, above of=microgrid,yshift=1.0cm,xshift=0.5cm] (y1outstart) {};
  \node [output, above of=microgrid,yshift=1.5cm] (yjout) {};
  \node [output, above of=microgrid,yshift=1.2cm,xshift=-0.5cm] (yNout) {};
  \node [output, above of=microgrid,yshift=1.0cm,xshift=-0.5cm] (yNoutstart) {};
  \draw [draw] (y1outstart) -- node {} (y1out);
  \draw [draw,->,>=latex'] (y1out) -| node[yshift=0.2cm]{$\subss y 1$} (controller1);
  \draw [draw] (yjout) -| node {$\ldots$} (yjout);
  \draw [draw] (yNoutstart) -- node {} (yNout);
  \draw [draw,->,>=latex'] (yNout) -| node[yshift=0.2cm]{$\subss y N$} (controllerN);
  
  \node [guide, right of=microgrid,yshift=1.05cm,xshift=-.125cm] (z1) {};
  \node [guide, right of=microgrid,yshift=1.165cm,xshift=0.5cm] (z1near) {};
  \node [guide, right of=microgrid,yshift=1.05cm,xshift=1cm] (z1end) {};
  \draw [draw,->,>=latex',near end,swap] (z1) -- node[xshift=0.6cm] {$\subss{z} 1$} (z1end);
  \sbRenvoi[-6.8]{z1near}{sumret1}{}

  \node [guide, right of=microgrid,yshift=-1.05cm,xshift=-.125cm] (zN) {};
  \node [guide, right of=microgrid,yshift=-0.95cm,xshift=0.5cm] (zNnear) {};
  \node [guide, right of=microgrid,yshift=-1.05cm,xshift=1cm] (zNend) {};
  \draw [draw,->,>=latex',near end,swap] (zN) -- node[xshift=0.6cm] {$\subss{z} N$} (zNend);
  \sbRenvoi[10]{zNnear}{sumretN}{}
  
  \node [guide, right of=microgrid,xshift=0.5cm] (zj) {};
  \draw [draw] (zj) -| node {$\vdots$} (zj);
  
  \node [guide, left of=microgrid,xshift=-0.5cm] (uj) {};
  \draw [draw] (uj) -| node {$\vdots$} (uj);
  
\end{tikzpicture}
                      \caption{Overall control scheme with compensation of measurable disturbances $\subss{d}{i}(s)$.}
                      \label{fig:compensator}
                    \end{figure}
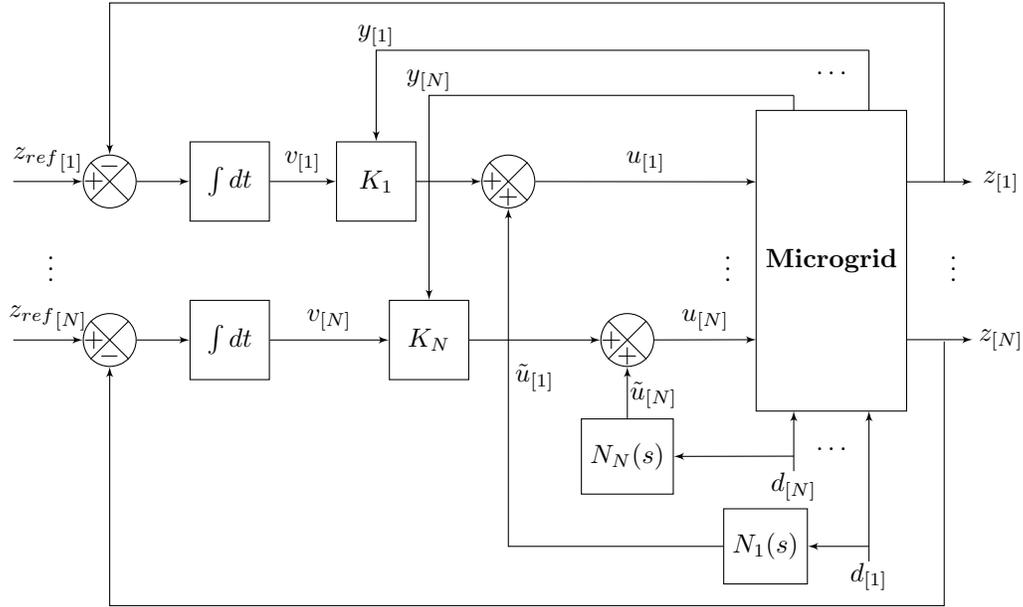

	  \subsection{Algorithm for the design of local controllers}
               \label{sec:algorithm}
                In Algorithm \ref{alg:ctrl_design}, we summarize the procedure for the design of local controller $\subss\CC i$ and compensators $\subss{\tilde{C}}{i}$ and $\subss{N}{i}$.
               \begin{algorithm}[!htb]
                 \caption{Design of controller $\subss{\CC}{i}$ and compensators $\subss{\tilde{C}}{i}$ and $\subss{N}{i}$ for subsystem $\subss{\hat{\Sigma}}{i}^{DGU}$}
                 \label{alg:ctrl_design}
                 \textbf{Input:} DGU $\subss{\hat{\Sigma}}{i}^{DGU}$ as in \eqref{eq:modelDGUgen-aug} and desired closed-loop transfer function $\subss{\tilde{F}}{i}(s)$\\
                 \textbf{Output:} Controller $\subss{\CC}{i}$ and, optionally, pre-filter $\subss{\tilde{C}}{i}$ and compensator $\subss{N}{i}$\\
                 \begin{enumerate}[(A)]
                 \item\label{enu:stepAalgCtrl} Find $K_i$ solving the LMI problem \eqref{eq:optproblem}. If it is not feasible \textbf{stop} (the controller $\subss\CC i$ cannot be designed).\\
                   
                   \textbf{Optional steps}
                 \item\label{enu:stepBalgCtrl} Given the desired closed-loop transfer function $\subss{\tilde{F}}{i}(s)$, design the asymptotically stable local pre-filter $\subss{\tilde{C}}{i}$ as in \eqref{eq:prefilter}. 
                 \item\label{enu:stepCalgCtrl} Design the asymptotically stable local compensator $\subss{N}{i}$ as in \eqref{eq:compensator}.
                 \end{enumerate}
               \end{algorithm}

          \subsection{Plug-and-Play operations}
               \label{sec:PnP}
               In this section, we discuss the operations for updating the controllers when DGUs are added to or removed from the ImG. The goal is to preserve stability of the new closed-loop system. As a starting point, we consider a microgrid composed by subsystems $\subss{\hat{\Sigma}}{i}^{DGU}, i\in\DD$ equipped with local controllers and compensators produced by Algorithm \ref{alg:ctrl_design}.

               \paragraph{Plugging-in operation}
                    Consider the plug-in of a new DGU $\subss{\hat{\Sigma}}{N+1}^{DGU}$ described by matrices, $\hat{A}_{N+1\:N+1}$, $\hat{B}_{N+1}$, $\hat{C}_{N+1}$, $\hat{M}_{N+1}$, $\hat{H}_{N+1}$ and $\{\hat{A}_{N+1\:j}\}_{j\in\NN_{N+1}}$. In particular, $\NN_{N+1}$ identifies the DGUs that are directly coupled to $\subss{\hat{\Sigma}}{N+1}^{DGU}$ through transmission lines and $\{\hat{A}_{N+1\:j}\}_{j\in\NN_{N+1}}$ are the corresponding coupling terms. For designing controller $\subss{\CC}{N+1}$ and compensators $\subss{\tilde{C}}{N+1}$ and $\subss{N}{N+i}$, we execute Algorithm \ref{alg:ctrl_design}. We note that DGUs $\subss{\hat{\Sigma}}{j}^{DGU}$, $j\in\NN_{N+1}$, have the new neighbour $\subss{\hat{\Sigma}}{N+1}^{DGU}$. Therefore,  the redesign of controllers $\subss{\CC}{j}$ and compensators $\subss{\tilde{C}}{j}$ and $\subss{N}{j}$, $\forall j\in\NN_{N+1}$ is needed because matrices $\hat{A}_{jj}$, $j\in\NN_{N+1}$ change (see definition of $\hat{A}_{jj}$ in \eqref{eq:Aii}). 

                    In conclusion, the plug-in of $\subss{\hat{\Sigma}}{N+1}^{DGU}$ is allowed only if Algorithm \ref{alg:ctrl_design} does not stop in Step \ref{enu:stepAalgCtrl} when computing controllers $\subss{\CC}{k}$ for all $k\in\NN_{N+1}\cup\{N+1\}$. Note that, the redesign is not propagated further in the network, i.e. even without changing controllers $\subss{\CC}{i}$, $\subss{\tilde{C}}{i}$ and $\subss{N}{i}$, $i\not\in \{N+1\}\cup\NN_{N+1}$ asymptotic stability of the new overall closed-loop QSL ImG model is ensured.

               \paragraph{Unplugging operation}  
                    We consider the unplugging of DGU $\subss{\hat{\Sigma}}{k}^{DGU}$, $k\in\DD$. Matrix $\hat{A}_{jj}$ of each $\subss{\hat{\Sigma}}{j}^{DGU}$, $ j\in\NN_{k}$ changes due to the disconnection of $\subss{\hat{\Sigma}}{k}^{DGU}$ from the network. For this reason, for each $ j\in\NN_{k}$, the redesign through Algorithm \ref{alg:ctrl_design} of controllers $\subss{\CC}{j}$ and compensators $\subss{\tilde{C}}{j}$ and $\subss{N}{j}$, $j\in\NN_{N+1}$, is needed and unplugging of $\subss{\hat{\Sigma}}{k}^{DGU}$ is allowed only if these operations can be successfully terminated. As for the plugging-in operation, the re-design of local controllers $\subss\CC j$, $j\notin\NN_{k}$ is not required.

                    \begin{rmk}
                      Existing contributions on decentralized control for ImG fall in two main categories. The first one comprises centralized design procedures where the use of the whole ImG model allows one to guarantee voltage and frequency stability \cite{Etemadi2012a,Etemadi2012,Simpson-Porco2013}. The second one embraces decentralized design approaches, often based on droop control, where tuning the parameters of local regulators does not require any piece of global information about the ImG model \cite{Guerrero2007,Moradi2010,Guerrero2013,Shafiee2014}. In this case, however, stability is seldom guaranteed. Our control design algorithm bridges the gap between the above categories, meaning that it is decentralized but, at the same time, capable to provide closed-loop stability. We also highlight that all decentralized design approaches falling in the second category allow for PnP operations. However they do not guarantee stability is preserved when a new DGU is plugged in or out. As regards the first category of contributions, all control architectures that require a centralized design do not allow for PnP operations. Indeed, when a new DGU is added, the execution of centralized design procedure can modify existing controllers of all other DGUs.
                    \end{rmk}

     \clearpage
     
     \section{Simulation results}
          \label{sec:Simresults}
          In this section, we study performance brought about by PnP controllers described in Section \ref{sec:PnPctrl} using two ImGs. Simulations have been conducted in MatLab/Simulink using the SimPowerSystem Toolbox and the PnPMPC-toolbox \cite{Riverso2012g}. First, we consider the ImG in Figure \ref{fig:schemairandist} with only two DGUs and discuss performance in tracking step references as well as robustness to disturbances and various load dynamics, including highly nonlinear and unbalanced loads. Then, we consider the ImG in Figure \ref{fig:10areas} composed of $10$ DGUs and we show that stability of the whole microgrid is guaranteed. Moreover, we give an example of the PnP capabilities of our control approach. Simulation results show that PnP controllers can guarantee good dynamical performance.
      
     	  \subsection{Scenario 1}
               \label{sec:scenario1}
     	       In this Scenario, we consider the ImG in Figure \ref{fig:schemairandist} composed of two DGUs and, for the sake of simplicity, we set $i=1$ and $j=2$. Parameters values are collected in Table \ref{tbl:par2areas} in Appendix \ref{sec:AppElectrPar}. We highlight that they are comparable to those used in \cite{Karimi2008,Moradi2010,Babazadeh2013}. We used standard blocks of the MatLab/SimPowerSystem Toolbox for simulating VSCs and associated filter, for the (Y/$\Delta$) transformer, for the tie-lines and for the loads. Therefore, although control design is based on a linear model, in this and next sections, more realistic (and often nonlinear) models of microgrid components are used for assessing the performance of PnP controllers. For each DGU, we execute Algorithm \ref{alg:ctrl_design} in order to design local controllers and compensators, where the desired closed-loop transfer function $\subss{\tilde{F}}{i}(s)$, $i=\{1,2\}$ has been chosen as a low-pass filter with DC gain equal to 0 dB and bandwidth equal to 1kHz. The choice of this bandwidth is a trade-off between guaranteeing sufficiently fast transients, and not interfering with the modulation frequency of the inverter.\\
               Through Step \ref{enu:stepAalgCtrl} of Algorithm \ref{alg:ctrl_design}, we obtain two decentralized controllers $\subss{\CC}{i}$, $i=\{1,2\}$ that stabilize the overall microgrid. The eigenvalues of the closed-loop QSL microgrid \eqref{eq:sysaugoverallclosed} are represented in Figure \ref{fig:closedLoop2Areas}. Solving Step \ref{enu:stepBalgCtrl} of Algorithm \ref{alg:ctrl_design} we obtain two asymptotically stable local pre-filters $\subss{\tilde{C}}{i}$, $i=\{1,2\}$ whose singular values are shown in Figure \ref{fig:singularValuePreFilter2Areas}. From Figure \ref{fig:singularValueClosedLoop2Areas}, we also note that the singular values of the overall closed-loop transfer function $F(s)$ with the addition of the two pre-filters (in green) coincide with the frequency response of desired closed-loop transfer function $\tilde{F}(s)$ previously defined. In the last Step of the Algorithm \ref{alg:ctrl_design}, we obtain, for each DGU, an asymptotically stable disturbance compensator $\subss{N}{i}$, $i=\{1,2\}$ as in \eqref{eq:compensator}. The singular values of the two compensators are shown in Figure \ref{fig:singularValueCompensator2Areas}. Next, we evaluate the performance of the decentralized voltage control scheme with different tests.
               \begin{figure}[!htb]
                 \centering
                 \begin{subfigure}[!htb]{0.48\textwidth}
                   \centering
                   \includegraphics[width=1\textwidth, height=130pt]{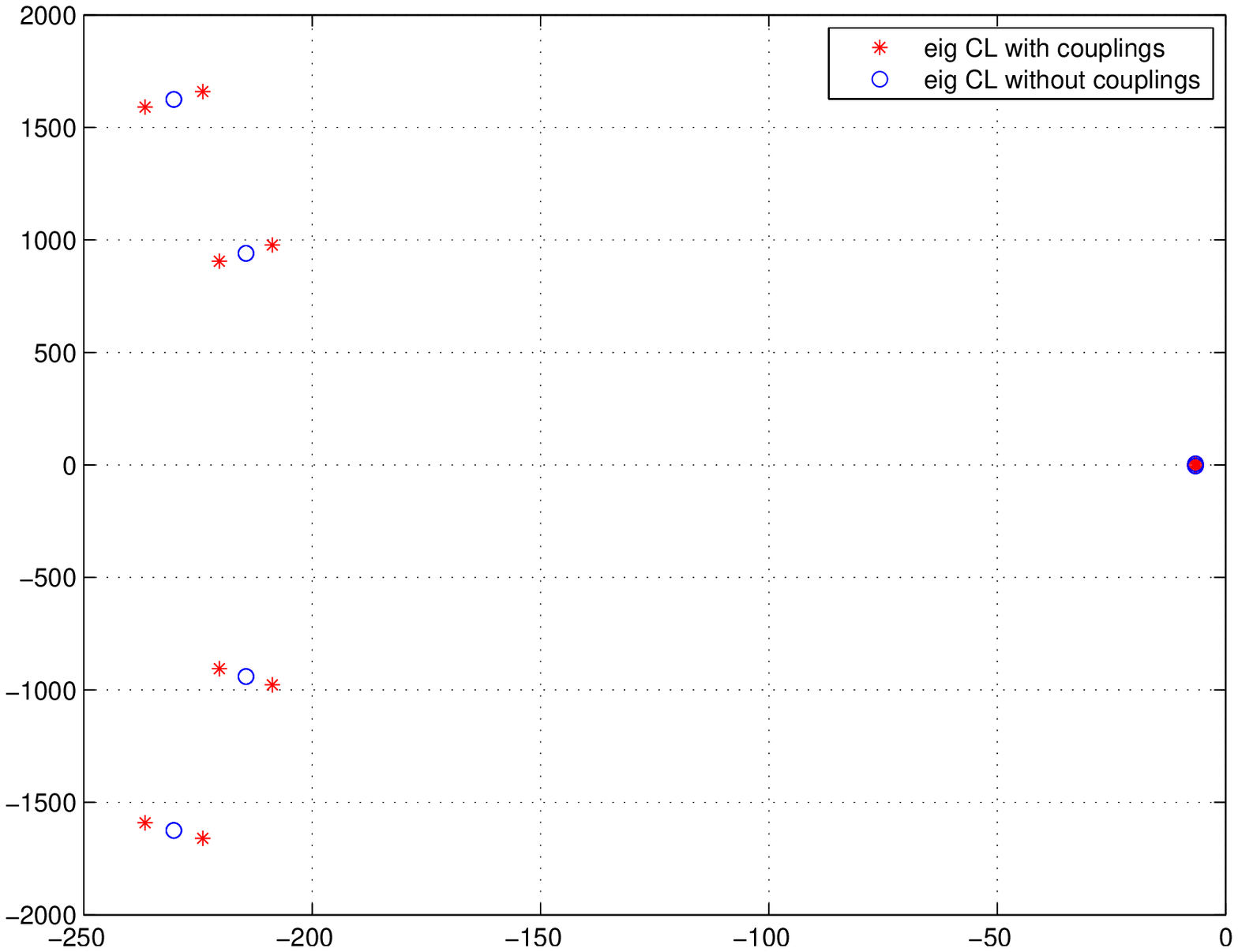}
                   \caption{Eigenvalues of the closed-loop (CL) QSL microgrid with (red) and without (blue) couplings.}
                   \label{fig:closedLoop2Areas}
                 \end{subfigure}
                 \begin{subfigure}[!htb]{0.48\textwidth}
                   \centering
                   \includegraphics[width=1\textwidth, height=130pt]{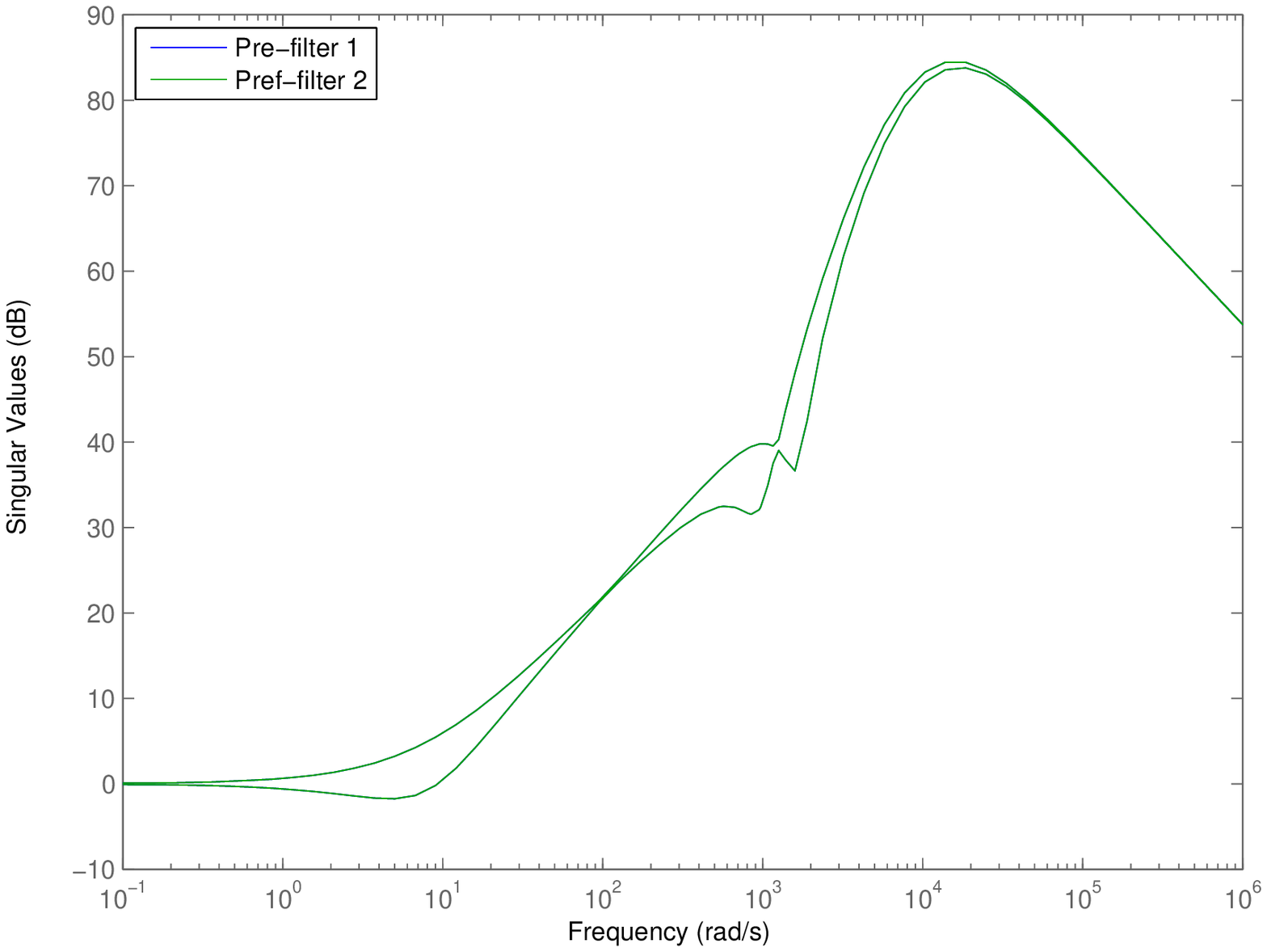}
                   \caption{Singular values of the pre-filters.}
                   \label{fig:singularValuePreFilter2Areas}
                 \end{subfigure}
                 \begin{subfigure}[!htb]{0.48\textwidth}
                   \centering
                   \includegraphics[width=1\textwidth, height=130pt]{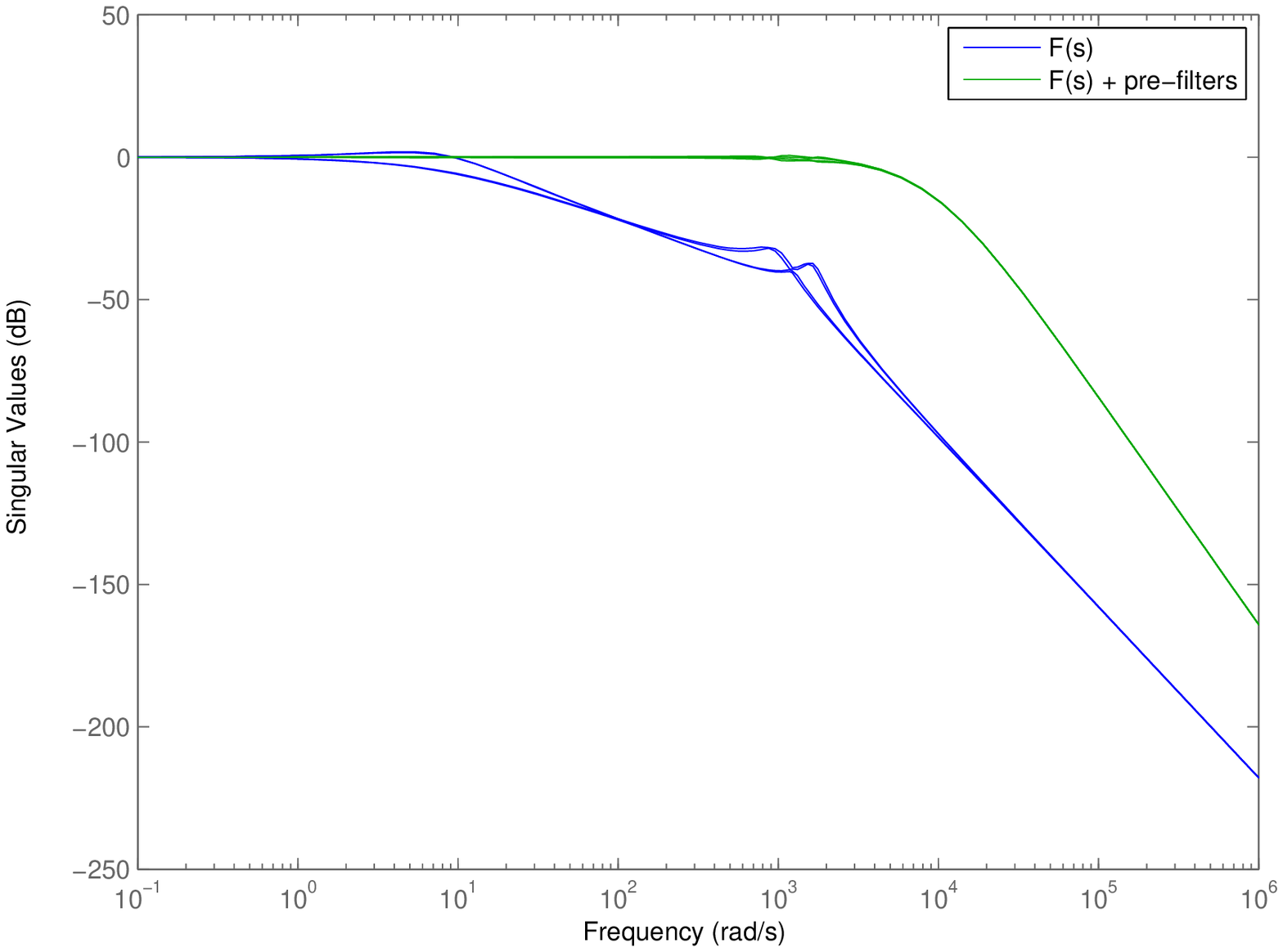}
                   \caption{Singular values of $F(s)$ with (green) and without (blue) pre-filters.}
                   \label{fig:singularValueClosedLoop2Areas}
                 \end{subfigure}
                 \begin{subfigure}[!htb]{0.48\textwidth}
                   \includegraphics[width=1\textwidth, height=130pt]{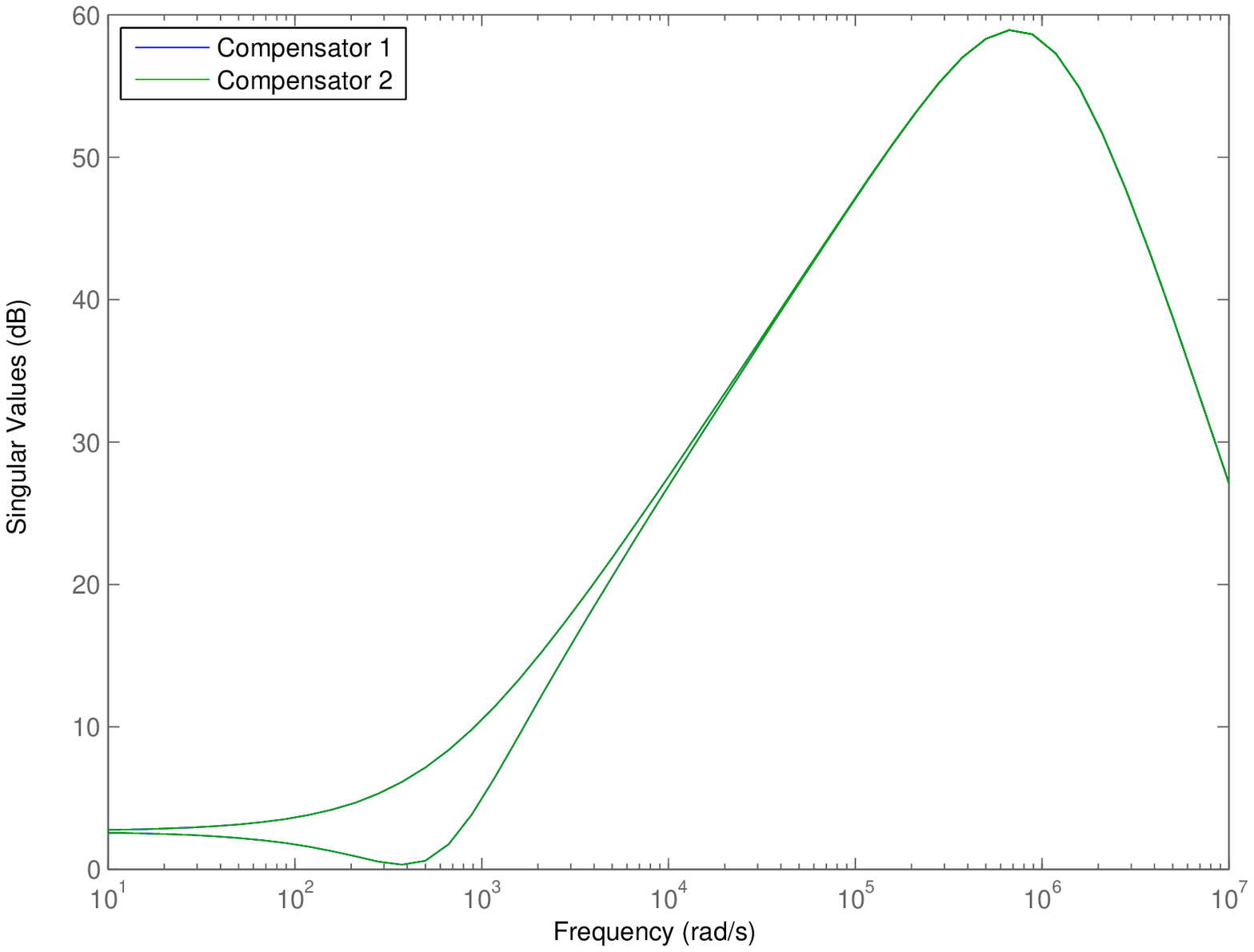}
                   \caption{Singular values of disturbances compensators.}
                   \label{fig:singularValueCompensator2Areas}
                 \end{subfigure}
                 \caption{Features of PnP controllers for Scenario 1.}
                 \label{fig:closedLoop2areas}
               \end{figure}

               \subsubsection{Voltage tracking for DGU 1}
                    \label{sec:scenario1voltagetrack1}
                    In the first test, we assess the performance in tracking step changes in the \emph{dq} voltage reference at $PCC_1$. For each DGU, we use an RL parallel load with constant parameters $R=76~\Omega$ and $L=111.9$ mH. The \emph{d} and \emph{q} components of the voltage at $PCC_{1}$ are initially set at 0.2 per-unit (pu) and 0.6 pu and those of $PCC_{2}$ are set at 0.5 pu and 0.7 pu, respectively. The reference signals of DGU 1, i.e. $V_{1,d~ref}$ and $V_{1,q~ref}$, are affected by two step changes: the \emph{d} component of the load voltage steps up to 0.3 pu at $t=0.5$ s and the \emph{q} component steps down to 0.5 pu at $t=1.5$ s. Figure \ref{fig:trackdqDGU1} shows the dynamic responses of the two DGUs to these changes. In particular, Figures \ref{fig:trackdqDGU1_DGU1} and \ref{fig:trackdqDGU1_DGU2} show good tracking performances with small interactions between the two DGUs. Figures \ref{fig:trackdqDGU1_DGU1_changes1} and \ref{fig:trackdqDGU1_DGU1_changes2} show the instantaneous voltage at $PCC_1$ in the abc frame, during the two step changes of the reference signals. Note that the proposed decentralized control strategy ensures an excellent tracking of the references in less than two cycles.   
                    \begin{figure}[!htb]
                      \centering
                      \begin{subfigure}[!htb]{0.48\textwidth}
                        \centering
                        \includegraphics[width=1\textwidth, height=120pt]{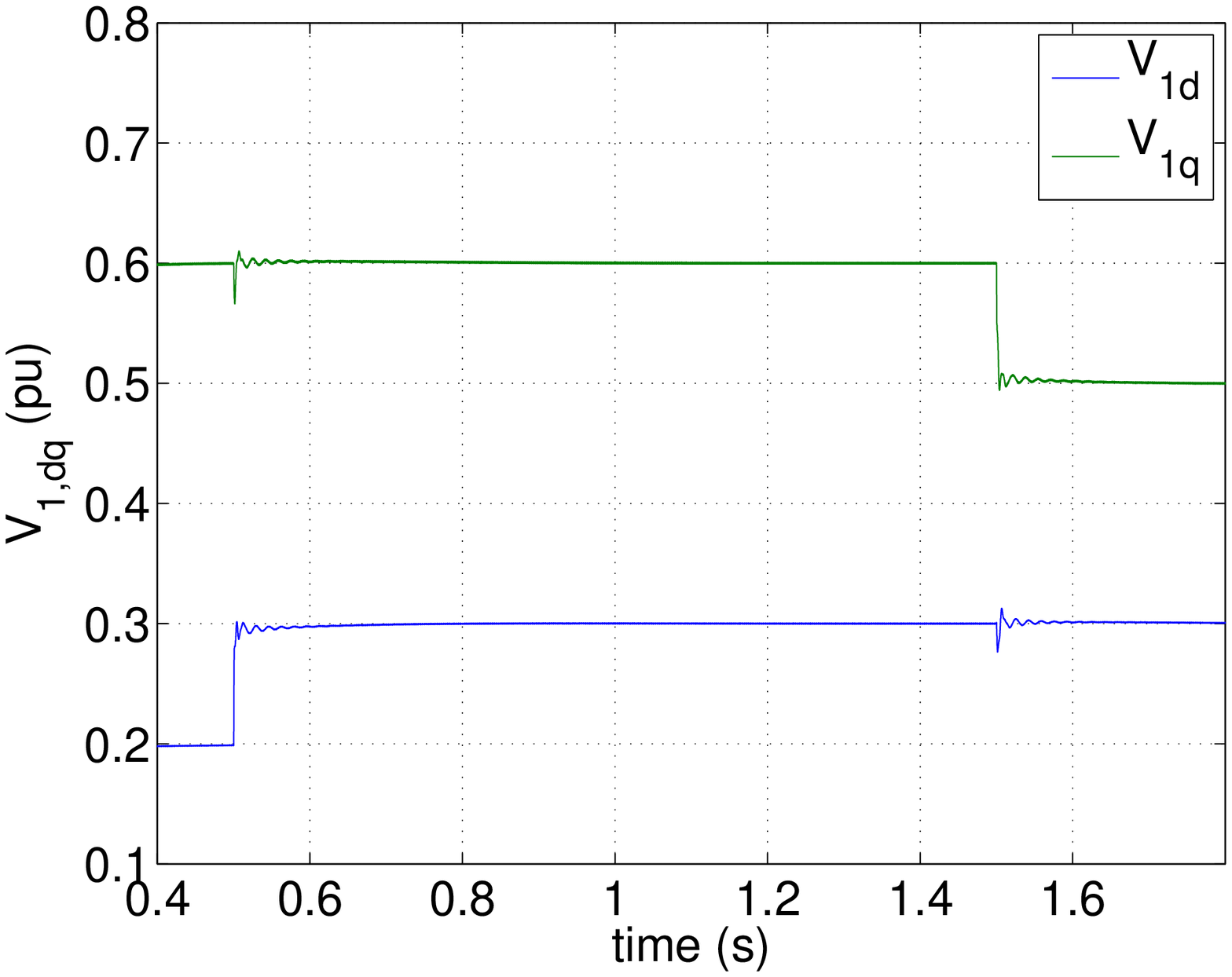}
                        \caption{\emph{d} and \emph{q} components of the load voltage at $PCC_1$.}
                        \label{fig:trackdqDGU1_DGU1}
                      \end{subfigure}
                      \begin{subfigure}[!htb]{0.48\textwidth}
                        \centering
                        \includegraphics[width=1\textwidth, height=120pt]{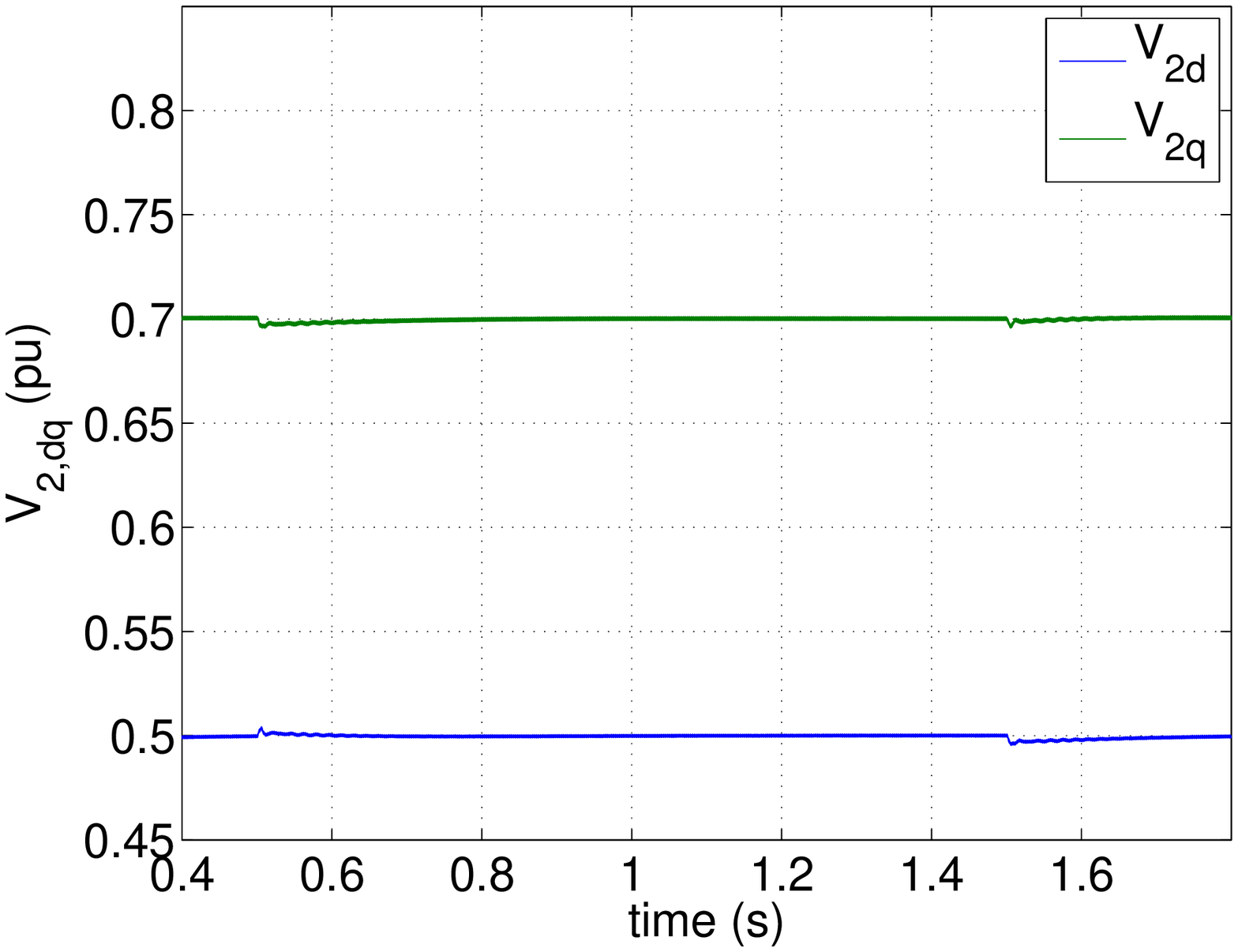}
                        \caption{\emph{d} and \emph{q} components of the load voltage at $PCC_2$.}
                        \label{fig:trackdqDGU1_DGU2}
                      \end{subfigure}
                      \begin{subfigure}[!htb]{0.48\textwidth}
                        \centering
                        \includegraphics[width=1\textwidth, height=120pt]{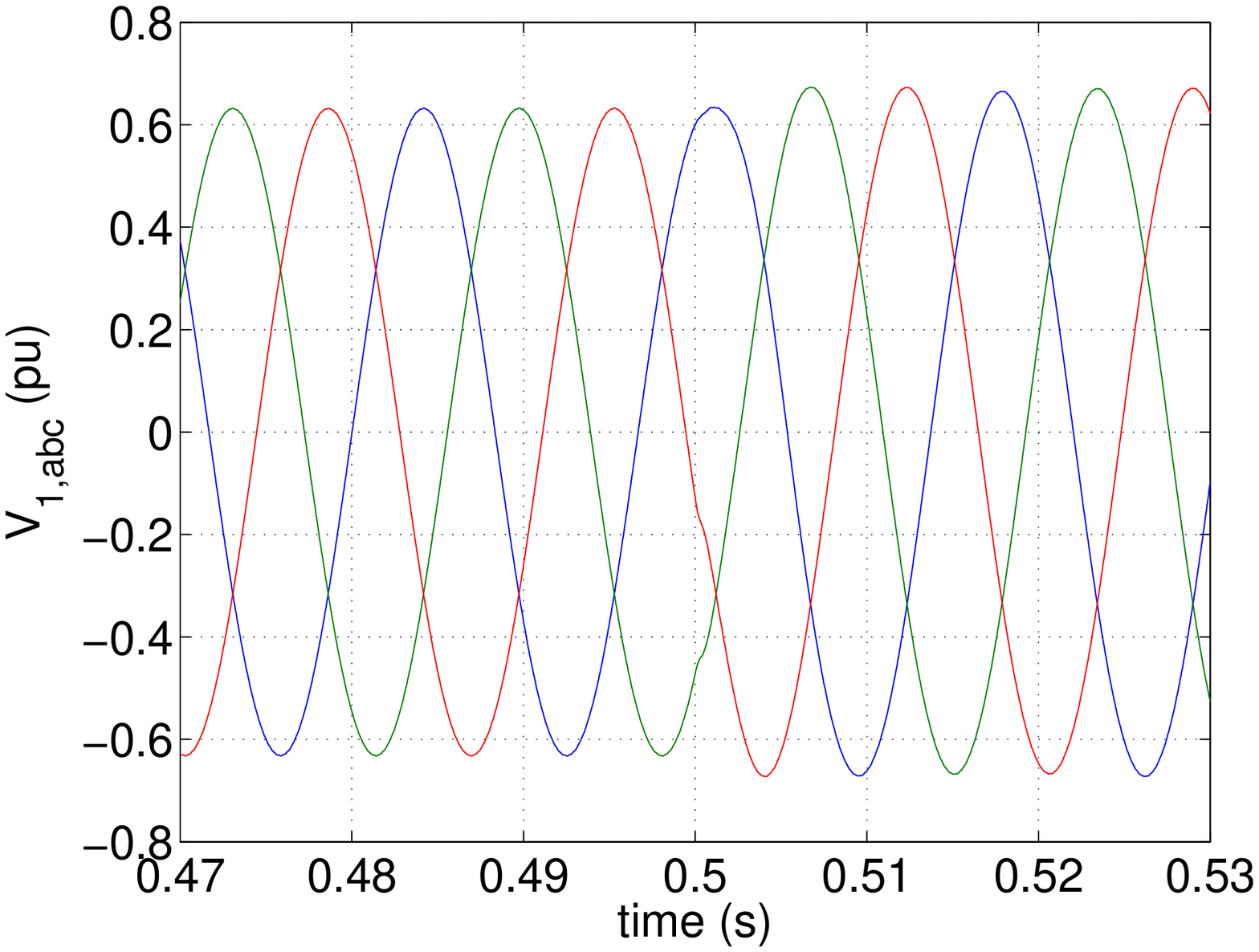}
                        \caption{Three-phase voltage at $PCC_1$ right before and after the first step.}
                        \label{fig:trackdqDGU1_DGU1_changes1}
                      \end{subfigure}
                      \begin{subfigure}[!htb]{0.48\textwidth}
                        \includegraphics[width=1\textwidth, height=120pt]{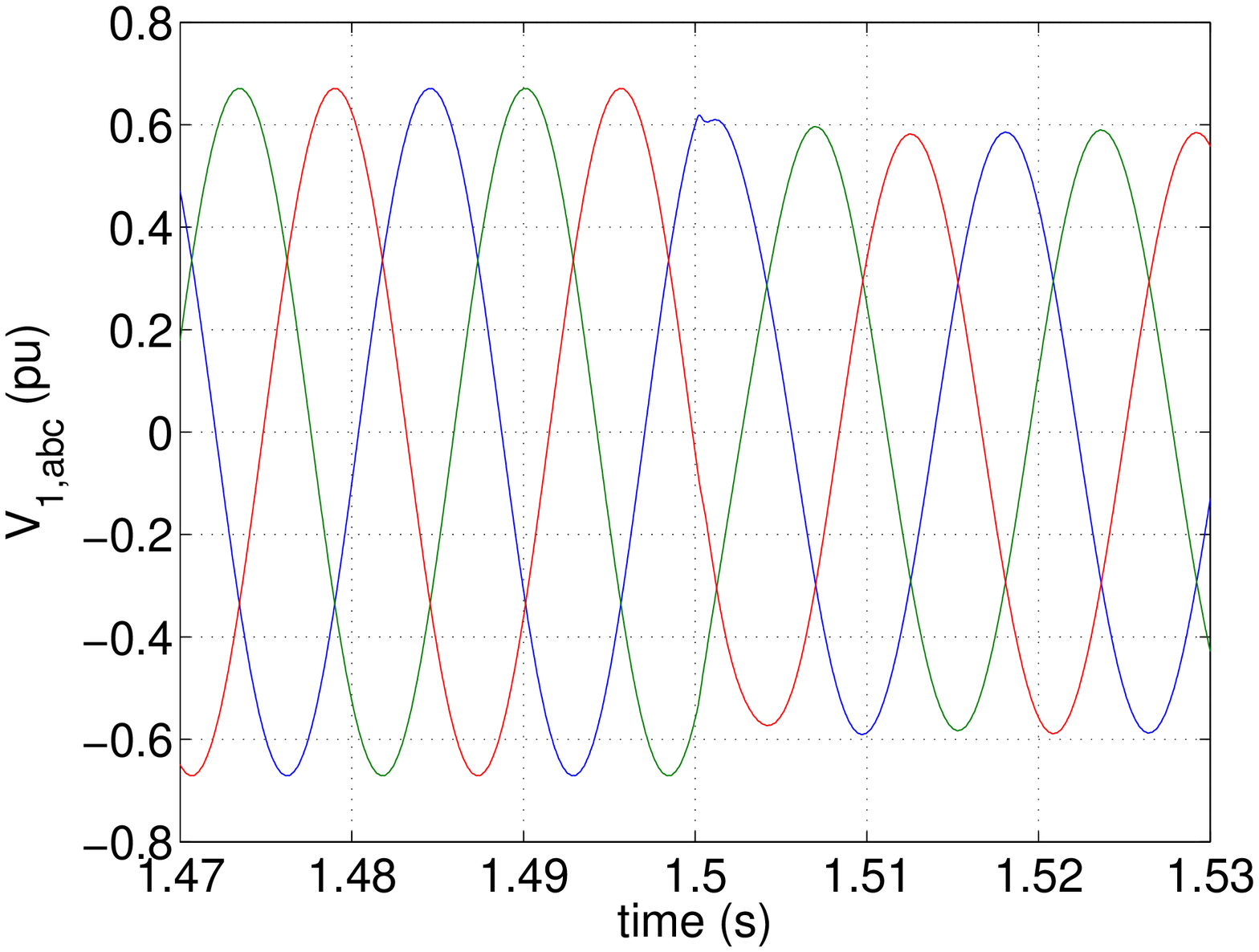}
                        \caption{Three-phase voltage at $PCC_1$ right before and after the second step.\\}
                        \label{fig:trackdqDGU1_DGU1_changes2}
                      \end{subfigure}
                      \caption{Performance of PnP decentralized voltage control in terms of set-point tracking for DGU 1.}
                      \label{fig:trackdqDGU1}
                    \end{figure}

               \subsubsection{Voltage tracking for DGU 2}
                    In this second test, voltage tracking properties at $PCC_2$ are evaluated. The loads used, in the two DGUs, are the same as in Section \ref{sec:scenario1voltagetrack1}. The \emph{d} and \emph{q} components of the voltage references at $PCC_1$ are kept constant at 0.6 pu and 0.8 pu, respectively. The \emph{d} component of the reference at $PCC_2$ steps down from 0.8 pu to 0.6 pu at $t=0.5$ s, and the \emph{q} component steps up from 0.2 pu to 0.4 pu at $t=1.5$ s. The dynamic responses of the overall microgrid to these changes in the reference signals are shown in Figure \ref{fig:trackdqDGU2}. Also in this case, after a short transient, the controller successfully maintains load voltages at prescribed levels.
                    \begin{figure}[!htb]
                      \centering
                      \begin{subfigure}[!htb]{0.48\textwidth}
                        \centering
                        \includegraphics[width=1\textwidth, height=120pt]{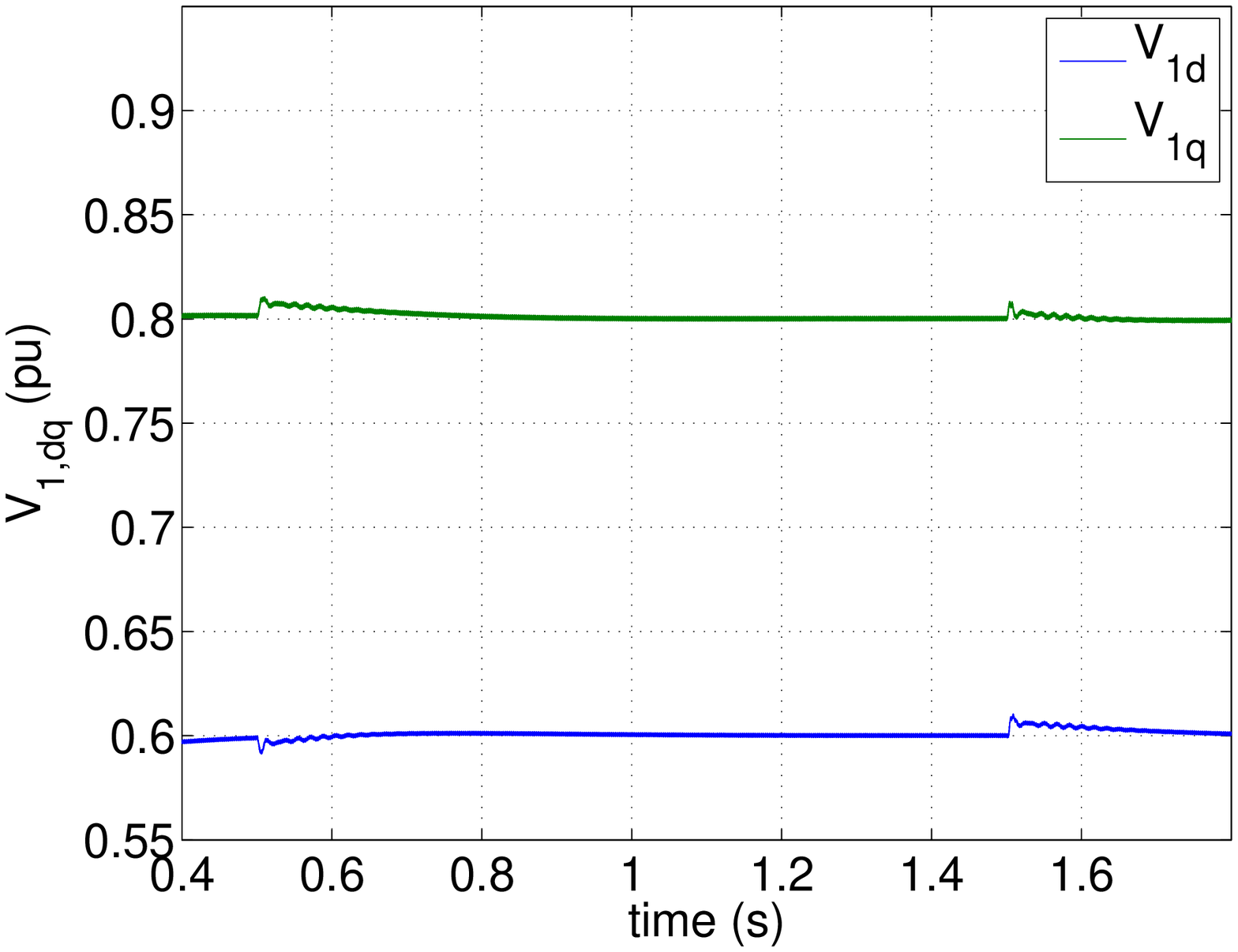}
                        \caption{\emph{d} and \emph{q} components of the load voltage at $PCC_1$.}
                        \label{fig:trackdqDGU2_DGU1}
                      \end{subfigure}
                      \begin{subfigure}[!htb]{0.48\textwidth}
                        \centering
                        \includegraphics[width=1\textwidth, height=120pt]{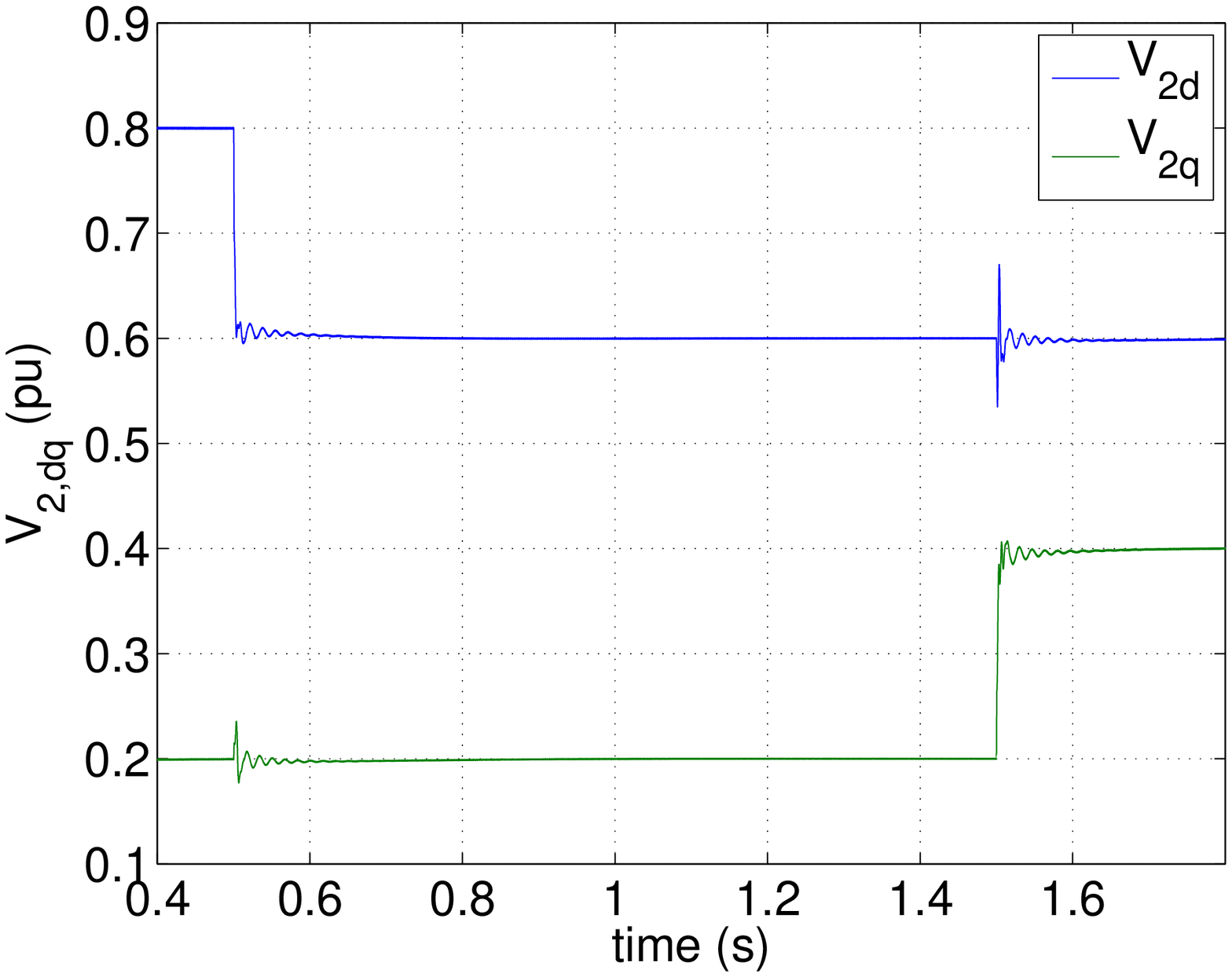}
                        \caption{\emph{d} and \emph{q} components of the load voltage at $PCC_2$.}
                        \label{fig:trackdqDGU2_DGU2}
                      \end{subfigure}
                      \begin{subfigure}[!htb]{0.48\textwidth}
                        \centering
                        \includegraphics[width=1\textwidth, height=120pt]{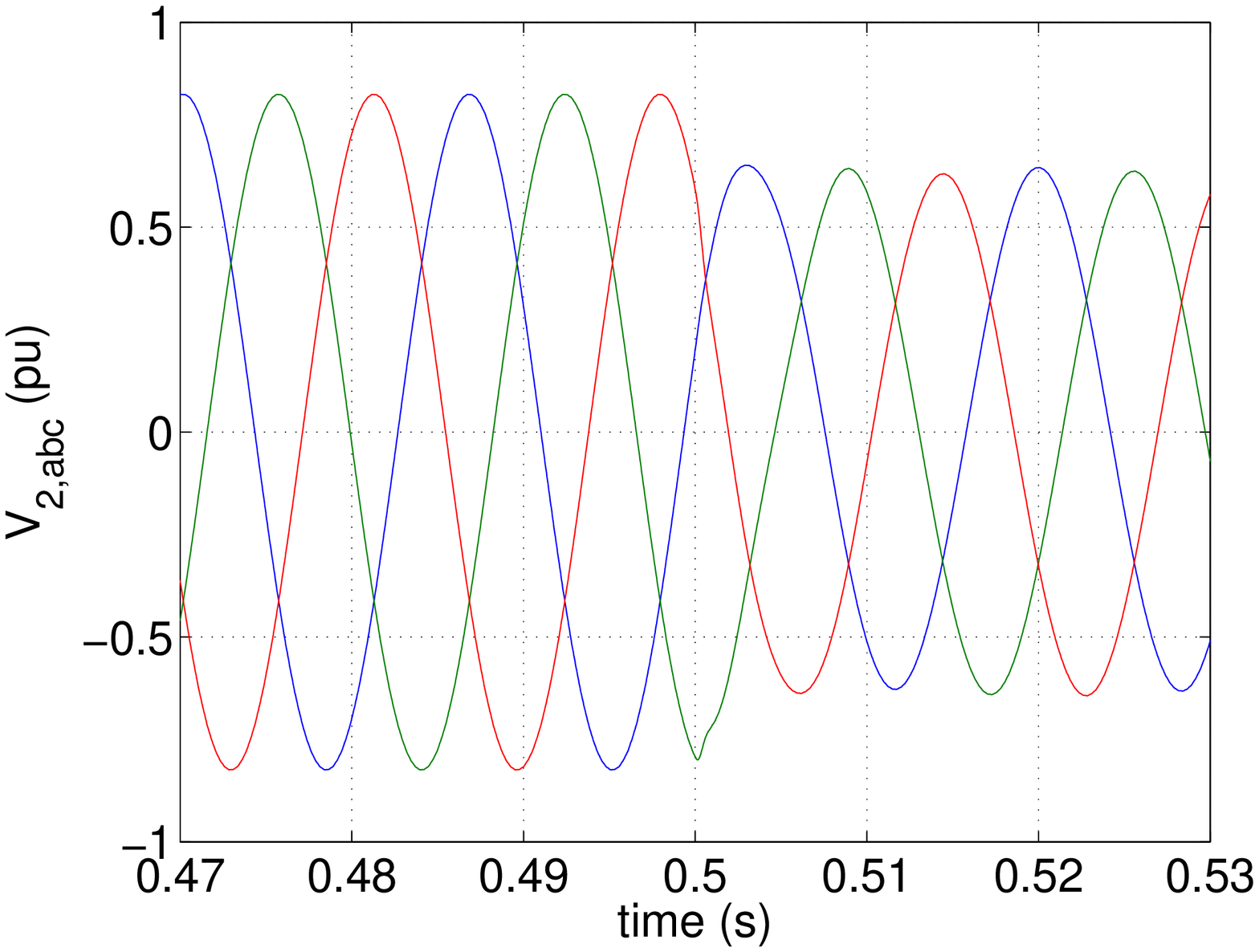}
                        \caption{Three-phase voltage at $PCC_2$ right before and after the first step.}
                        \label{fig:trackdqDGU2_DGU2_changes1}
                      \end{subfigure}
                      \begin{subfigure}[!htb]{0.48\textwidth}
                        \includegraphics[width=1\textwidth, height=120pt]{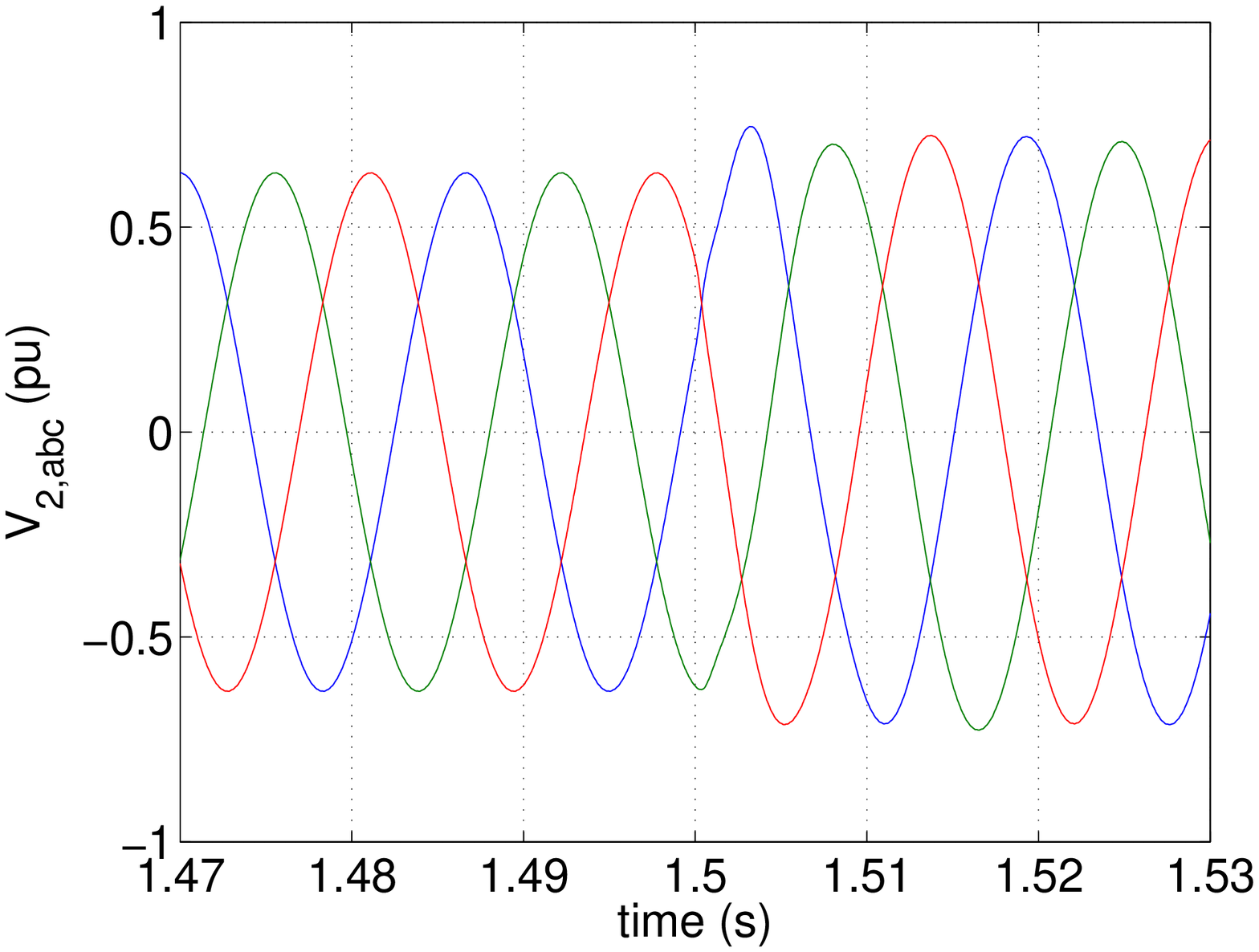}
                        \caption{Three-phase voltage at $PCC_2$ right before and after the second step.}
                        \label{fig:trackdqDGU2_DGU2_changes2}
                      \end{subfigure}
                      \caption{Performance of PnP decentralized voltage control in terms of set-point tracking for DGU 2.}
                      \label{fig:trackdqDGU2}
                    \end{figure}
                    
                    In the two tests presented above, the amplitude of the reference steps are the same used in simulations presented in \cite{Moradi2010}. Comparing our results with those in \cite{Moradi2010}, our controllers achieve better performances. In particular, using the proposed PnP decentralized controllers, the variation of the reference signals for a subsystem has almost no effect on the behaviour of the other DGU. We highlight that although these tests are useful for assessing the performance of different controllers, they are not realistic for an ImG. Indeed, the amplitude of step changes in the references is usually much smaller. In the following, we describe other case studies that are more meaningful to evaluate the electrical behaviour and the performance of the microgrid controlled by PnP regulators.

	       \subsubsection{Robustness to unknown load dynamics}
                    For assessing robustness to load dynamics, we consider RL parallel loads, initially set as described in Section \ref{sec:scenario1voltagetrack1} and apply sudden changes in the two load resistances. During the simulation, the \emph{d} and \emph{q} components of the load voltages at $PCC_1$ and $PCC_2$ are regulated at 0.8 pu and 0.3 pu, and at 0.5 pu and 0.9 pu, respectively. The load resistance at $PCC_1$ is increased from the starting value (76 $\Omega$) to double of it (152 $\Omega$) at $t=0.5$ s. Similarly, the resistance at $PCC_2$ is reduced from 76 $\Omega$ to 65 $\Omega$ at $t=1.5$ s. Figure \ref{fig:Rvar} shown the response of the ImG. From Figures \ref{fig:Rvardqpcc1} and \ref{fig:Rvardqpcc2}, it is apparent that, except for a very short transient, the \emph{d} and \emph{q} components of the load voltages remain unaffected by the load. Figures \ref{fig:Rvarabcpcc1} and \ref{fig:Rvarabcpcc2} show the load voltages, in abc frame, at $PCC_1$ and $PCC_2$, respectively. They reveal that the step changes in the loads are absorbed within a cycle. We recall that load currents are treated as measurable disturbances in our model. Varying the load resistance, is equivalent to changing disturbance values, hence the power required by loads. Furthermore, our control architecture is robust to the disturbance dynamics also due to the presence of compensators $\subss{N}{i}$, $i=\{1,2\}$. In fact, as shown in Figures \ref{fig:Rvarcompensator1} and \ref{fig:Rvarcompensator2}, the inputs $\subss{\tilde{u}}{1}(t)$ and $\subss{\tilde{u}}{2}(t)$, computed by the local compensators, react to the changes in load resistances, in order to zero the effect of these disturbances on voltages at PCCs. \\
                    A very similar case study (also in terms of parameter values) is presented also in \cite{Moradi2010}. A comparison of the results reveals that, in our case, transients due to resistances variations are shorter thanks to the input provided by the disturbance compensators. 
                    \begin{figure}[!htb]
                      \centering
                      \begin{subfigure}[!htb]{0.48\textwidth}
                        \centering
                        \includegraphics[width=1\textwidth, height=120pt]{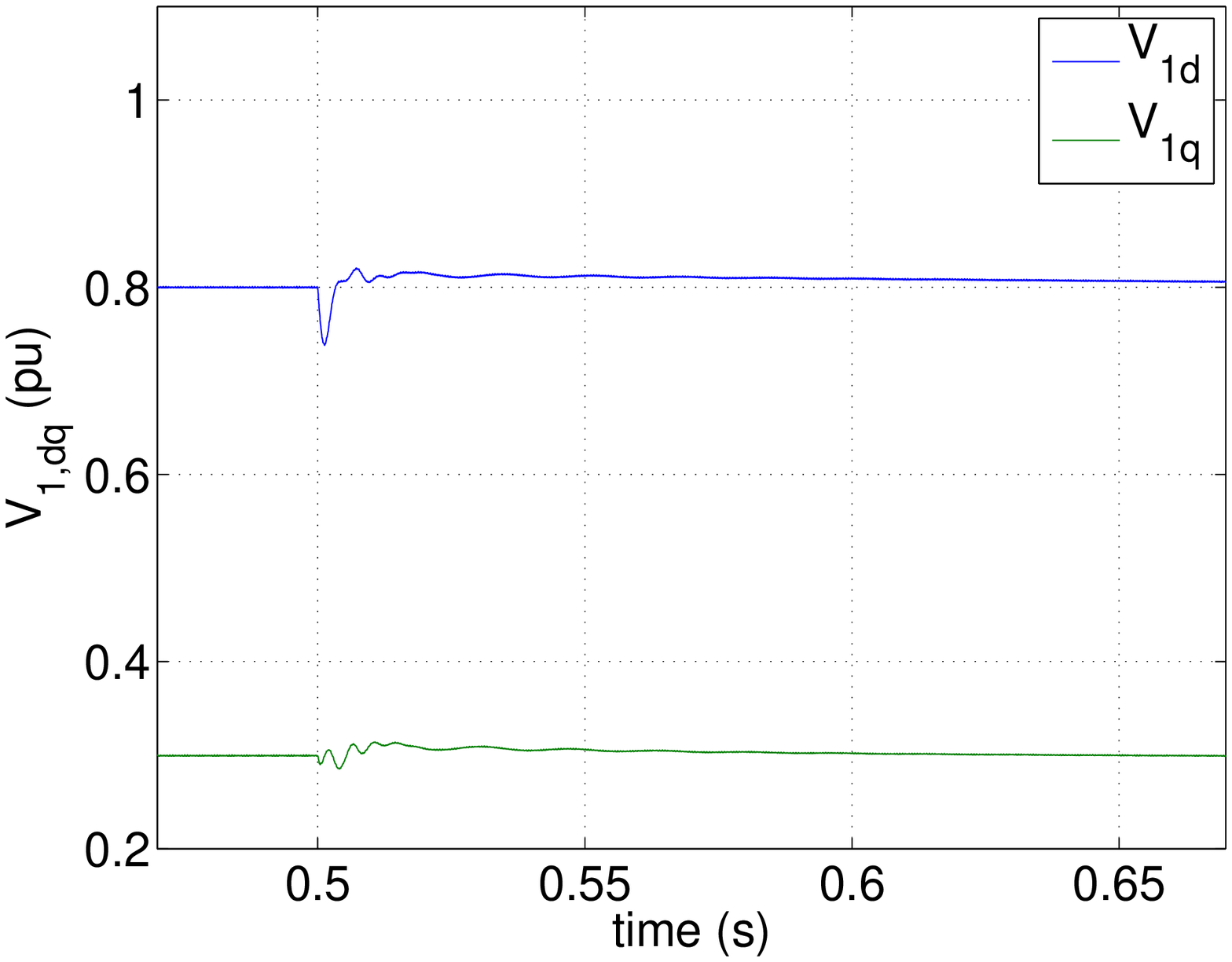}
                        \caption{\emph{d} and \emph{q} components of the load voltage at $PCC_1$.}
                        \label{fig:Rvardqpcc1}
                      \end{subfigure}
                      \begin{subfigure}[!htb]{0.48\textwidth}
                        \centering
                        \includegraphics[width=1\textwidth, height=120pt]{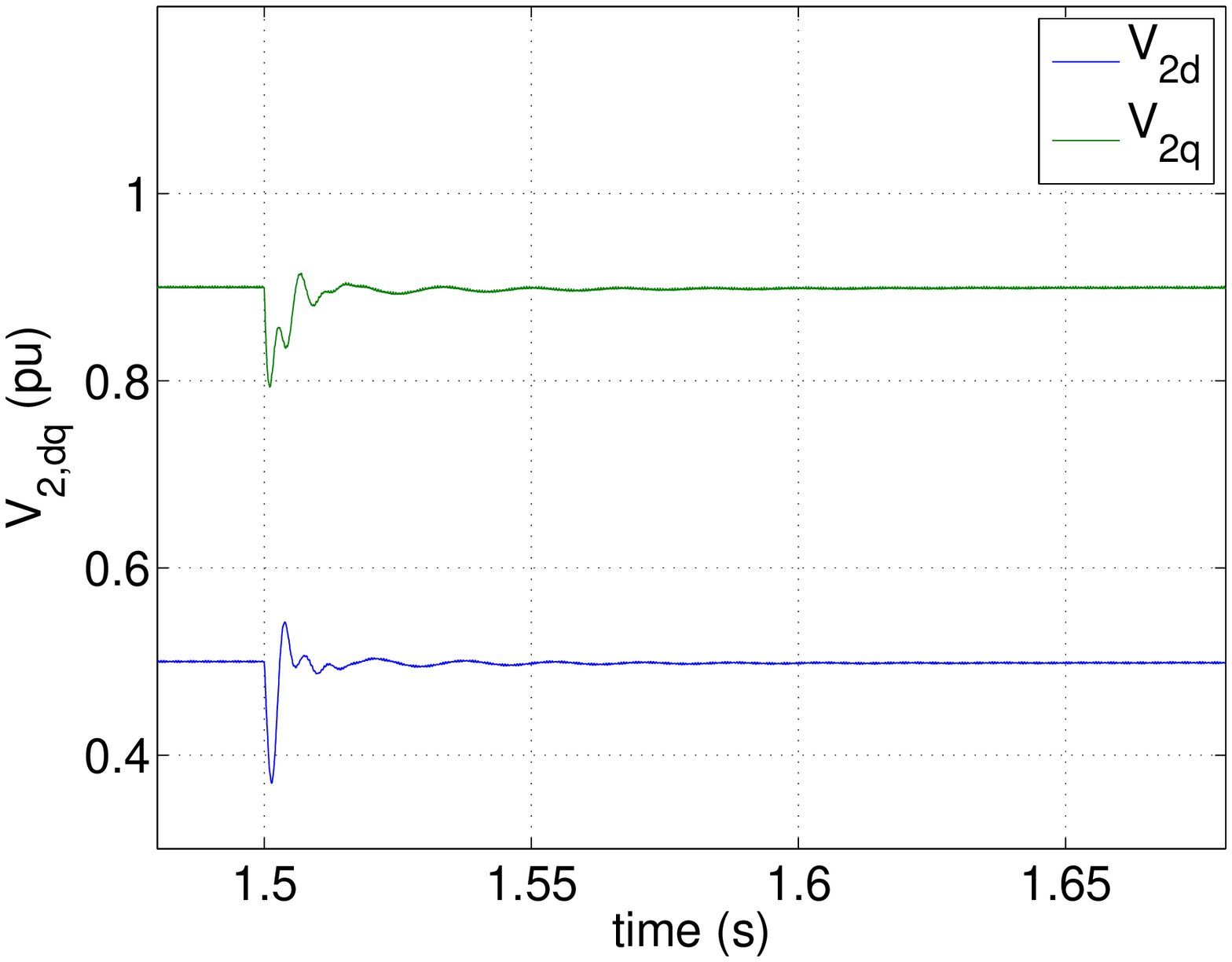}
                        \caption{\emph{d} and \emph{q} components of the load voltage at $PCC_2$.}
                        \label{fig:Rvardqpcc2}
                      \end{subfigure}
                      \begin{subfigure}[!htb]{0.48\textwidth}
                        \centering
                        \includegraphics[width=1\textwidth, height=120pt]{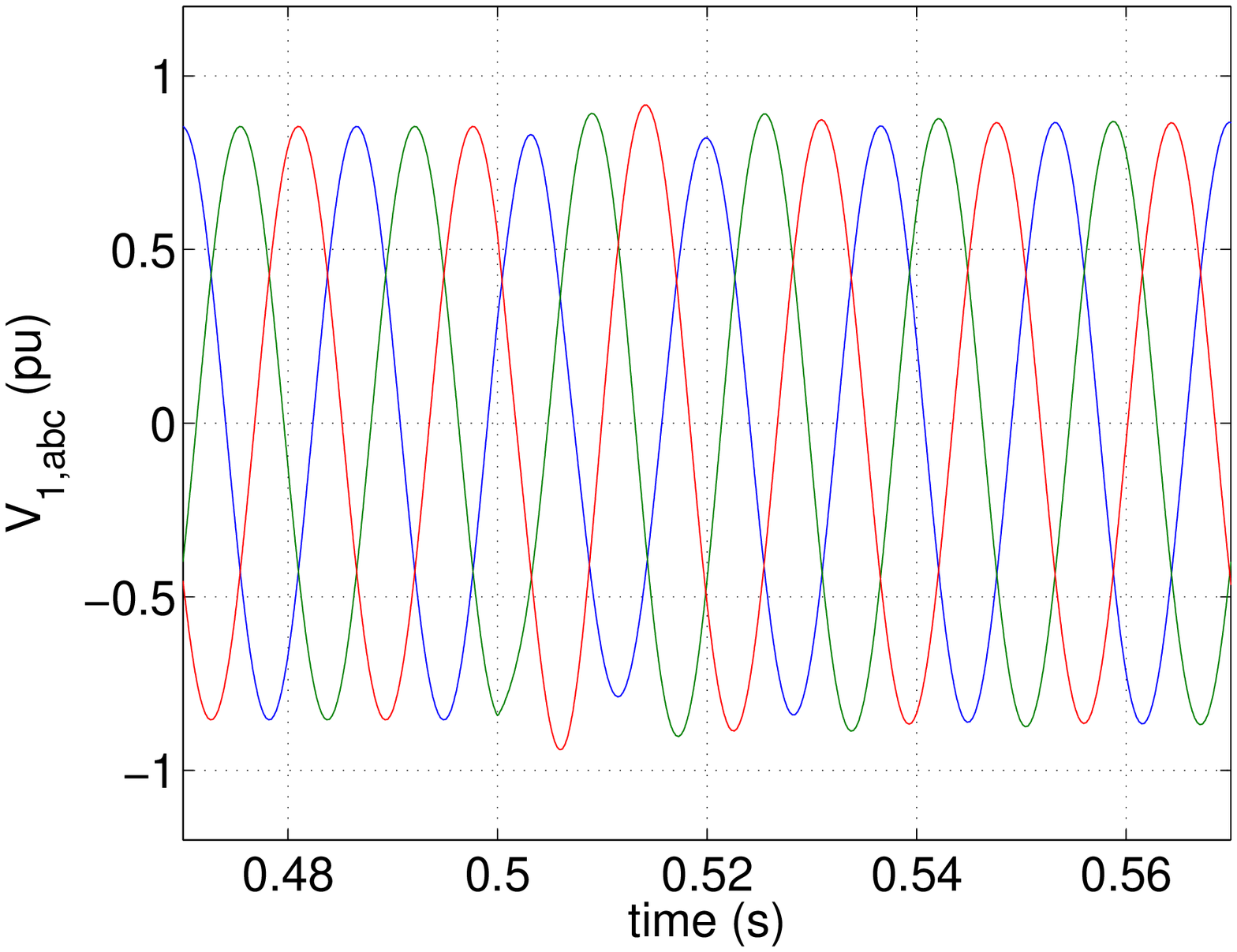}
                        \caption{Three-phase voltage at $PCC_1$.}
                        \label{fig:Rvarabcpcc1}
                      \end{subfigure}
                      \begin{subfigure}[!htb]{0.48\textwidth}
                        \centering
                        \includegraphics[width=1\textwidth, height=120pt]{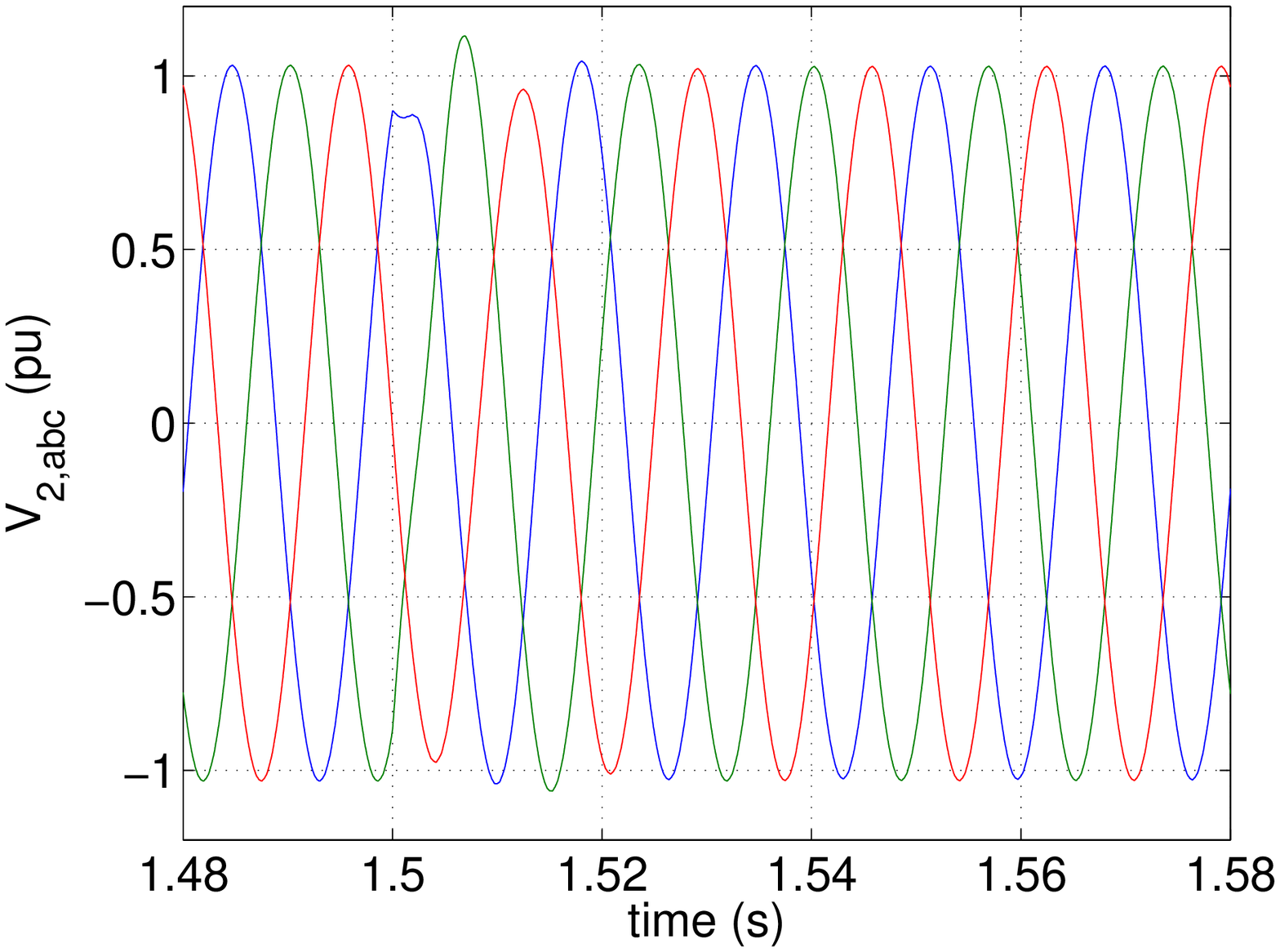}
                        \caption{Three-phase voltage at $PCC_2$.}
                        \label{fig:Rvarabcpcc2}
                      \end{subfigure}
                      \begin{subfigure}[!htb]{0.48\textwidth}
                        \centering
                        \includegraphics[width=1\textwidth, height=120pt]{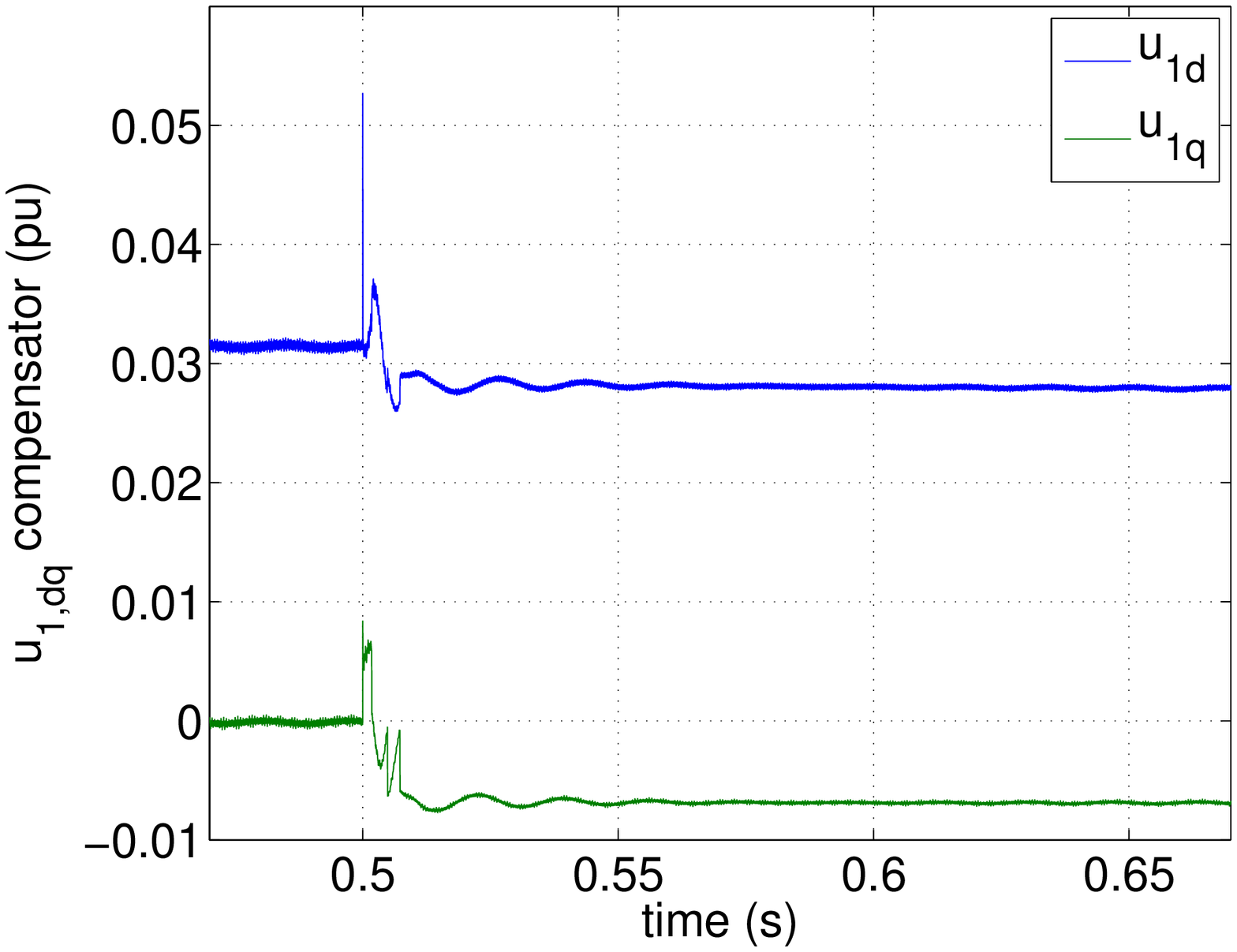}
                        \caption{Input signal $\subss{\tilde{u}}{1}(t)$ computed by compensator $\subss{N}{1}$.}
                        \label{fig:Rvarcompensator1}
                      \end{subfigure}
                      \begin{subfigure}[!htb]{0.48\textwidth}
                        \centering
                        \includegraphics[width=1\textwidth, height=120pt]{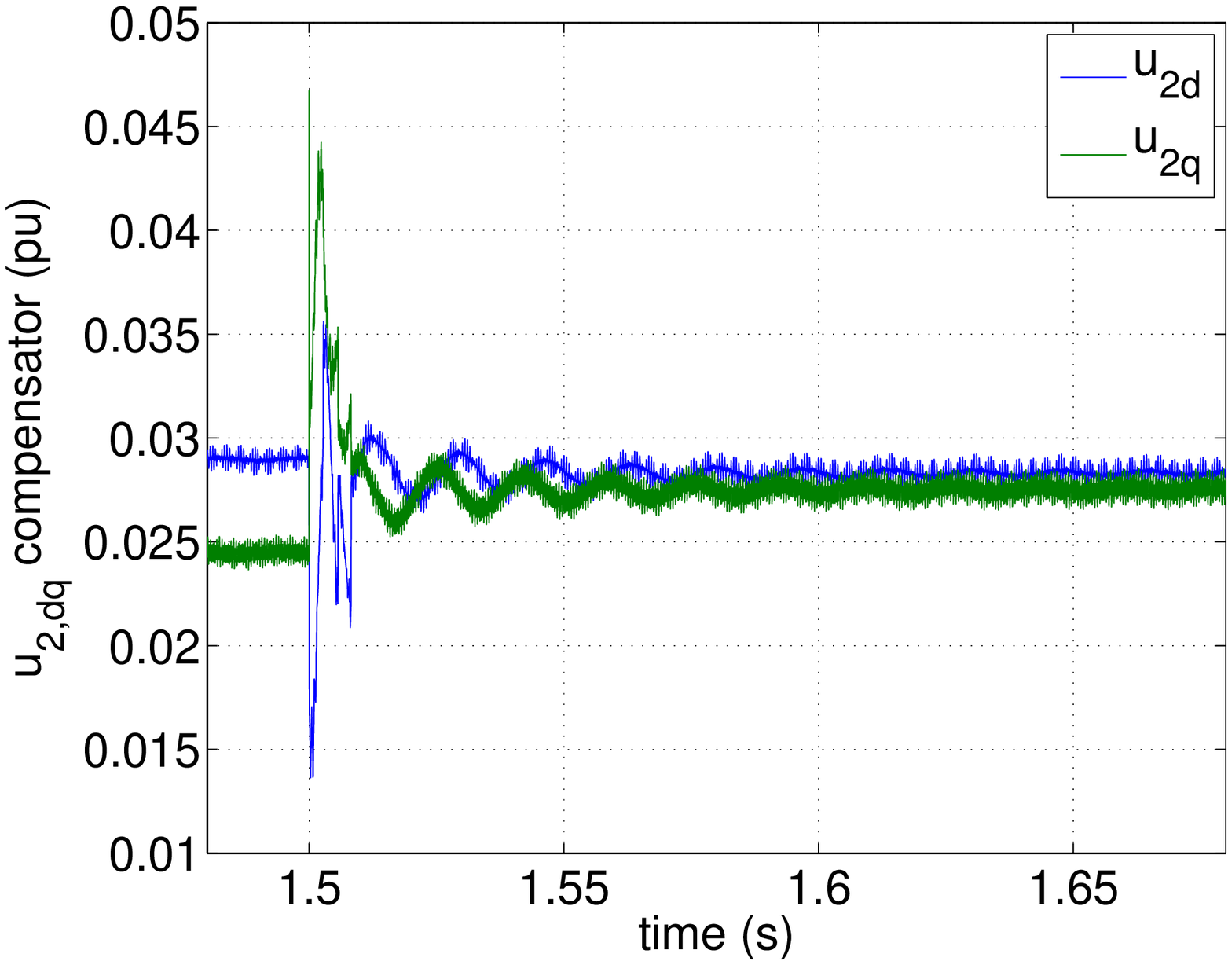}
                        \caption{Input signal $\subss{\tilde{u}}{2}(t)$ computed by compensator $\subss{N}{2}$.}
                        \label{fig:Rvarcompensator2}
                      \end{subfigure}
                      \caption{Robust performance evaluation of the microgrid against unknown load dynamics.}
                      \label{fig:Rvar}
                    \end{figure}

               \subsubsection{Impact of a nonlinear load}
                    In this test, we study the performance of our controllers in presence of a highly nonlinear load. Voltage references are initially set to $V_{d,ref}=0.8$ pu and $V_{q,ref}=0.6$ pu for both DGUs. At the beginning of the simulation, we connect at $PCC_1$ and $PCC_2$ the RL load described in Section \ref{sec:scenario1voltagetrack1}. At $t=0.5$ s, the load connected at $PCC_2$ is suddenly replaced by a three-phase six-pulse diode rectifier. The rectifier produces a DC output voltage that feeds a purely resistive load with $R=120\,\Omega$. We highlight that this is a standard test for assessing robustness of microgrid operations to nonlinearities (see, for example, Section VI.C in \cite{Babazadeh2013} and \cite{IEEE2009}).\\
                    Simulations are shown in Figure \ref{fig:NL}. In particular, Figure \ref{fig:NLdq} shows the \emph{dq} components of the load voltage at $PCC_2$ which confirm the good tracking performance in spite of the inclusion of the rectifier. From Figure \ref{fig:NLabc}, one can notice that, except for short transients, local controllers successfully stabilize the voltage. \\
                    Figure \ref{fig:NLTHD} provides a plot of the Total Harmonic Distortion (THD) (expressed in $\%$) of load phase-to-phase voltage at $PCC_2$. We note that, after the connection of the rectifier, the THD value grows. However, the average value of THD after $t=0.5$s is equal to $4\%$ which is below the maximum limit ($5\%$) recommended by IEEE standards in \cite{IEEE2009}. Considering that the rectifier input currents are highly distorted, as shown in Figure \ref{fig:NLabcI}, the control architecture ensures that the load is fed with high-quality voltages. 
                    \begin{figure}[!htb]
                      \centering
                      \begin{subfigure}[!htb]{0.45\textwidth}
                        \centering
                        \includegraphics[width=1\textwidth, height=120pt]{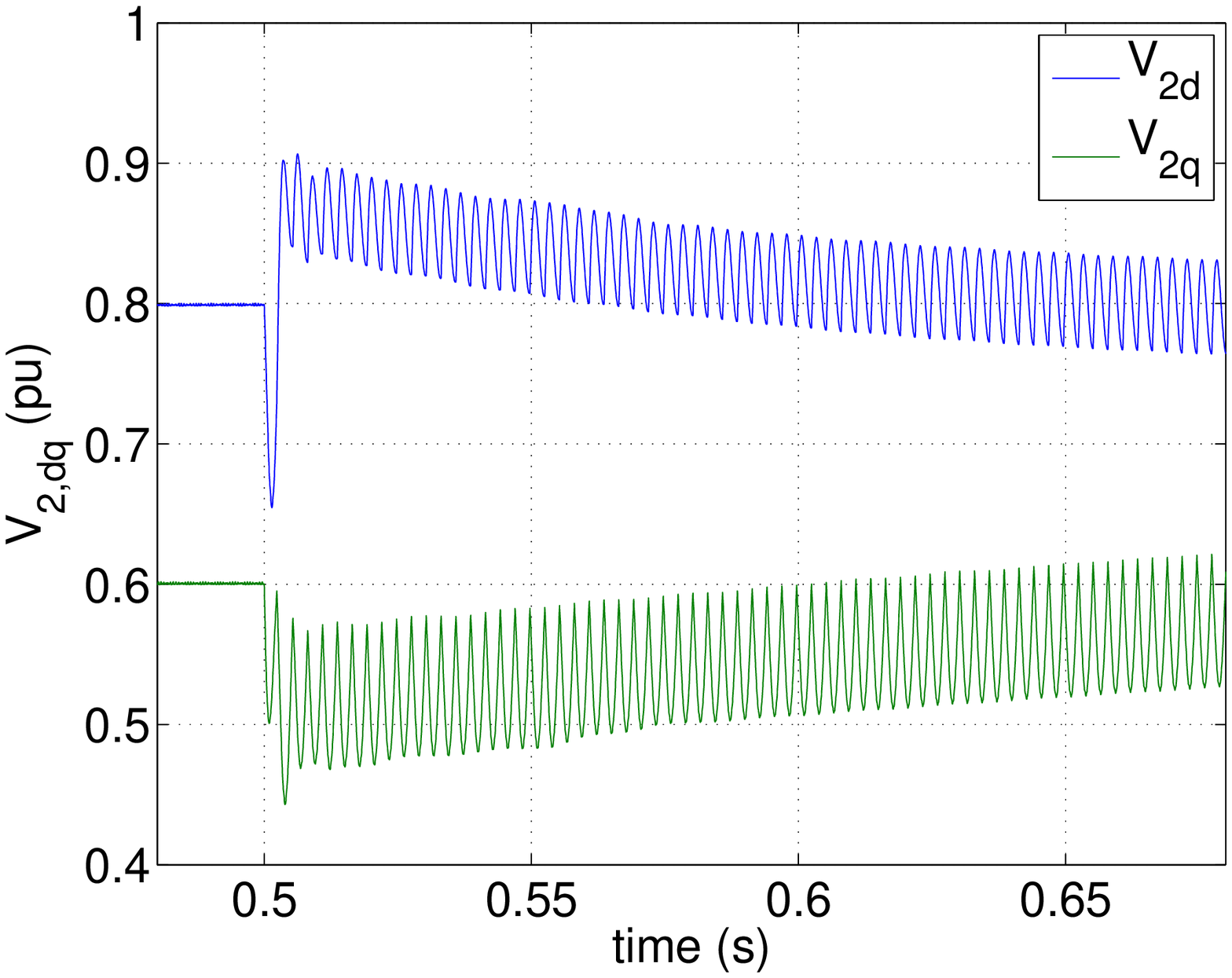}
                        \caption{\emph{d} and \emph{q} components of the voltage at $PCC_2$.}
                        \label{fig:NLdq}
                      \end{subfigure}
                      \begin{subfigure}[!htb]{0.45\textwidth}
                        \centering
                        \includegraphics[width=1\textwidth, height=120pt]{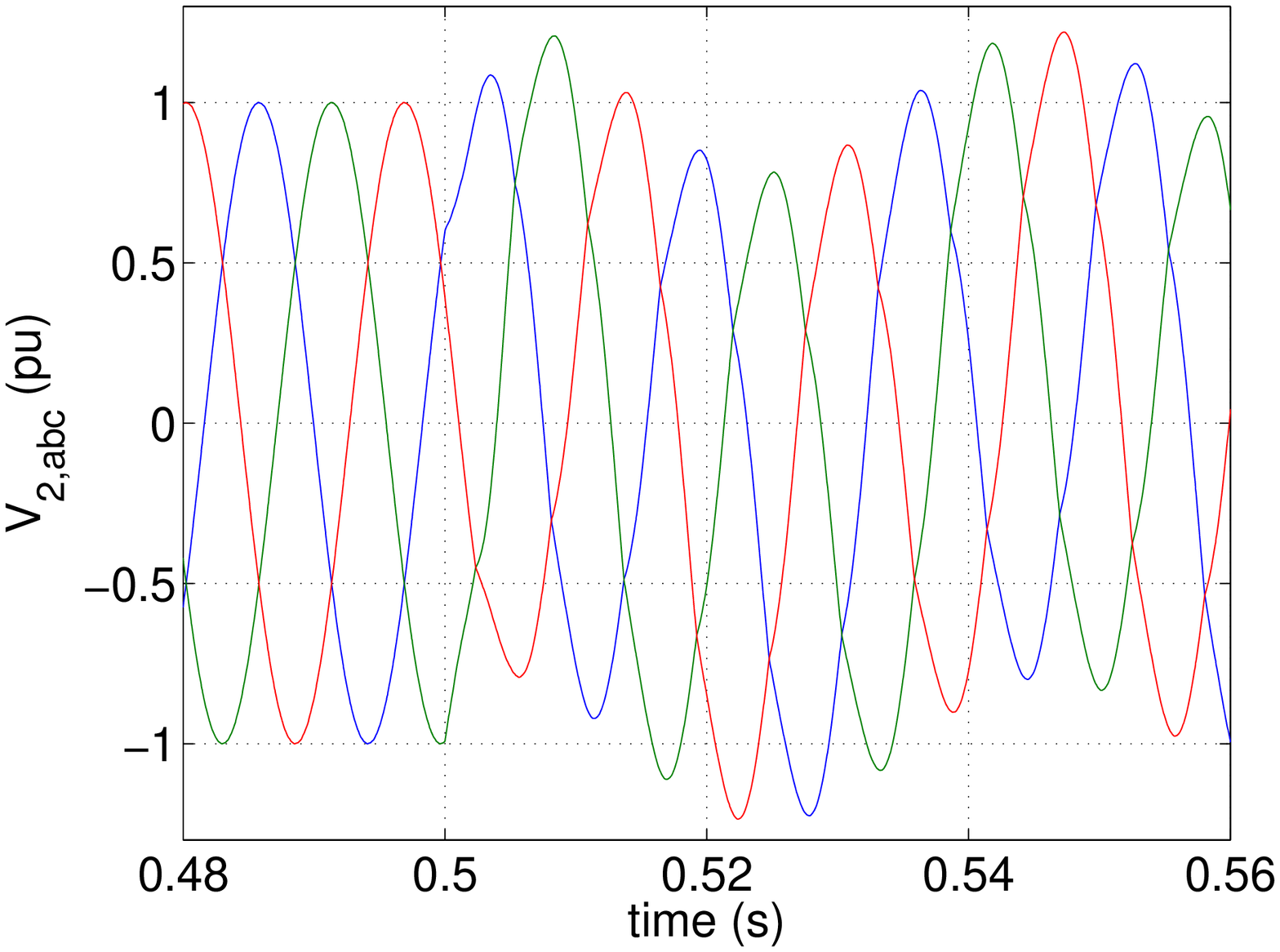}
                        \caption{Three-phase voltage at $PCC_2$.}
                        \label{fig:NLabc}
                      \end{subfigure}
                      \begin{subfigure}[!htb]{0.45\textwidth}
                        \centering
                        \includegraphics[width=1\textwidth, height=120pt]{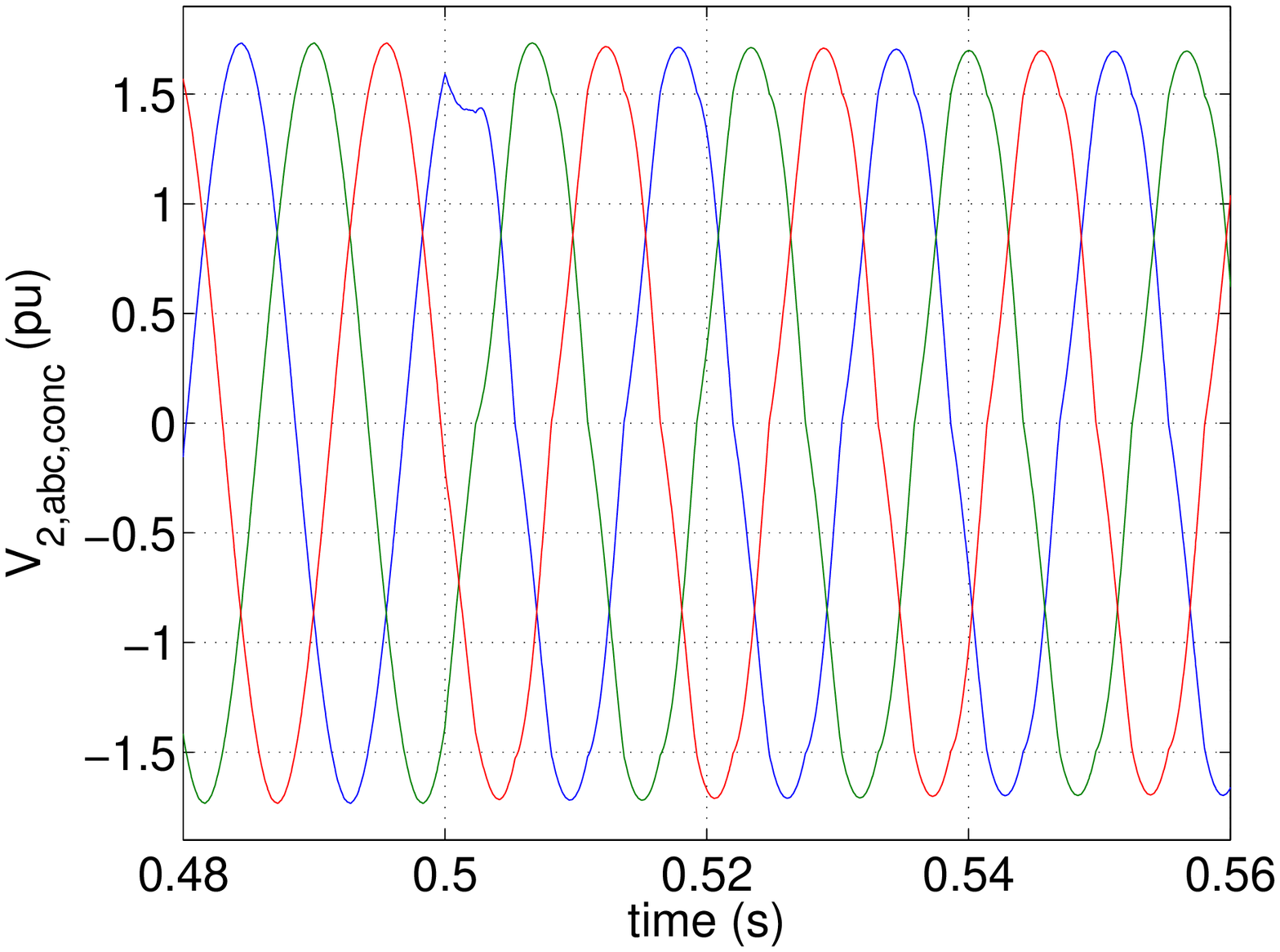}
                        \caption{Three-phase phase-to-phase voltage at $PCC_2$.}
                        \label{fig:NLabcconc}
                      \end{subfigure}
                      \begin{subfigure}[!htb]{0.45\textwidth}
                        \centering
                        \includegraphics[width=1\textwidth, height=120pt]{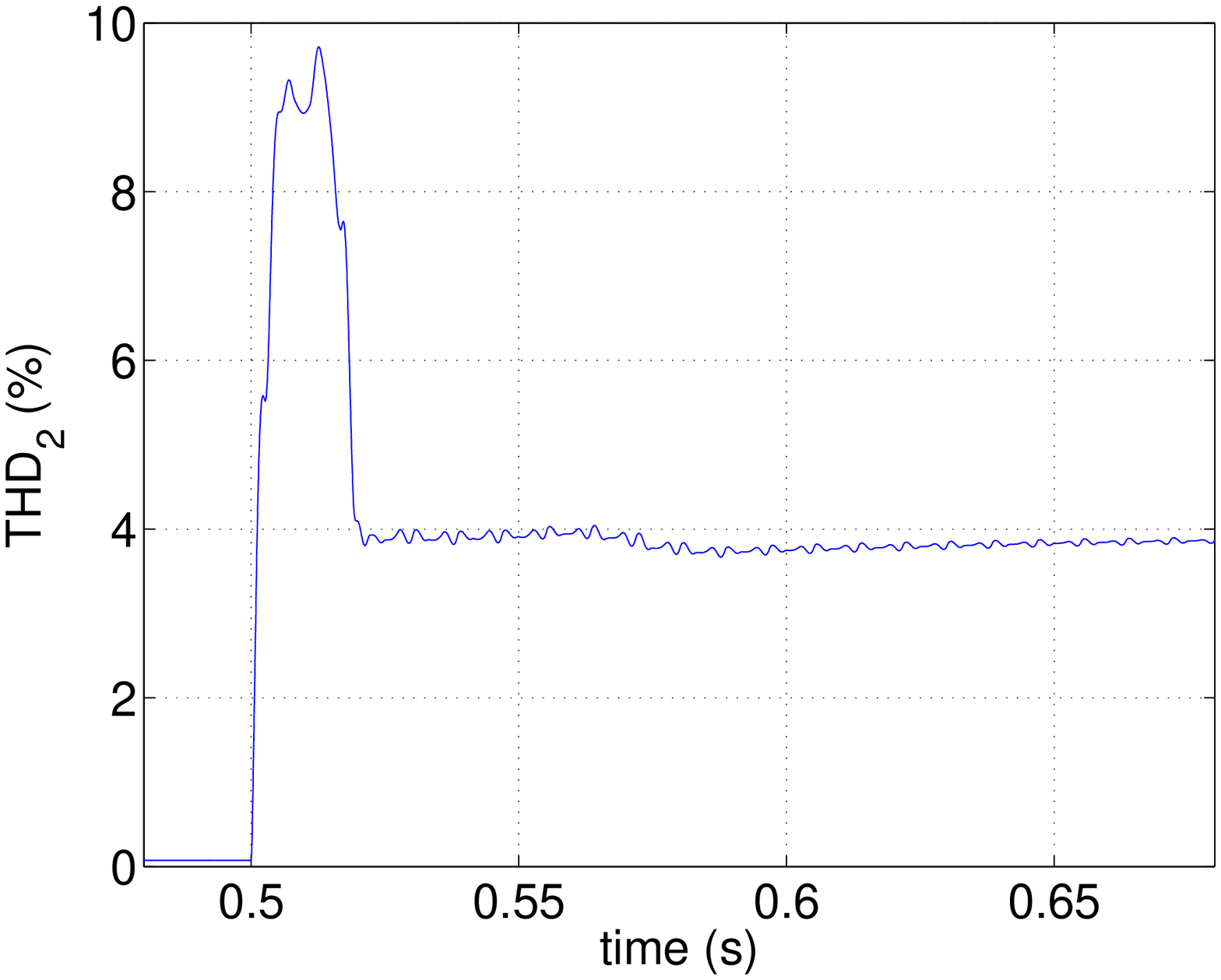}
                        \caption{THD of the voltage at $PCC_2$ in percentage.}
                        \label{fig:NLTHD}
                      \end{subfigure}
                      \begin{subfigure}[!htb]{0.45\textwidth}
                        \centering
                        \includegraphics[width=1\textwidth, height=120pt]{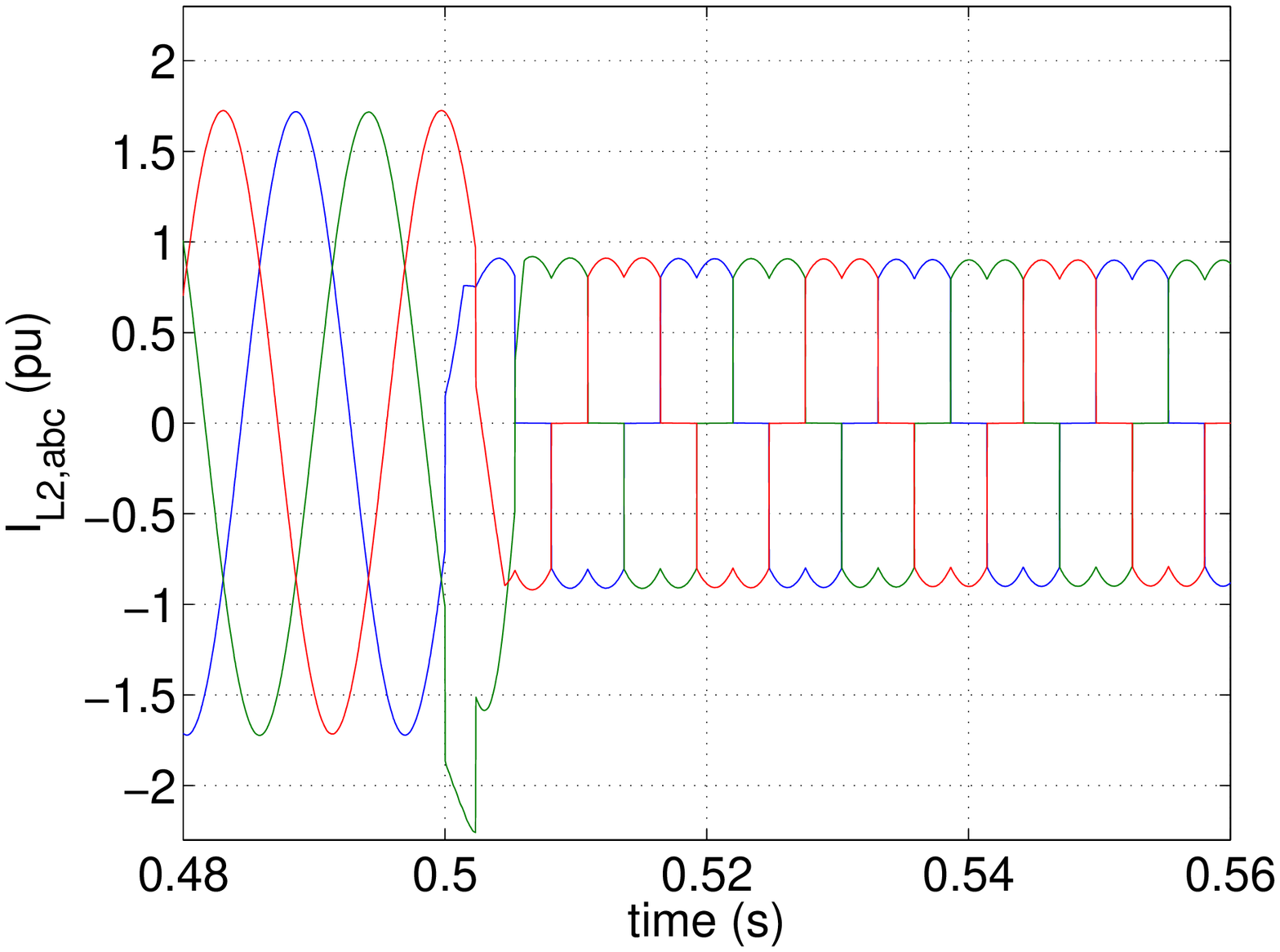}
                        \caption{Instantaneous load current $I_{L2,abc}$.}
                        \label{fig:NLabcI}
                      \end{subfigure}
                      \caption{Performance of PnP decentralized voltage control in presence of a highly nonlinear load.}
                      \label{fig:NL}
                    \end{figure}

	       \subsubsection{Performance under unbalanced load conditions}
                    In this test, we investigate the performance of PnP controller in presence of unbalanced loads. Voltage references are initially set to $V_{d,ref}=0.6$ pu and $V_{q,ref}=0.8$ pu for both DGUs. Moreover, the nominal RL load described in Section \ref{sec:scenario1voltagetrack1} is connected at $PCC_1$ and $PCC_2$. At $t=0.5$ s the RL load parameters at $PCC_1$ are changed to the values given in Table \ref{tbl:parunbload}, so that the load of DGU 1 becomes highly unbalanced.
                    \begin{table}[!htb]                 
                      \centering
                      \begin{tabular}{*{4}{c}}
                        \toprule
                        & Phase a & Phase b & Phase c \\
                        \midrule
                        R ($\Omega$) & 76 & 228  & 456 \\
                        L (mH) & 111.9 & 123.9 & 111.9\\
                        \bottomrule
                      \end{tabular}
                      \caption{Unbalanced load parameters}	
                      \label{tbl:parunbload}
                    \end{table}
                    
                    Figure \ref{fig:UnbaldqV} shows the \emph{d} and \emph{q} components of the load voltage at $PCC_1$ before and after unbalancing. We note that tracking of the reference signals is still guaranteed in spite of load changes. Moreover, instantaneous load voltages, shown in Figure \ref{fig:UnbalabcV}, confirm successful regulation of the output waveforms. Figure \ref{fig:UnbalILabc} shows the load current $I_{L1}$ provided by DGU 1 in the abc frame. One can notice how the controller induces major changes in the VSC behaviour in order to avoid spoiling the balance of load voltage at $PCC_1$.\\
                    To evaluate the voltage imbalance at $PCC_1$, we calculate the ratio $V_{N}/V_{P}$ (expressed in $\%$), where $V_{N}$ and $V_{P}$ are the magnitudes of the negative- and positive-sequence components of the phase-to-phase voltage. The time evolution of this ratio is represented in Figure \ref{fig:UnbalVnVp}. We notice that it is always below 0.5$\%$ which is less than the maximum permissible value (3$\%$) defined by IEEE in \cite{IEEE2009}. We highlight that a similar experiment has been executed in \cite{Babazadeh2013}. 
                    \begin{figure}[!htb]
                      \centering
                      \begin{subfigure}[!htb]{0.45\textwidth}
                        \centering
                        \includegraphics[width=1\textwidth, height=115pt]{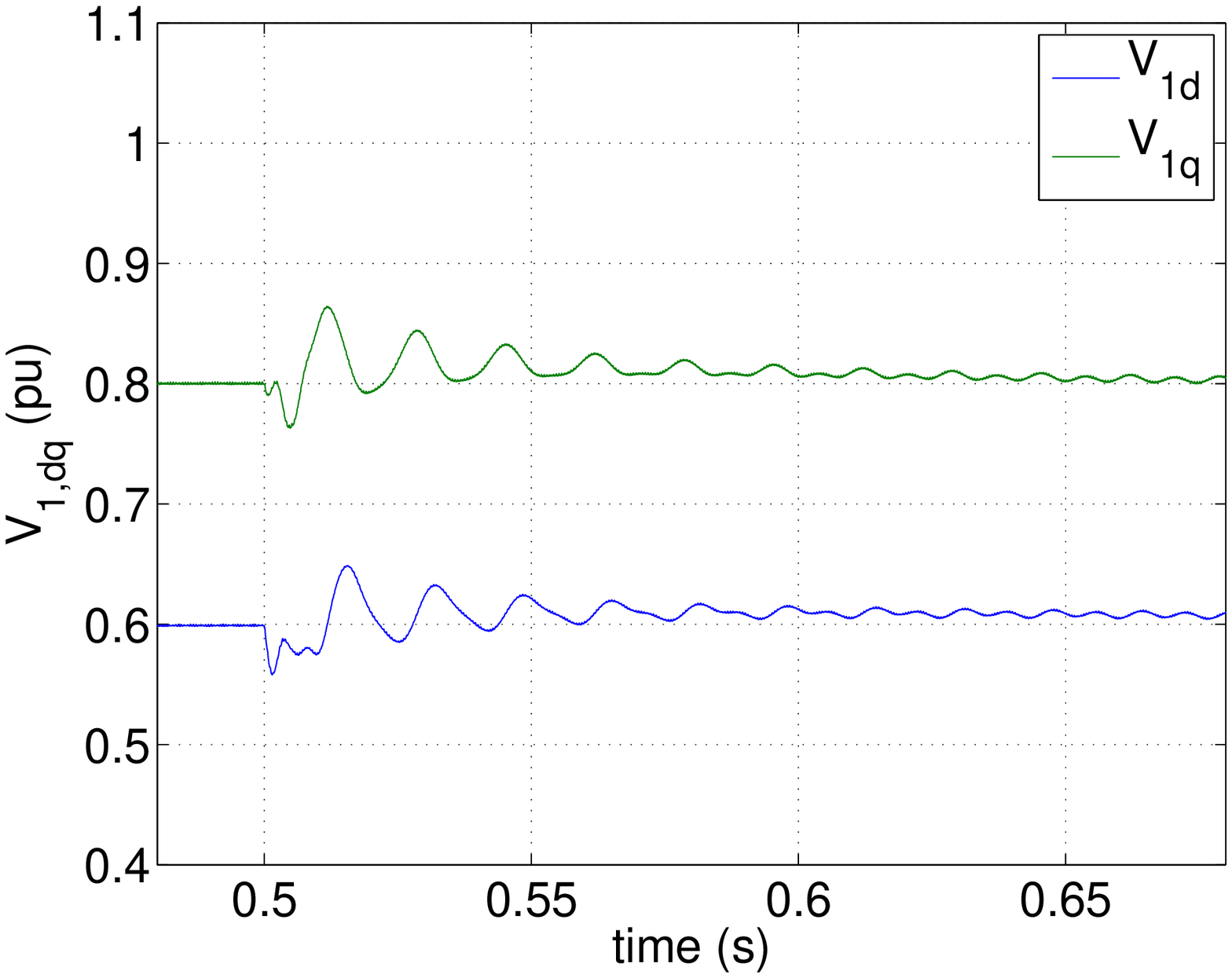}
                        \caption{\emph{d} and \emph{q} components of the voltage at $PCC_1$.}
                        \label{fig:UnbaldqV}
                      \end{subfigure}
                      \begin{subfigure}[!htb]{0.45\textwidth}
                        \centering
                        \includegraphics[width=1\textwidth, height=115pt]{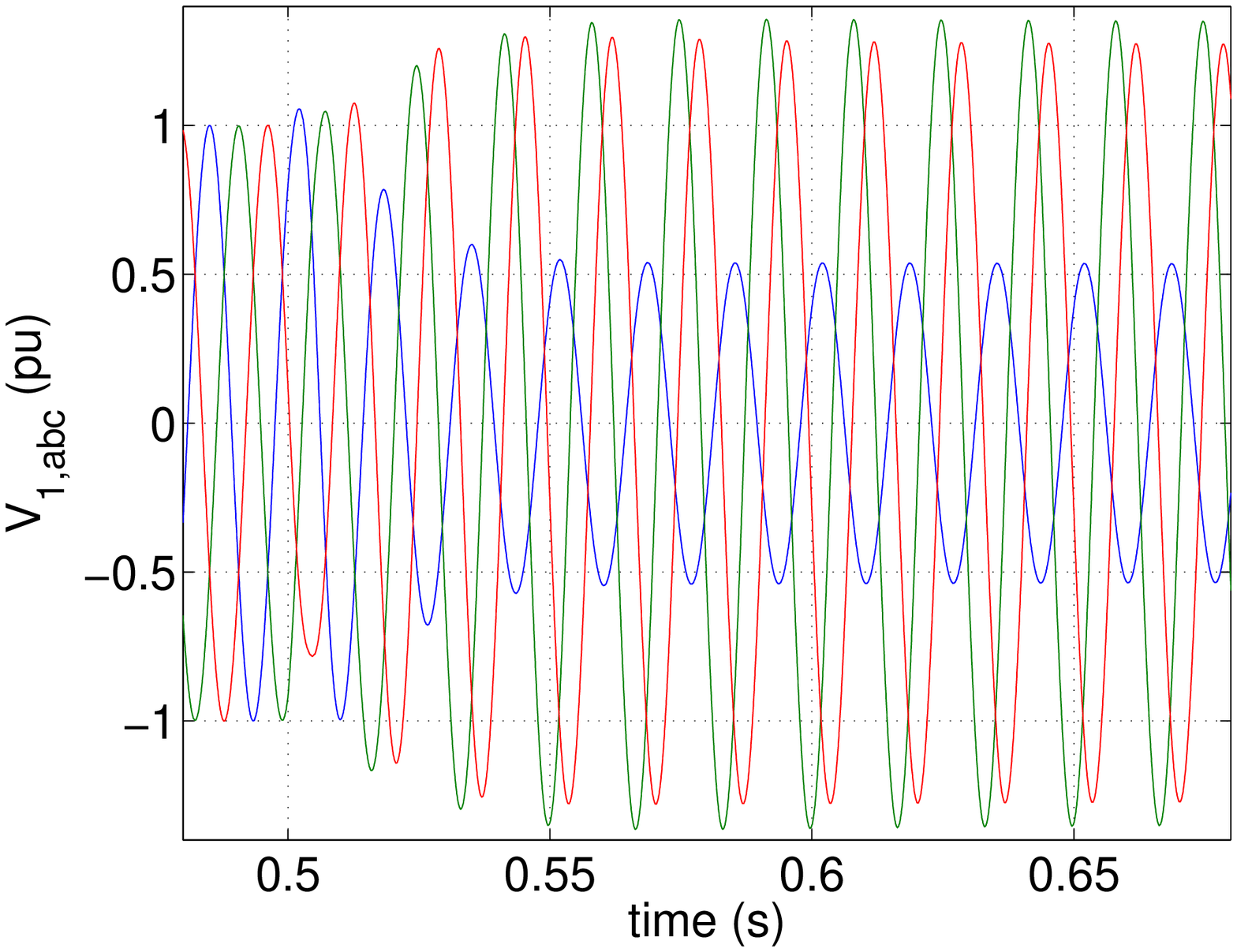}
                        \caption{Three-phase voltage at $PCC_1$.}
                        \label{fig:UnbalabcV}
                      \end{subfigure}
                      \begin{subfigure}[!htb]{0.45\textwidth}
                        \centering
                        \includegraphics[width=1\textwidth, height=115pt]{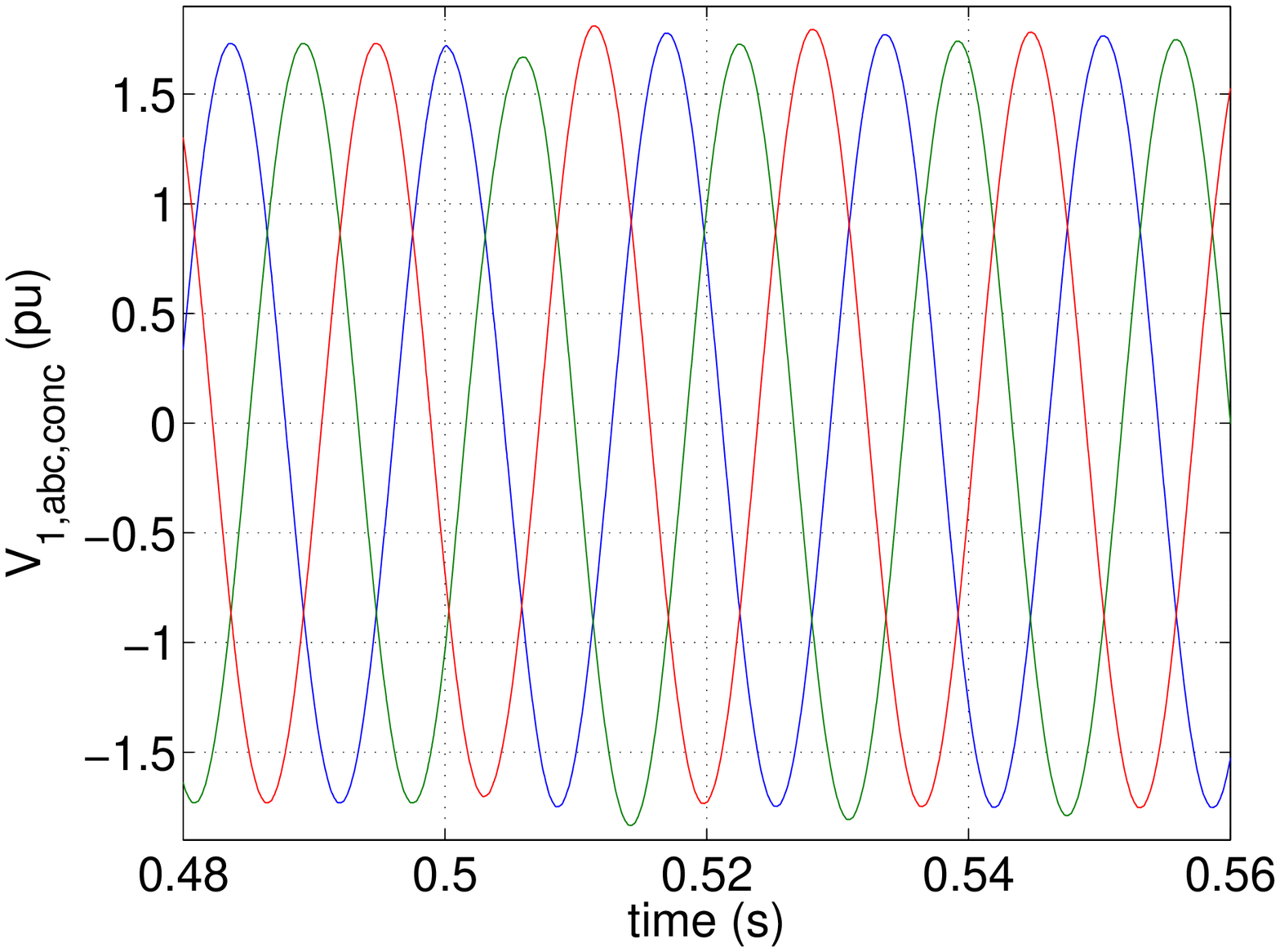}
                        \caption{Three-phase phase-to-phase voltage at $PCC_1$.}
                        \label{fig:UnbalabcVconc}
                      \end{subfigure}
                      \begin{subfigure}[!htb]{0.45\textwidth}
                        \centering
                        \includegraphics[width=1\textwidth, height=115pt]{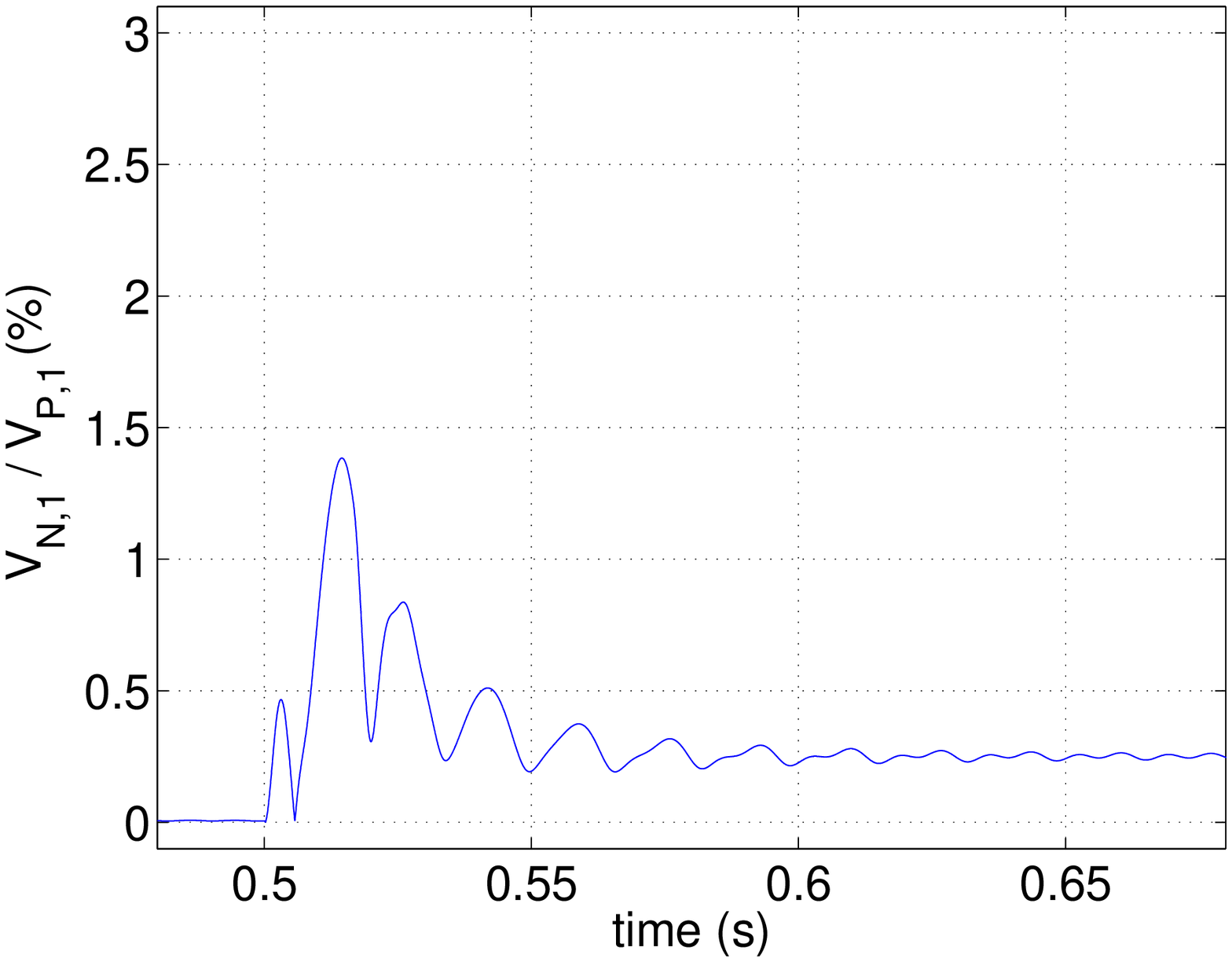}
                        \caption{Voltage imbalance ratio ($V_{N}/V_{P}$) in percentage.}
                        \label{fig:UnbalVnVp}
                      \end{subfigure}
                      \begin{subfigure}[!htb]{0.45\textwidth}
                        \centering
                        \includegraphics[width=1\textwidth, height=115pt]{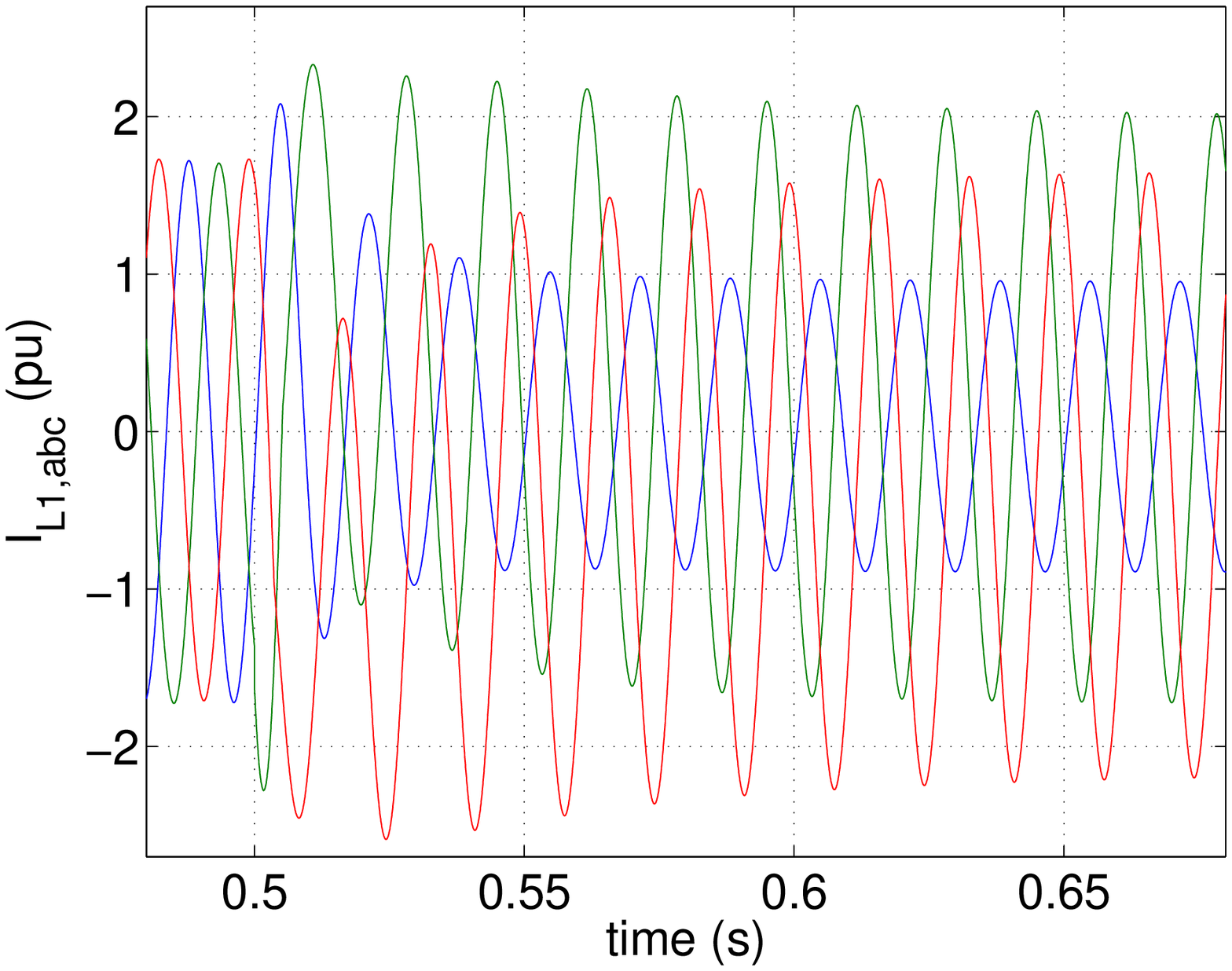}
                        \caption{Instantaneous load current $I_{L1}$.}
                        \label{fig:UnbalILabc}
                      \end{subfigure}                      
                      \caption{Performance of the PnP decentralized voltage control in presence of an unbalanced load.}
                      \label{fig:Unbal}                 
                    \end{figure}

               \subsubsection{Performance under unbalanced load conditions and robustness to different capacitance values}
                    \label{sec:differentC}
                    In this last test, we investigate the performance of PnP controller in presence of unbalanced loads and different capacitances at each PCCs. The capacitances are $C_{t1}=0.9C_t$ and $C_{t2}=1.1C_t$, but we designed controllers assuming the common capacitance value $C_t$ for each DGU. In other words, we aim at testing robustness of our control scheme against deviations from Assumption \ref{ass:ctrl}-(\ref{assum:line}). Voltage references are initially set to $V_{d,ref}=0.6$ pu and $V_{q,ref}=0.8$ pu for both DGUs. Moreover, the nominal RL load described in Section \ref{sec:scenario1voltagetrack1} is connected at $PCC_1$ and $PCC_2$. At $t=0.5$ s the RL load parameters at $PCC_1$ are changed to the values given in Table \ref{tbl:parunbload}, so that the load of DGU 1 becomes highly unbalanced. Figure \ref{fig:UnbaldqV_different_C} shows the \emph{d} and \emph{q} components of the load voltage at $PCC_1$ before and after unbalancing. We note that tracking of the reference signals is still guaranteed in spite of load changes. Moreover, instantaneous load voltages, shown in Figure \ref{fig:UnbalabcV_different_C}, confirm successful regulation of the output waveforms. Figure \ref{fig:UnbalILabc_different_C} shows the load current $I_{L1}$ provided by DGU 1 in the abc frame. One can notice how the controller induces major changes in the VSC behaviour in order to avoid spoiling the balance of load voltage at $PCC_1$. This test shows that, even if the controllers are designed considering capacitance $C_t$, thanks to the feedback, the integrators in the control loop guarantee robustness of stability with respect small deviations of capacitances $C_{ti}$ from $C_s$.\\ 
                    To evaluate the voltage imbalance at $PCC_1$, we calculate the ratio $V_{N}/V_{P}$ (expressed in $\%$), where $V_{N}$ and $V_{P}$ are the magnitudes of the negative- and positive-sequence components of the phase-to-phase voltage. The time evolution of this ratio is represented in Figure \ref{fig:UnbalVnVp_different_C}. We notice that it is always below 0.5$\%$ which is less than the maximum permissible value (3$\%$) defined by IEEE in \cite{IEEE2009}. 
                    \begin{figure}[!htb]
                      \centering
                      \begin{subfigure}[!htb]{0.45\textwidth}
                        \centering
                        \includegraphics[width=1\textwidth, height=115pt]{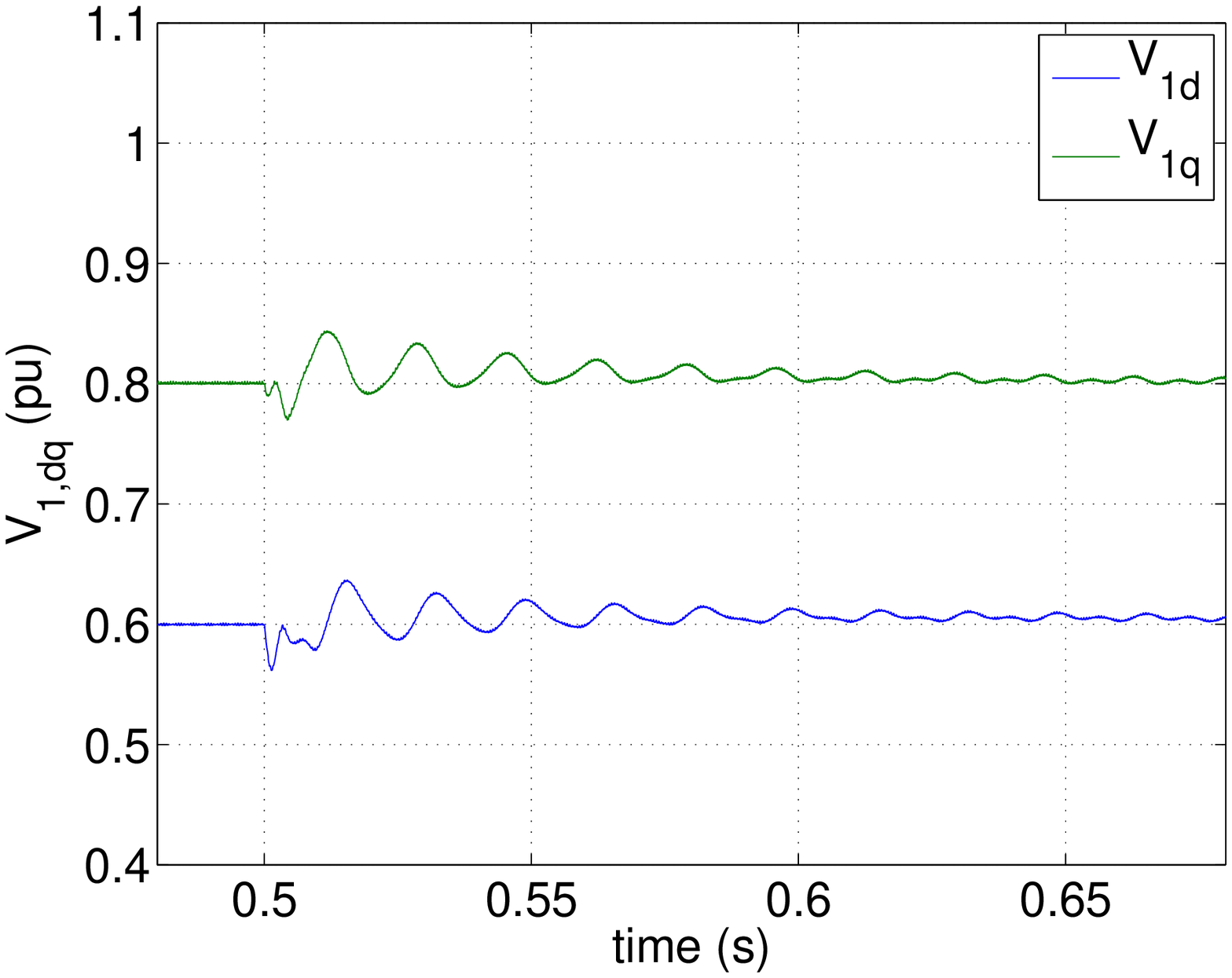}
                        \caption{\emph{d} and \emph{q} components of the voltage at $PCC_1$.}
                        \label{fig:UnbaldqV_different_C}
                      \end{subfigure}
                      \begin{subfigure}[!htb]{0.45\textwidth}
                        \centering
                        \includegraphics[width=1\textwidth, height=115pt]{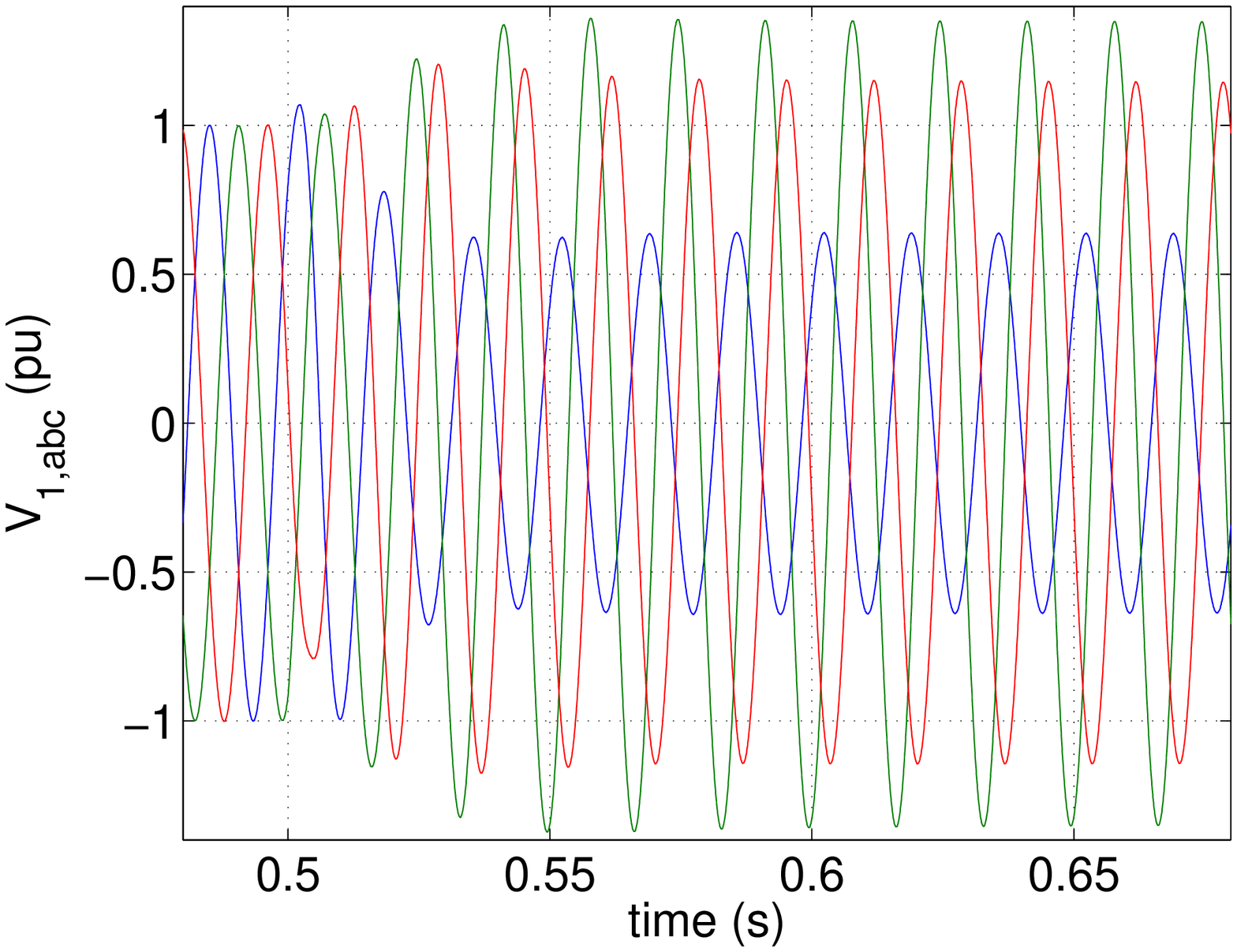}
                        \caption{Three-phase voltage at $PCC_1$.}
                        \label{fig:UnbalabcV_different_C}
                      \end{subfigure}
                      \begin{subfigure}[!htb]{0.45\textwidth}
                        \centering
                        \includegraphics[width=1\textwidth, height=115pt]{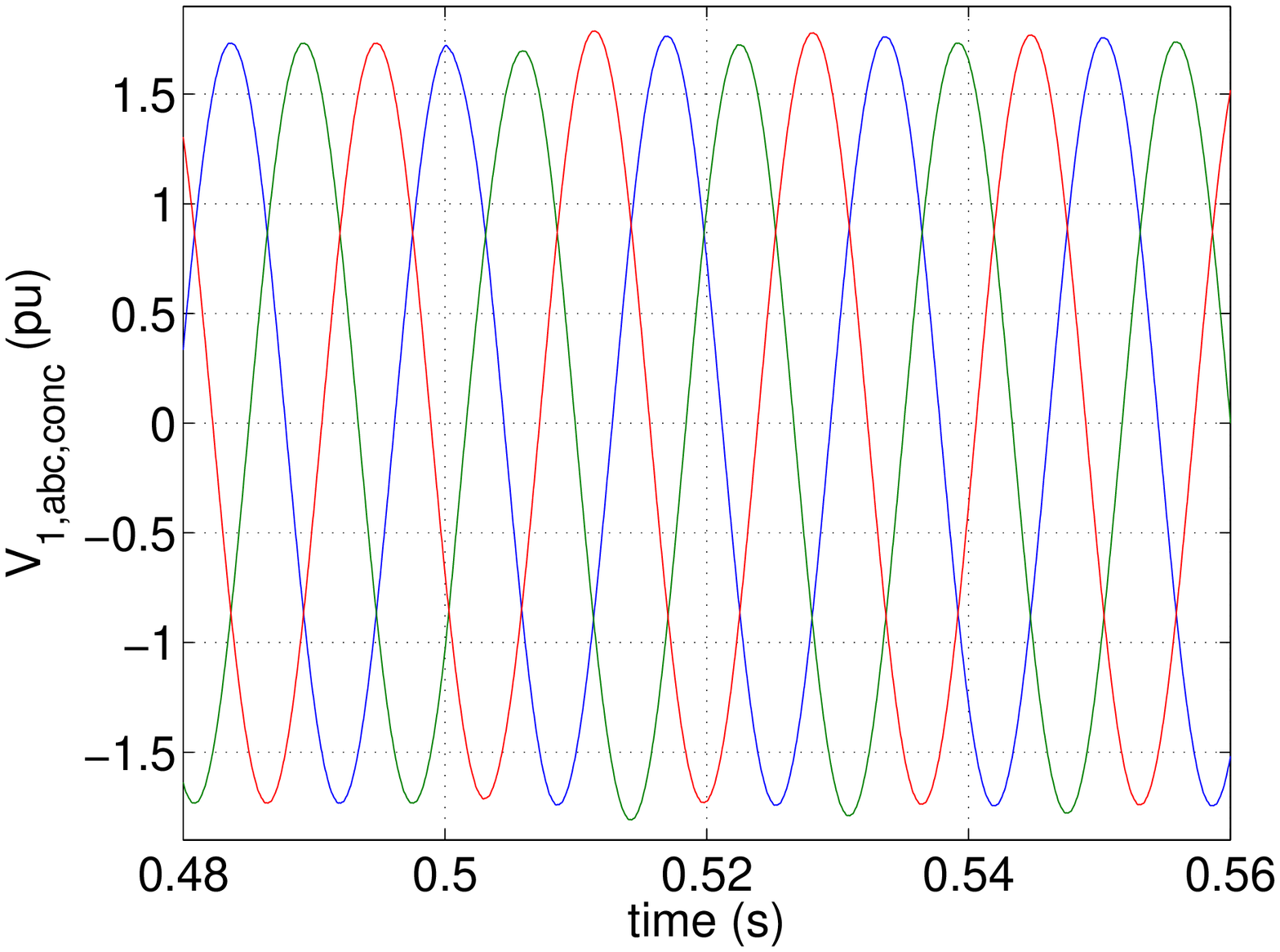}
                        \caption{Three-phase phase-to-phase voltage at $PCC_1$.}
                        \label{fig:UnbalabcVconc_different_C}
                      \end{subfigure}
                      \begin{subfigure}[!htb]{0.45\textwidth}
                        \centering
                        \includegraphics[width=1\textwidth, height=115pt]{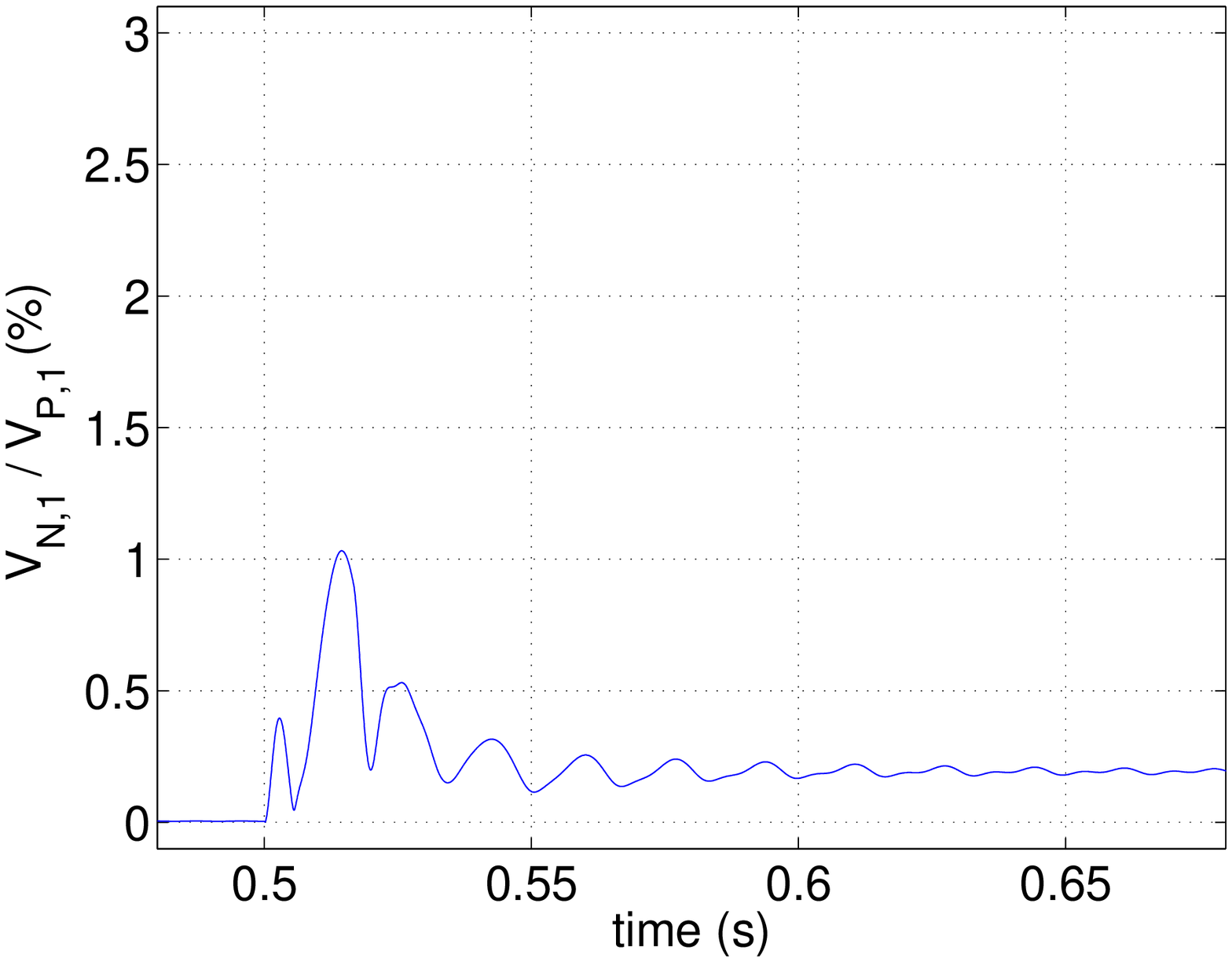}
                        \caption{Voltage imbalance ratio ($V_{N}/V_{P}$) in percentage.}
                        \label{fig:UnbalVnVp_different_C}
                      \end{subfigure}
                      \begin{subfigure}[!htb]{0.45\textwidth}
                        \centering
                        \includegraphics[width=1\textwidth, height=115pt]{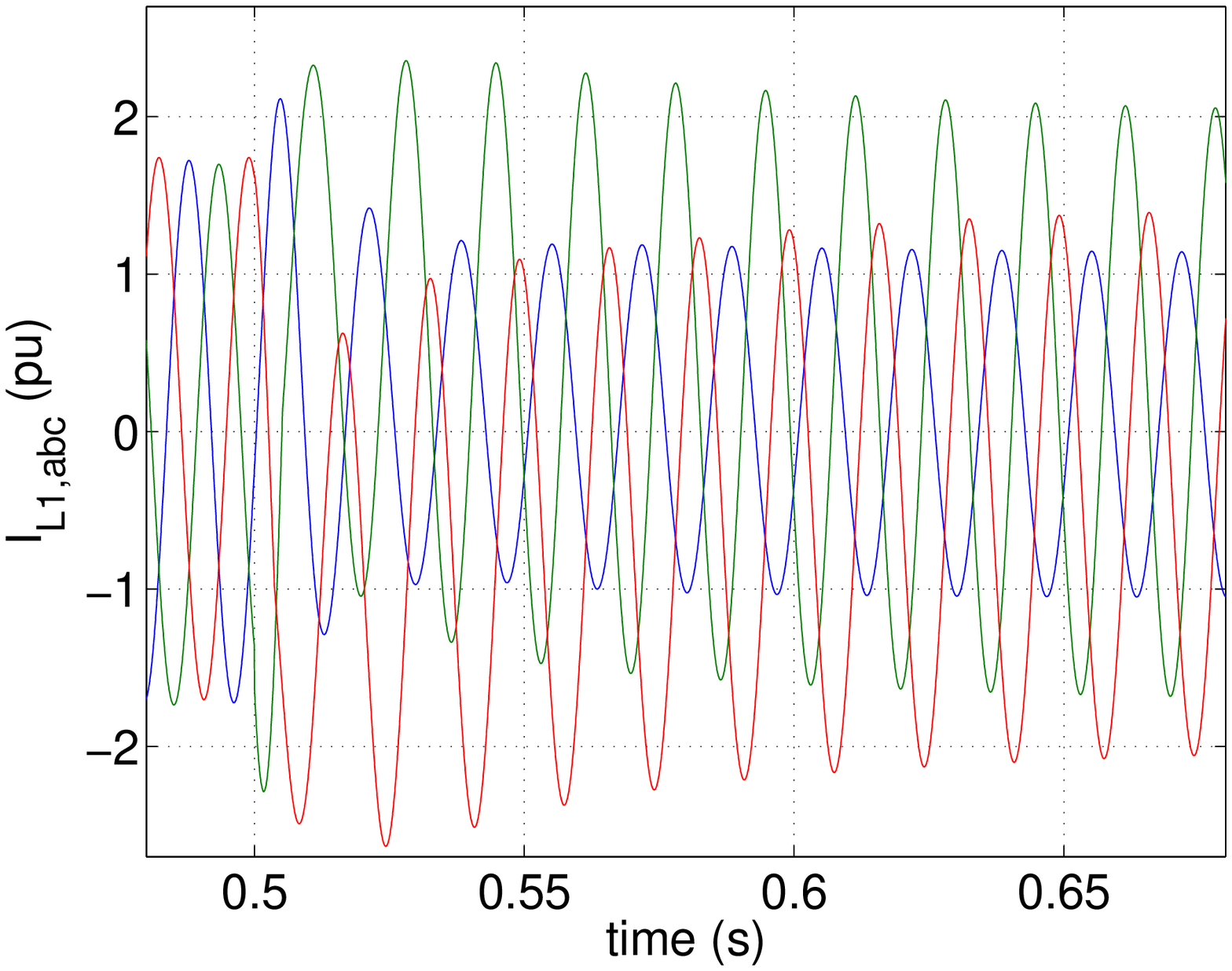}
                        \caption{Instantaneous load current $I_{L1}$.}
                        \label{fig:UnbalILabc_different_C}
                      \end{subfigure}                      
                      \caption{Performance of the PnP decentralized voltage control in presence of an unbalanced load.}
                      \label{fig:Unbal_different_C}                 
                    \end{figure}

          \clearpage

          \subsection{Scenario 2}
               \label{sec:scenario2}
               In this second scenario, we consider a meshed ImG composed by ten DGUs. The topology of the network is shown in Figure \ref{fig:10areas}. Differently from Scenario 1, some DGUs have more than one neighbour and hence the disturbances that will influence their dynamics will be greater. Furthermore, it is also present a loop that further complicates voltage regulation. We highlight that, to our knowledge, the control of loop-connected DGUs in a meshed-topology has been never attempted before in the literature.
               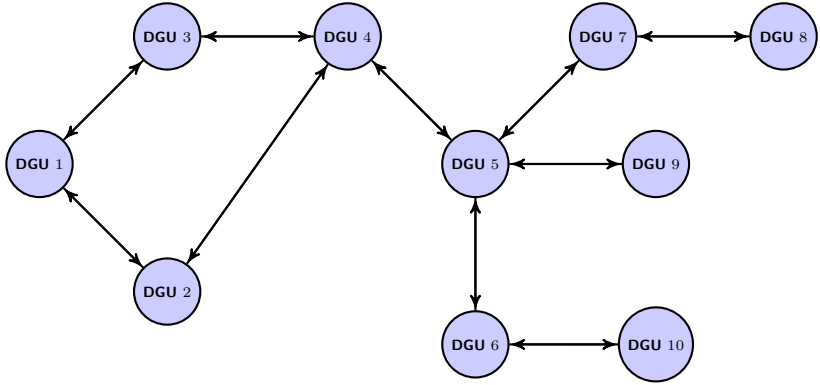
\begin{figure}[!htb]
                 \centering
                 \begin{tikzpicture}[scale=0.8,transform shape,->,>=stealth',shorten >=1pt,auto,node distance=3cm, thick,main node/.style={circle,fill=blue!20,draw,font=\sffamily\bfseries}]
					  
  \node[main node] (1) {\scriptsize{DGU $1$}};
  \node[main node] (2) [below right of=1] {\scriptsize{DGU $2$}};
  \node[main node] (3) [above right of=1] {\scriptsize{DGU $3$}};
  \node[main node] (4) [right of=3] {\scriptsize{DGU $4$}};
  \node[main node] (5) [below right of=4] {\scriptsize{DGU $5$}};
  \node[main node] (6) [below of=5] {\scriptsize{DGU $6$}};
  \node[main node] (7) [above right of=5] {\scriptsize{DGU $7$}};
  \node[main node] (8) [right of=7] {\scriptsize{DGU $8$}};
  \node[main node] (9) [right of=5] {\scriptsize{DGU $9$}};
  \node[main node] (10) [right of=6] {\scriptsize{DGU $10$}};
  
  \path[every node/.style={font=\sffamily\small}]
  (1) edge node [left] {} (3)
  (3) edge node [right] {} (1)

  (1) edge node [left] {} (2)
  (2) edge node [right] {} (1)
  
  (3) edge node [left] {} (4)
  (4) edge node [right] {} (3)
  
  (2) edge node [left] {} (4)
  (4) edge node [right] {} (2)
  
  (4) edge node [left] {} (5)
  (5) edge node [right] {} (4)
  
  (5) edge node [left] {} (6)
  (6) edge node [right] {} (5)
  
  (6) edge node [left] {} (10)
  (10) edge node [right] {} (6)
  
  (5) edge node [left] {} (9)
  (9) edge node [right] {} (5)
  
  (5) edge node [left] {} (7)
  (7) edge node [right] {} (5)
  
  (7) edge node [left] {} (8)
  (8) edge node [right] {} (7)
  ;
\end{tikzpicture}
                 \caption{Scheme of the microgrid composed of 10 DGUs. Arrows correspond to transmission lines.}
                 \label{fig:10areas}
               \end{figure}
               
               For all the subsystems $\subss{\hat{\Sigma}}{i}^{DGU}$, $i\in\DD=\{1,\dots,10\}$, we execute Algorithm \ref{alg:ctrl_design} in order to design controllers $\subss{\CC}{i}$ and compensators $\subss{\tilde{C}}{i}$ and $\subss{N}{i}$. We have chosen identical closed-loop transfer function $\subss{\tilde{F}}{i}(s)$, $i\in\DD$ equal to low-pass with 0 dB  DC gain and 1kHz bandwidth. Figure \ref{fig:closedLoop10Areas} shows that the closed-loop eigenvalues of the QSL microgrid (in red) are asymptotically stable. Figure \ref{fig:singularValuePreFilter10Areas} shows the singular value plots of the pre-filters $\subss{\tilde{C}}{i}$, $i\in\DD$ obtained solving Step \ref{enu:stepBalgCtrl} of Algorithm \ref{alg:ctrl_design}. With the addition of the pre-filters in the control loop of each DGUs, the bandwidth of the closed-loop transfer function $F(s)$ of the overall microgrid (in blue) is wider as shown in Figure \ref{fig:singularValueClosedLoop10Areas} (in green). Note that the closed-loop transfer function, with the addition of the pre-filters, coincides with the desired one $\subss{\tilde{F}}{i}(s)$. Finally, Figure \ref{fig:singularValueCompensator10Areas} shows the singular value plots of the disturbance compensators $\subss{N}{i}$ synthesized in Step \ref{enu:stepCalgCtrl} of Algorithm \ref{alg:ctrl_design}. Parameters of the microgrid are listed in Tables \ref{tbl:diffpar10}, \ref{tbl:linespar10} and \ref{tbl:commpar10} in Appendix \ref{sec:AppElectrPar}.
               \begin{figure}[!htb]
                 \centering
                 \begin{subfigure}[!htb]{0.48\textwidth}
                   \centering
                   \includegraphics[width=1\textwidth, height=130pt]{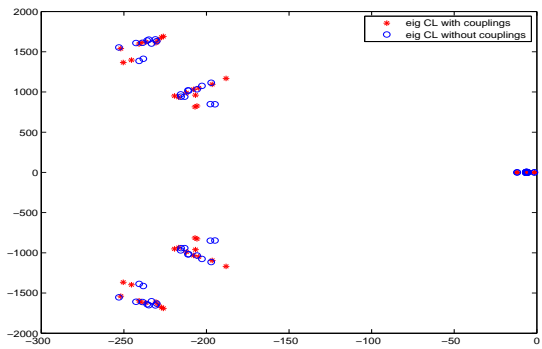}
                   \caption{Eigenvalues of the closed-loop (CL) QSL microgrid with (red) and without (blue) couplings.}
                   \label{fig:closedLoop10Areas}
                 \end{subfigure}
                 \begin{subfigure}[!htb]{0.48\textwidth}
                   \centering
                   \includegraphics[width=1\textwidth, height=130pt]{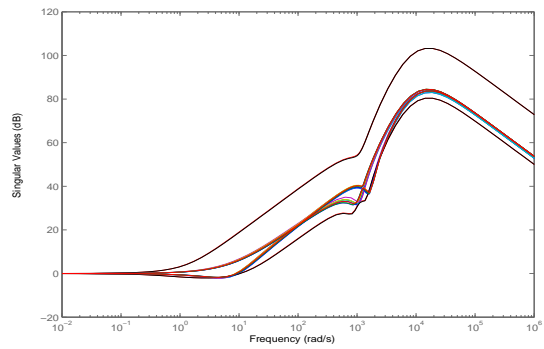}
                   \caption{Singular values of the pre-filters.}
                   \label{fig:singularValuePreFilter10Areas}
                 \end{subfigure}
                 \begin{subfigure}[!htb]{0.48\textwidth}
                   \centering
                   \includegraphics[width=1\textwidth, height=130pt]{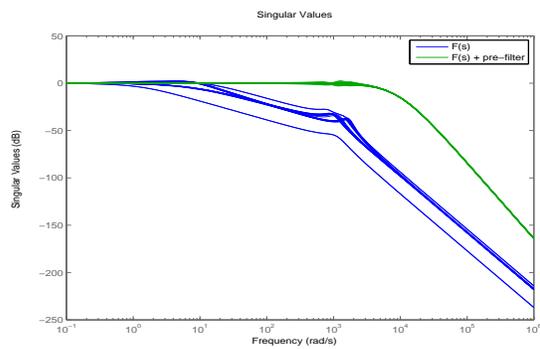}
                   \caption{Singular values of $F(s)$ with (green) and without (blue) pre-filters.}
                   \label{fig:singularValueClosedLoop10Areas}
                 \end{subfigure}
                 \begin{subfigure}[!htb]{0.48\textwidth}
                   \includegraphics[width=1\textwidth, height=130pt]{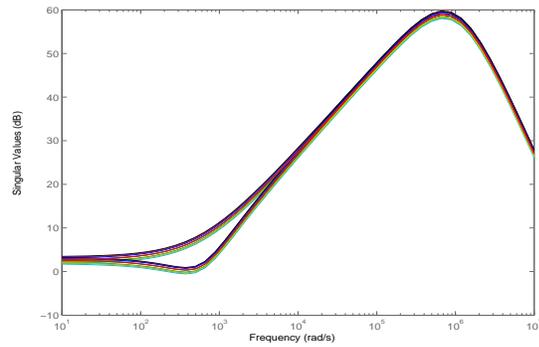}
                   \caption{Singular values of disturbances compensators.}
                   \label{fig:singularValueCompensator10Areas}
                 \end{subfigure}
                 \caption{Features of PnP controllers for Scenario 2.}
                 \label{fig:closedLoop10areas}
               \end{figure}

               \subsubsection{Voltage tracking properties}
                    \label{sec:voltagetrackScenario2}
                    In this test, we evaluate the tracking capabilities of PnP voltage controllers applied to the ImG of Scenario 2. At the $PCC$ of each DGU, we connect a different RL test load. Parameter values are given in Table \ref{tbl:parsim10} in Appendix \ref{sec:AppElectrPar} and they are kept constant during the simulation. The \emph{d} and \emph{q} components of the voltage at $PCC_{i}$, $i\in\DD$ are initially set at 1 pu and 0 pu, respectively. From $t=0.8$ s, the reference signals of each DGUs in a step-wise fashion. The final values and the time of these step changes are listed in Table \ref{tbl:parsim10} in Appendix \ref{sec:AppElectrPar}. Figure \ref{fig:trackdq10areas} shows the responses of all DGUs. We highlight that, in spite of a rather complex microgrid topology, PnP decentralized control ensures the tracking of the voltage references for all DGUs.                    
                    \begin{figure}[!htb]
                      \centering
                      \begin{subfigure}[!htb]{0.32\textwidth}
                        \centering
                        \includegraphics[width=1\textwidth]{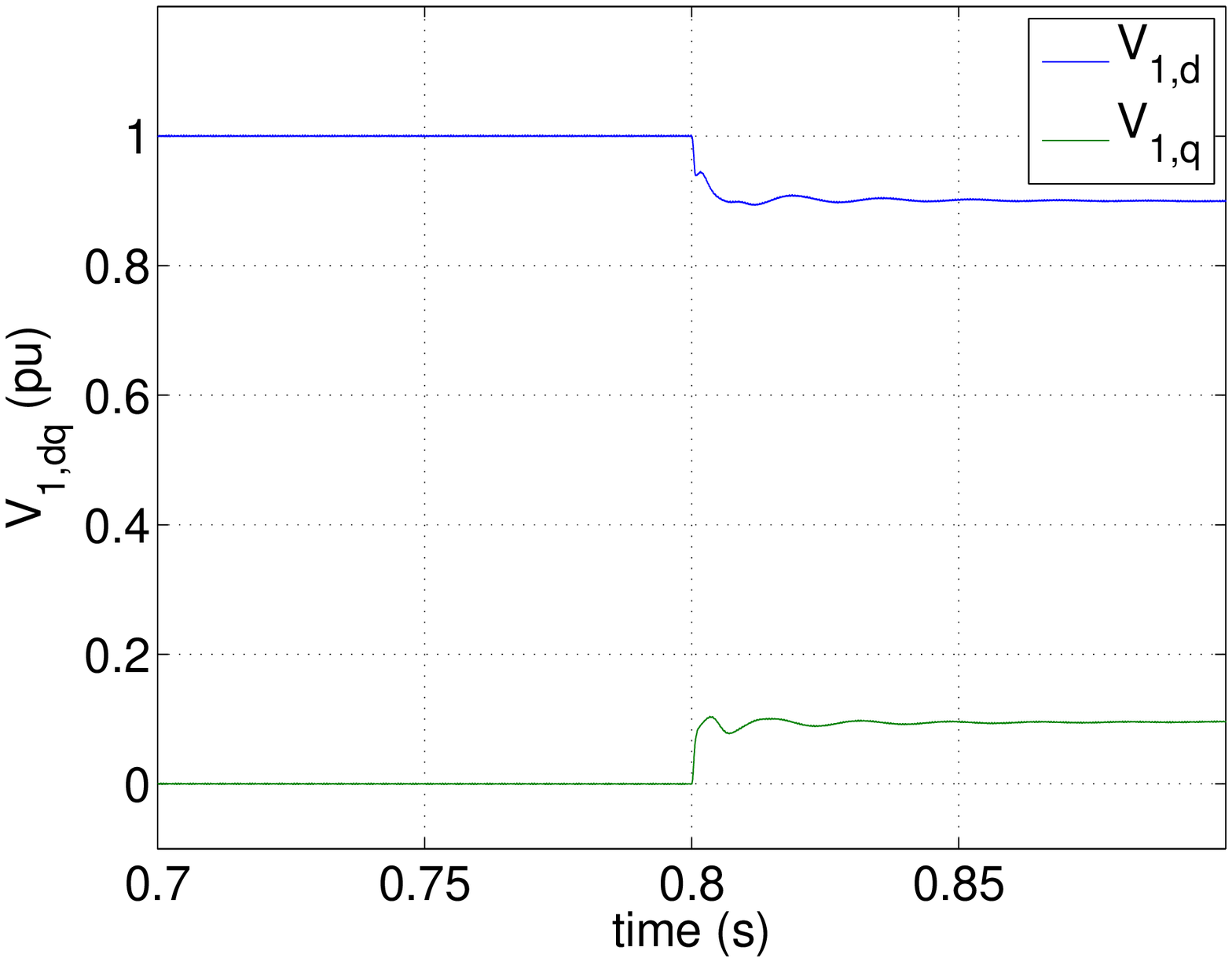}
                      \end{subfigure}
                      \begin{subfigure}[!htb]{0.32\textwidth}
                        \centering
                        \includegraphics[width=1\textwidth]{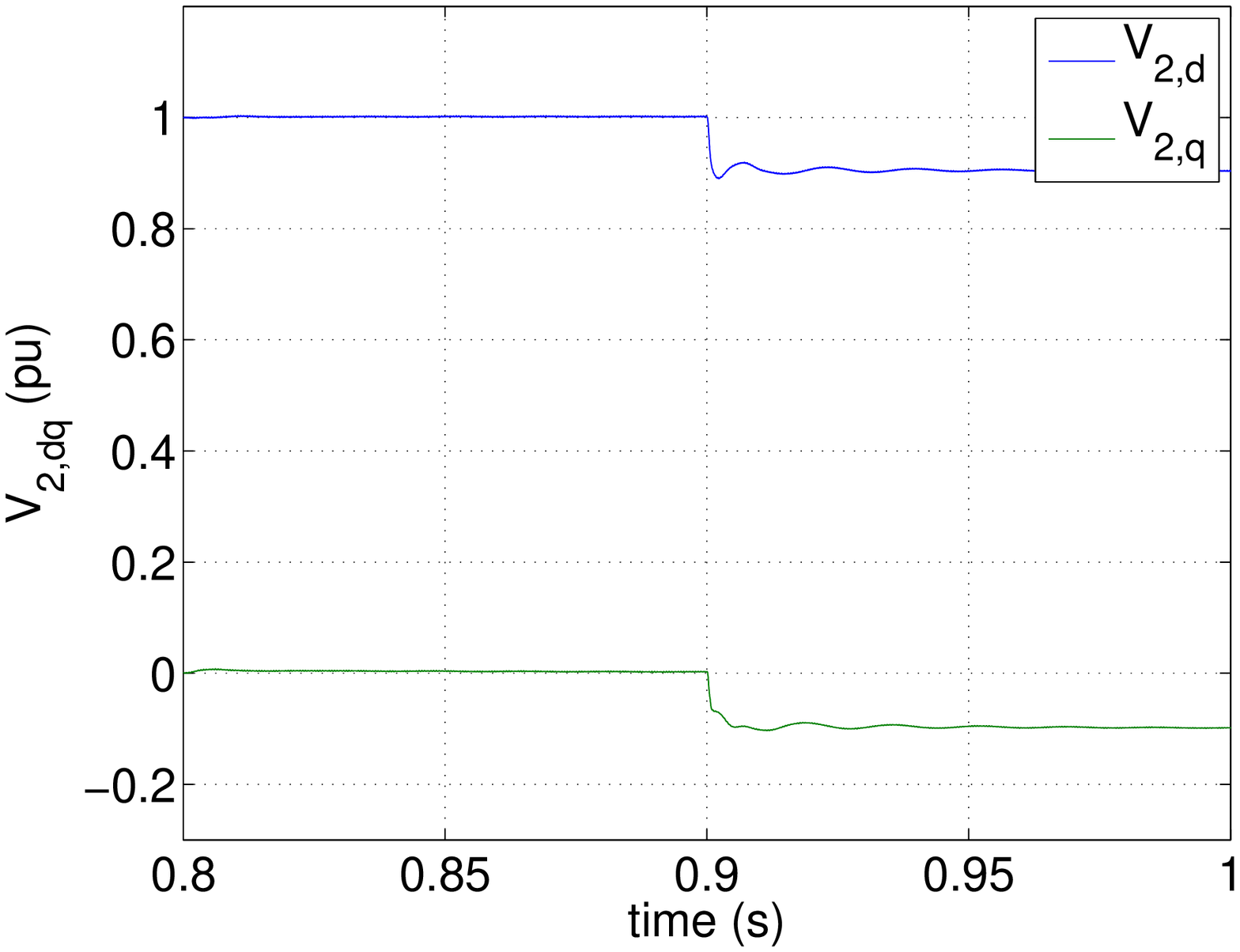}
                      \end{subfigure}
                      \begin{subfigure}[!htb]{0.32\textwidth}
                        \centering
                        \includegraphics[width=1\textwidth]{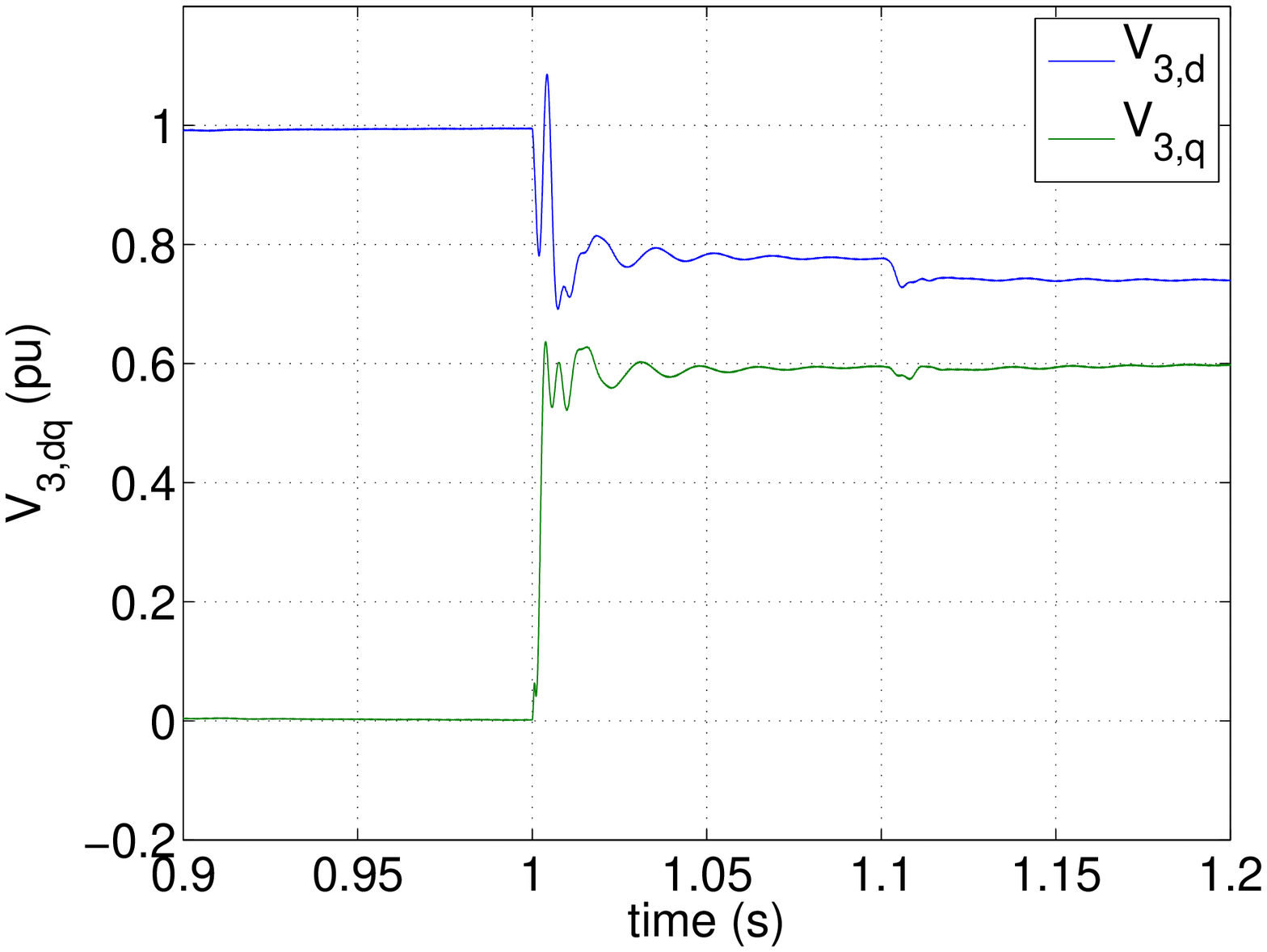}
                      \end{subfigure}
                      \begin{subfigure}[!htb]{0.32\textwidth}
                        \centering
                        \includegraphics[width=1\textwidth]{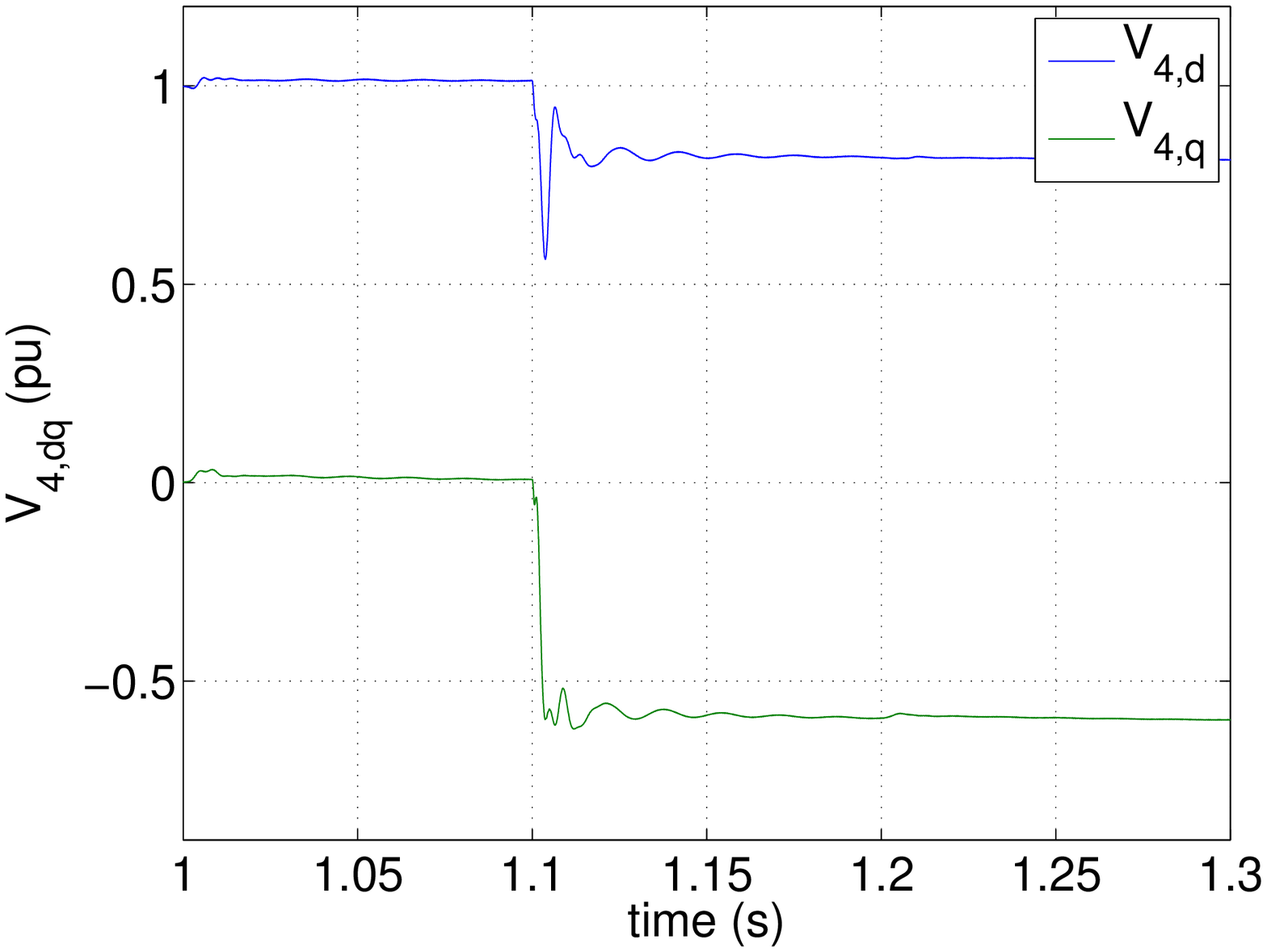}
                      \end{subfigure}
                      \begin{subfigure}[!htb]{0.32\textwidth}
                        \centering
                        \includegraphics[width=1\textwidth]{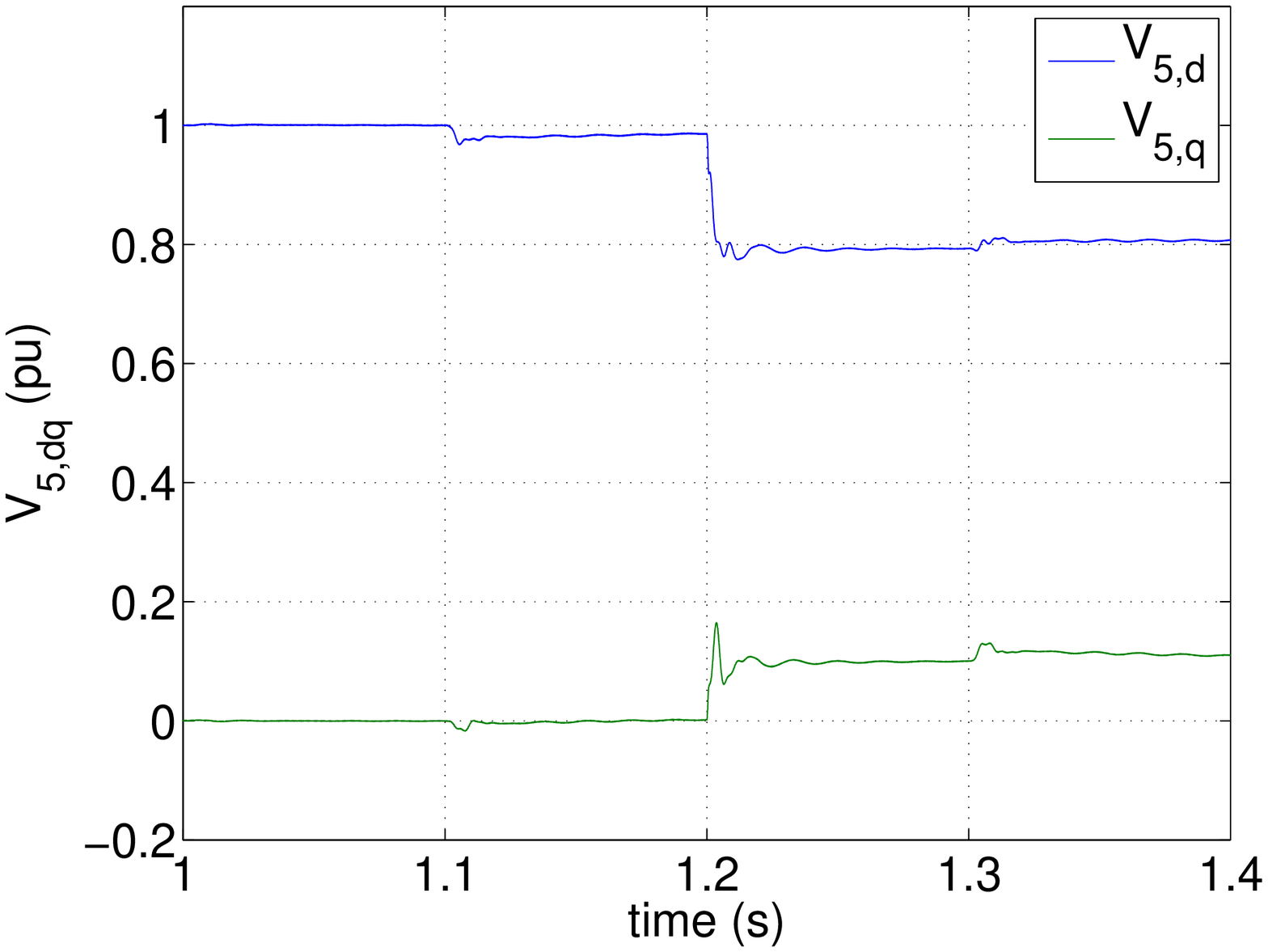}
                      \end{subfigure}
                      \begin{subfigure}[!htb]{0.32\textwidth}
                        \centering
                        \includegraphics[width=1\textwidth]{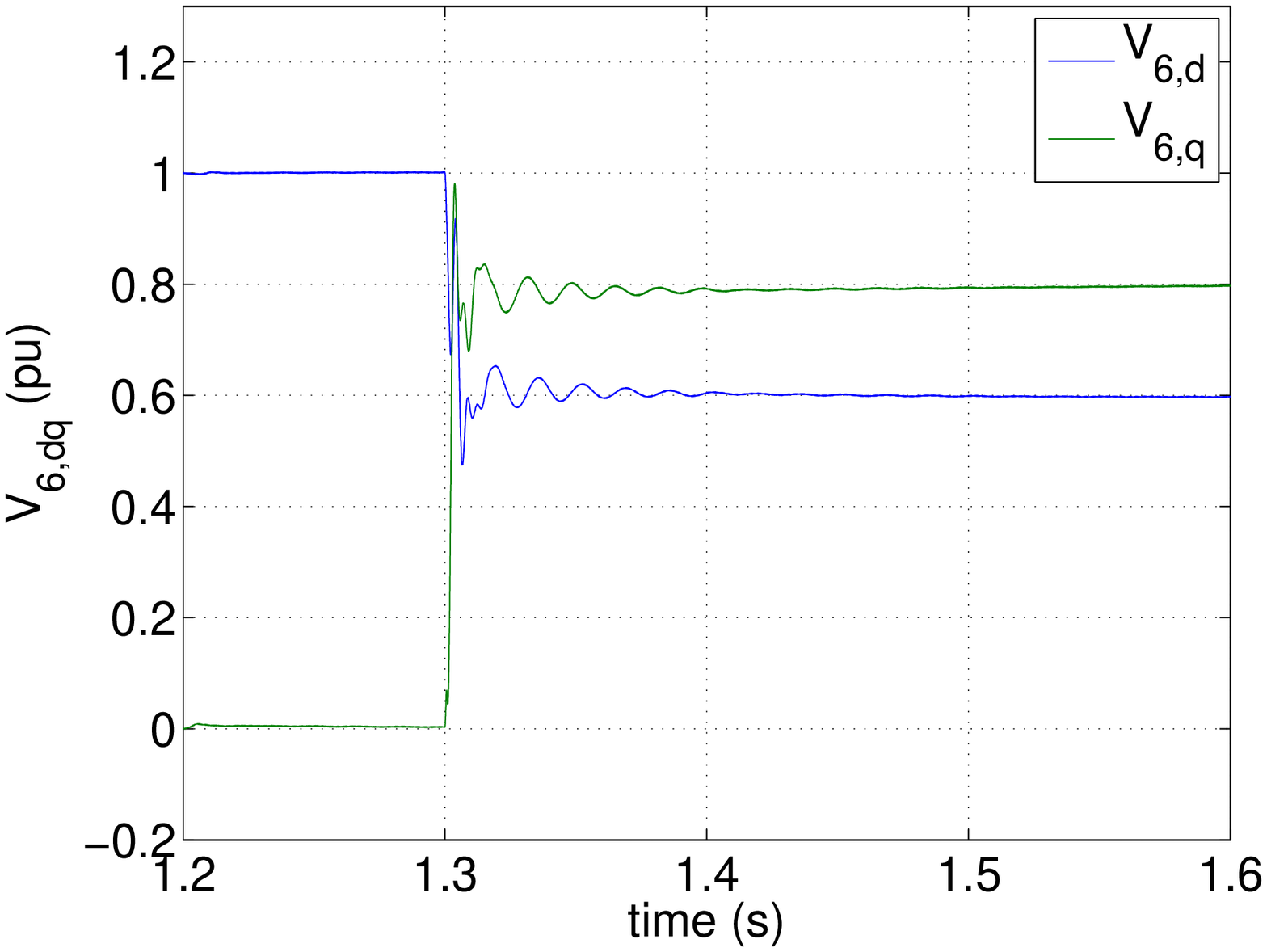}
                      \end{subfigure}
                      \begin{subfigure}[!htb]{0.32\textwidth}
                        \centering
                        \includegraphics[width=1\textwidth]{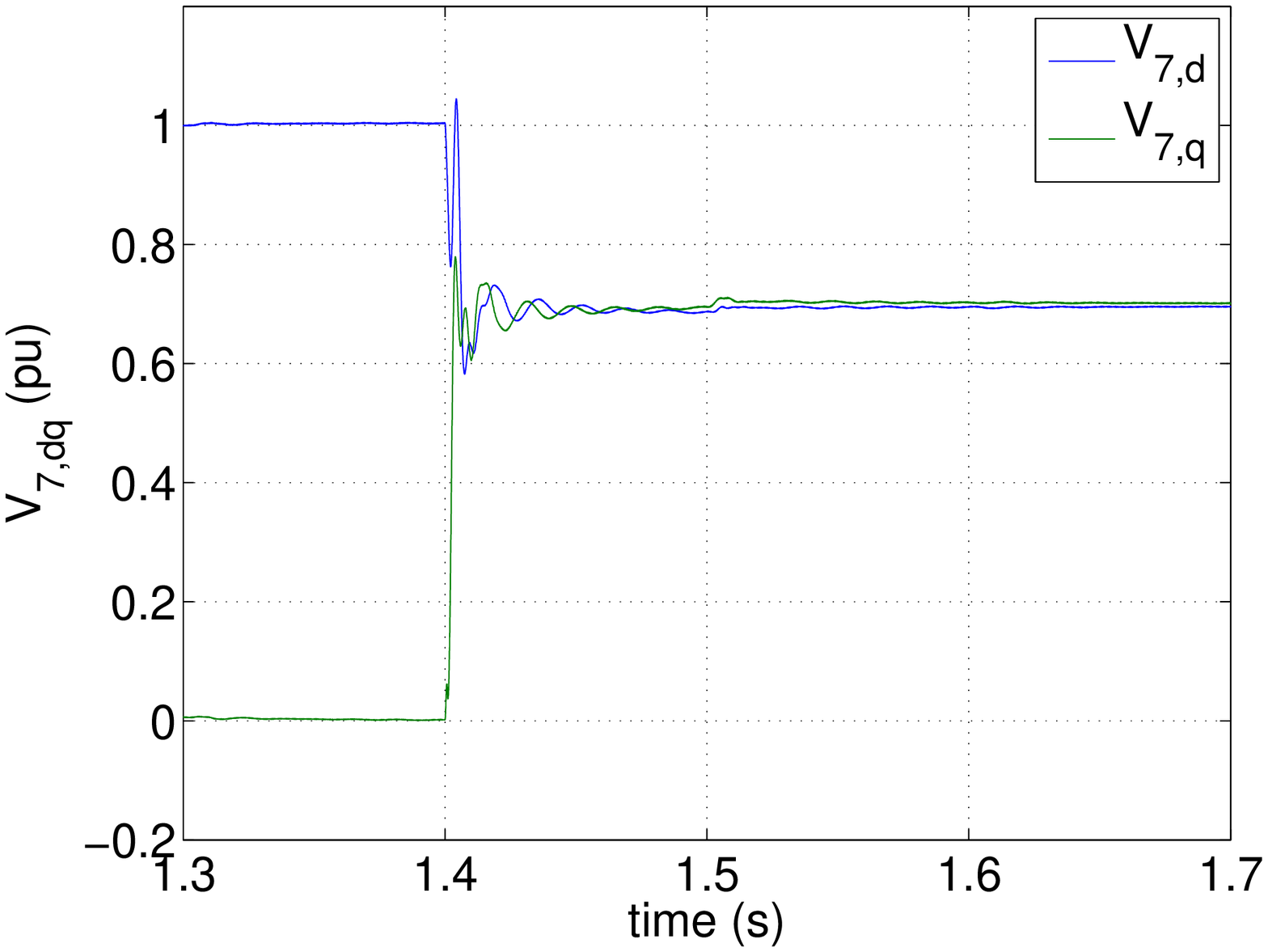}
                      \end{subfigure}
                      \begin{subfigure}[!htb]{0.32\textwidth}
                        \centering
                        \includegraphics[width=1\textwidth]{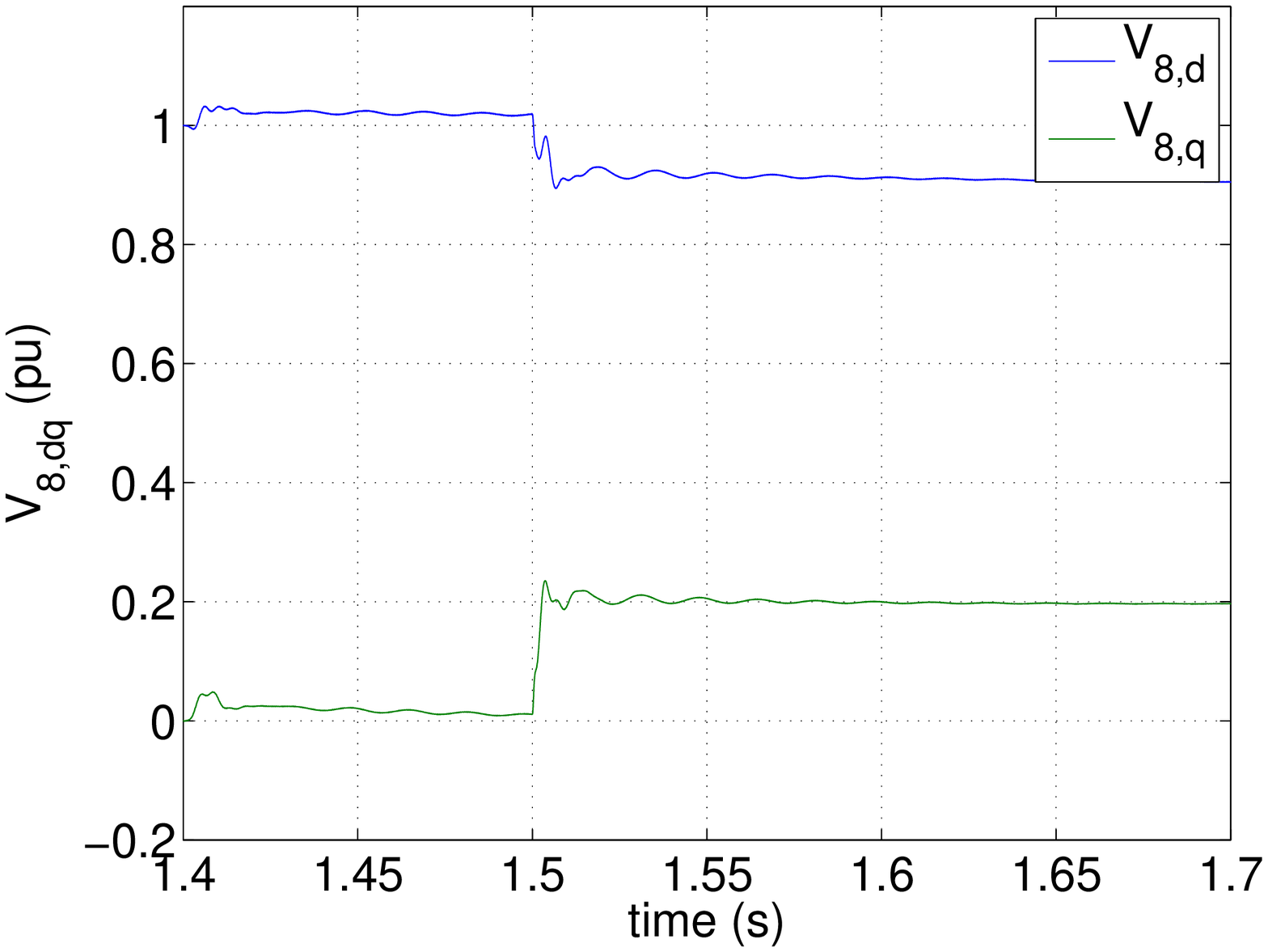}
                      \end{subfigure}
                      \begin{subfigure}[!htb]{0.32\textwidth}
                        \centering
                        \includegraphics[width=1\textwidth]{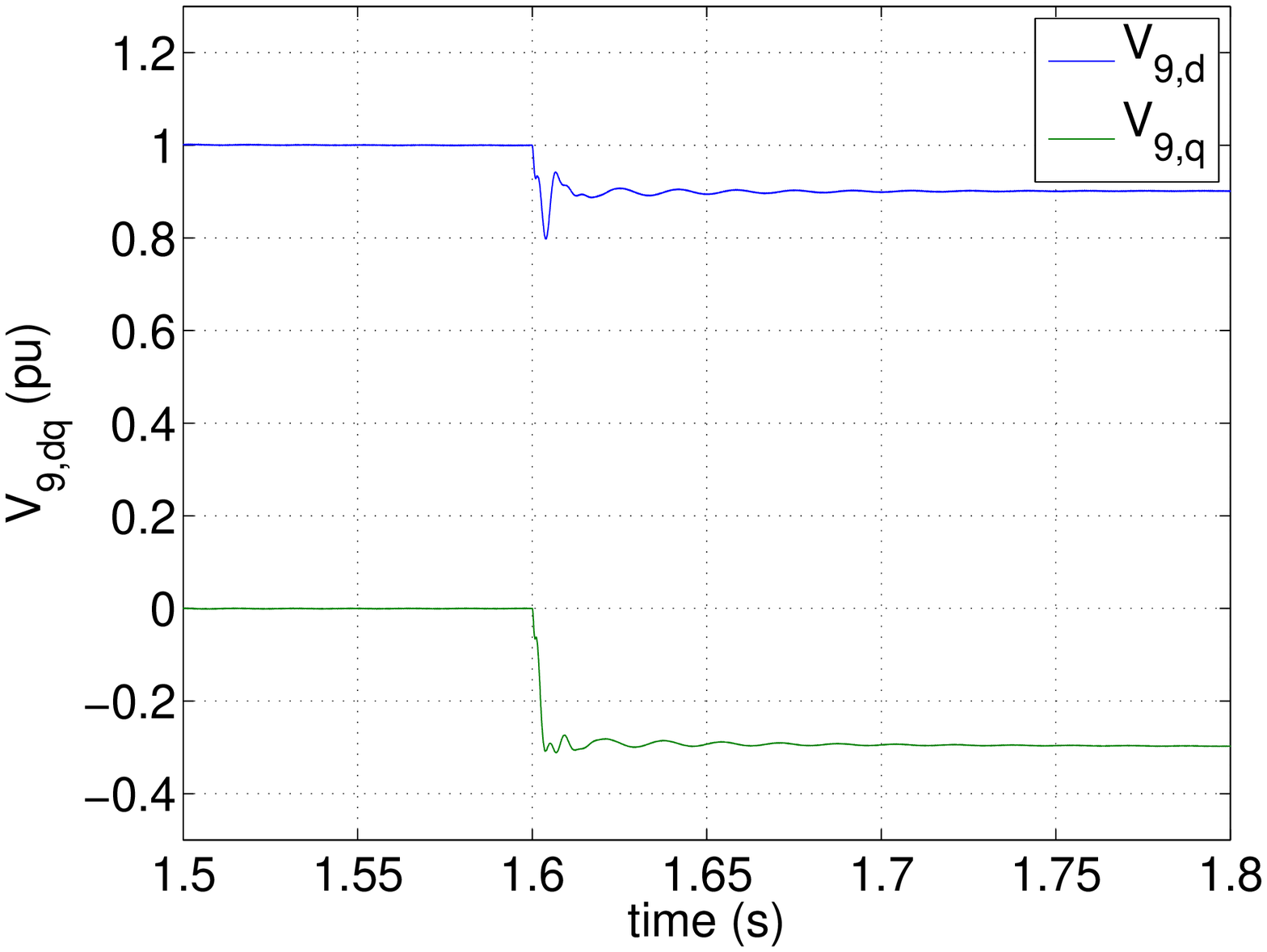}
                      \end{subfigure}
                      \begin{subfigure}[!htb]{0.32\textwidth}
                        \centering
                        \includegraphics[width=1\textwidth]{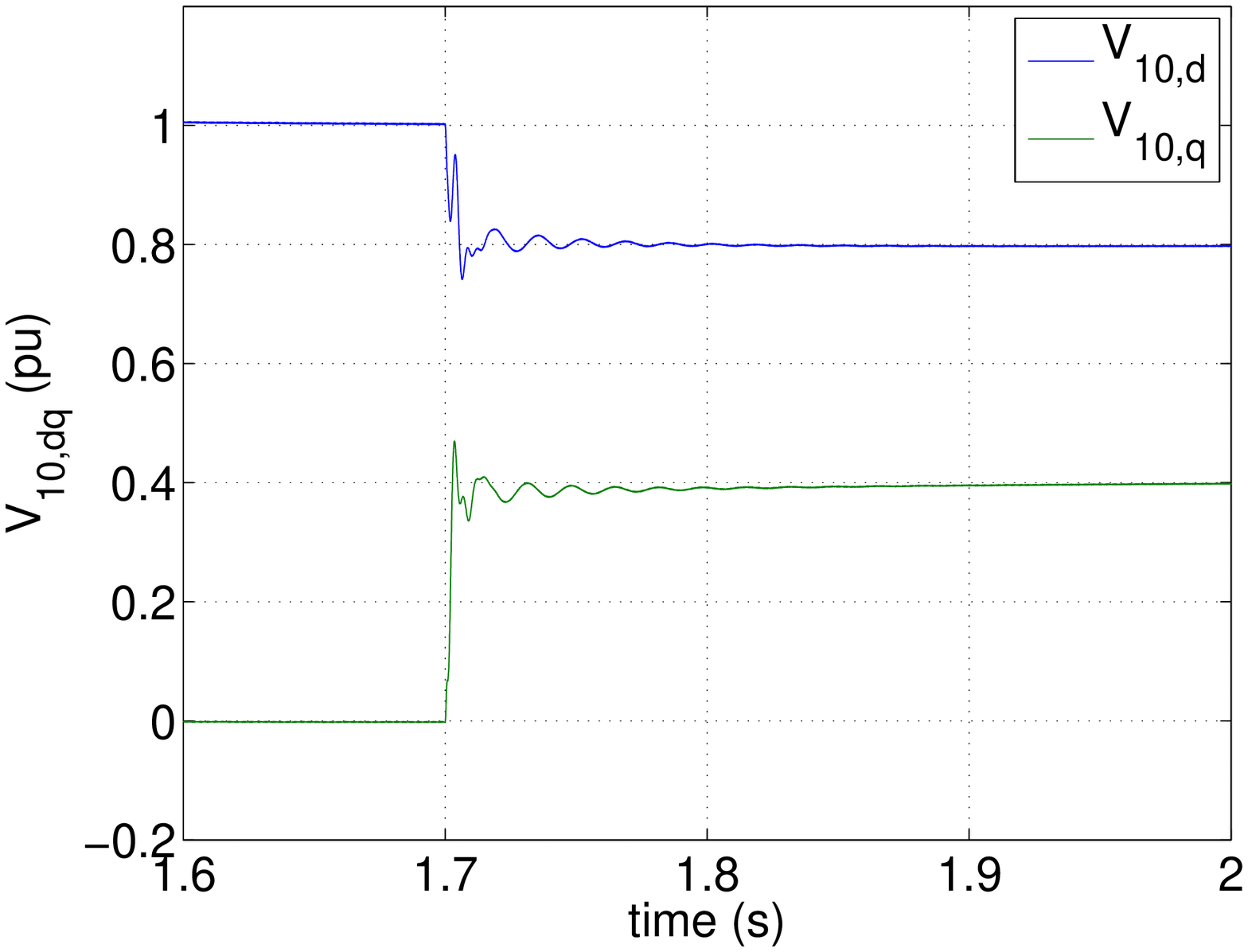}
                      \end{subfigure}
                      \caption{Performance of PnP decentralized voltage control in terms of set-point tracking.}
                      \label{fig:trackdq10areas}
                    \end{figure}

               \clearpage

               \subsubsection{Plug-in of a new DGU}
                    \label{sec:plugintest}
                    For evaluating the PnP capabilities of our control approach, we simulate the connection of a new DGU at the ImG in Figure \ref{fig:10areasplug}. In particular, we connect $\subss{\hat{\Sigma}}{11}^{DGU}$ as shown in Figure \ref{fig:10areasplug}. Therefore, the set of neighbours of DGU 11 is $\NN_{11}=\{1,6\}$. Since local dynamics of $\subss{\hat{\Sigma}}{j}^{DGU}$, $j\in\NN_{11}$ now depend on new parameters, a retuning of their controllers is needed. As described in Section \ref{sec:PnP}, only subsystems $\subss{\hat{\Sigma}}{j}^{DGU}$, $j\in\NN_{11}$ must update their controllers $\subss{\CC}{j}$ and compensators $\subss{\tilde{C}}{j}$ and $\subss{N}{j}$. More precisely, first matrices $\hat{A}_{jj}$, $j\in\NN_{11}$ are updated. Then, we proceed by re-executing Algorithm \ref{alg:ctrl_design} for each DGU $\subss{\hat{\Sigma}}{j}^{DGU}$, $j\in\NN_{11}$. Finally, we execute Algorithm \ref{alg:ctrl_design} for synthesizing controller $\subss{\CC}{11}$ and compensators $\subss{\tilde{C}}{11}$ and $\subss{N}{11}$ for the new DGU. Since Algorithm \ref{alg:ctrl_design} never stops in Step \ref{enu:stepAalgCtrl}, the plug-in of $\subss{\hat{\Sigma}}{11}^{DGU}$ is allowed and local controllers get replaced by the new ones.
                    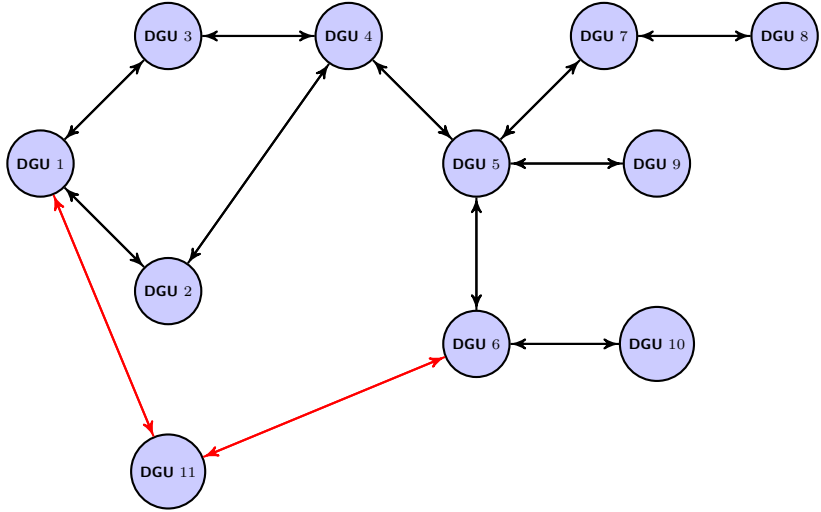
\begin{figure}[!htb]
                      \centering
                      \begin{tikzpicture}[scale=0.8,transform shape,->,>=stealth',shorten >=1pt,auto,node distance=3cm, thick,main node/.style={circle,fill=blue!20,draw,font=\sffamily\bfseries}]
					  
  \node[main node] (1) {\scriptsize{DGU $1$}};
  \node[main node] (2) [below right of=1] {\scriptsize{DGU $2$}};
  \node[main node] (3) [above right of=1] {\scriptsize{DGU $3$}};
  \node[main node] (4) [right of=3] {\scriptsize{DGU $4$}};
  \node[main node] (5) [below right of=4] {\scriptsize{DGU $5$}};
  \node[main node] (6) [below of=5] {\scriptsize{DGU $6$}};
  \node[main node] (7) [above right of=5] {\scriptsize{DGU $7$}};
  \node[main node] (8) [right of=7] {\scriptsize{DGU $8$}};
  \node[main node] (9) [right of=5] {\scriptsize{DGU $9$}};
  \node[main node] (10) [right of=6] {\scriptsize{DGU $10$}};
  \node[main node] (11) [below of=2] {\scriptsize{DGU $11$}};
  
  \path[every node/.style={font=\sffamily\small}]
  (1) edge node [left] {} (3)
  (3) edge node [right] {} (1)

  (1) edge node [left] {} (2)
  (2) edge node [right] {} (1)
  
  (3) edge node [left] {} (4)
  (4) edge node [right] {} (3)
  
  (2) edge node [left] {} (4)
  (4) edge node [right] {} (2)
  
  (4) edge node [left] {} (5)
  (5) edge node [right] {} (4)
  
  (5) edge node [left] {} (6)
  (6) edge node [right] {} (5)
  
  (6) edge node [left] {} (10)
  (10) edge node [right] {} (6)
  
  (5) edge node [left] {} (9)
  (9) edge node [right] {} (5)
  
  (5) edge node [left] {} (7)
  (7) edge node [right] {} (5)
  
  (7) edge node [left] {} (8)
  (8) edge node [right] {} (7);

  \draw[red] (1) to (11);
  \draw[red] (11) to (1);
  \draw[red] (6) to (11);
  \draw[red] (11) to (6);

\end{tikzpicture}
                      \caption{Scheme of the microgrid after the addition of $\subss{\hat{\Sigma}}{11}^{DGU}$.}
                      \label{fig:10areasplug}
                    \end{figure}

                    Figure \ref{fig:ctrlcomp} shows a comparison between the compensators $\subss{\tilde{C}}{j}$ and $\subss{N}{j}$, $j\in\NN_{11}$ synthesized before and after the addition of $\subss{\hat{\Sigma}}{11}^{DG}$. We can note that the addition of the new DGU substantially influences the synthesis of these compensators.
                    \begin{figure}[!htb]
                      \centering
                      \begin{subfigure}[!htb]{0.48\textwidth}
                        \centering
                        \includegraphics[width=1\textwidth]{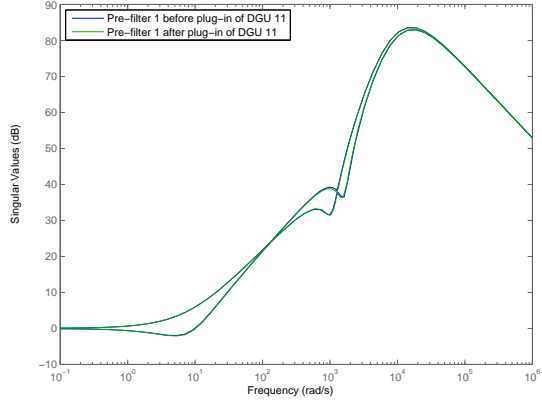}
                        \caption{Pre-filter $\subss{\tilde{C}}{1}$ before (blue) and after (green) the plugging-in of $\subss{\hat{\Sigma}}{11}^{DGU}$.}
                      \end{subfigure}
                      \quad
                      \begin{subfigure}[!htb]{0.48\textwidth}
                        \centering
                        \includegraphics[width=1\textwidth]{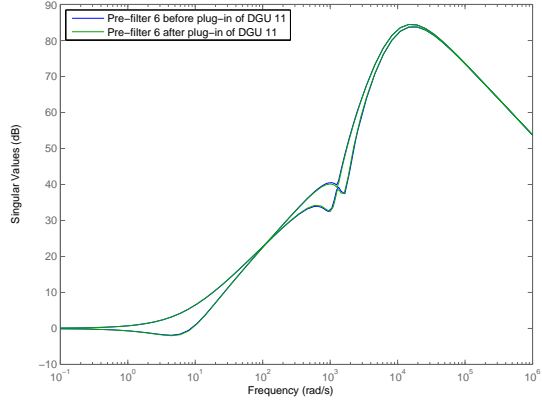}
                        \caption{Pre-filter $\subss{\tilde{C}}{6}$ before (blue) and after (green) the plugging-in of $\subss{\hat{\Sigma}}{11}^{DGU}$.}
                      \end{subfigure}
                      \begin{subfigure}[!htb]{0.48\textwidth}
                        \centering
                        \includegraphics[width=1\textwidth]{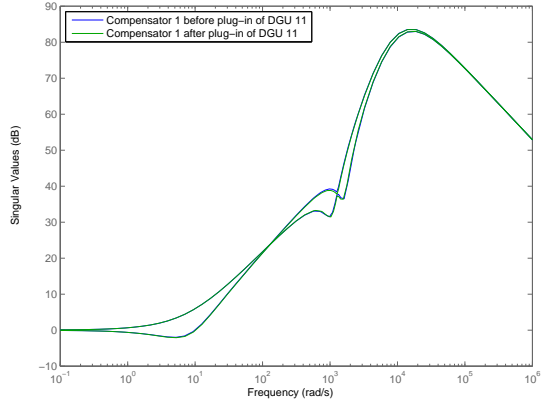}
                        \caption{Compensator $\subss{N}{1}$ before (blue) and after (green) the plugging-in of $\subss{\hat{\Sigma}}{11}^{DGU}$.}
                      \end{subfigure}
                      \quad
                      \begin{subfigure}[!htb]{0.48\textwidth}
                        \centering
                        \includegraphics[width=1\textwidth]{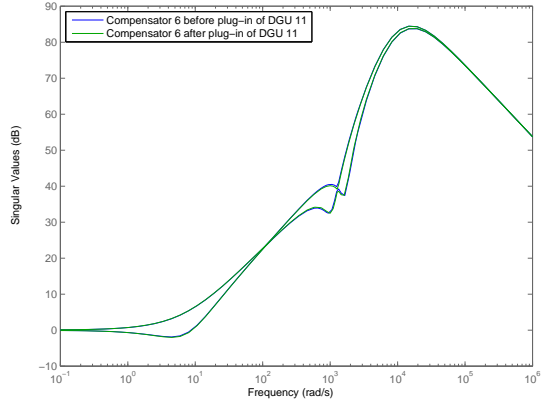}
                        \caption{Compensator $\subss{N}{6}$ before (blue) and after (green) the plugging-in of $\subss{\hat{\Sigma}}{11}^{DGU}$.}
                      \end{subfigure}
                      \caption{Comparison of PnP controllers before and after the plug-in of $\subss{\hat{\Sigma}}{11}^{DGU}$.}
                      \label{fig:ctrlcomp}
                    \end{figure}

                    After performing the plug-in operations just described, the closed-loop eigenvalues of the QSL microgrid are shown in Figure \ref{fig:CL11areaseigen}. Moreover, the closed-loop transfer function $F(s)$ coincides with the desired one, as shown in Figure \ref{fig:CL11areasCL}.
                    \begin{figure}[!htb]
                      \centering
                      \begin{subfigure}[!htb]{0.48\textwidth}
                        \centering
                        \includegraphics[width=1\textwidth]{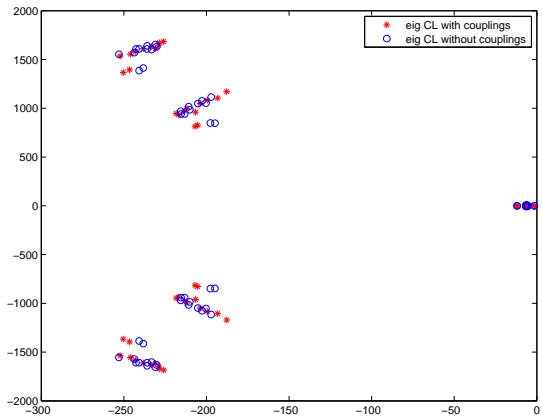}
                        \caption{Eigenvalues.}
                        \label{fig:CL11areaseigen}
                      \end{subfigure}
                      \begin{subfigure}[!htb]{0.48\textwidth}
                        \centering
                        \includegraphics[width=1\textwidth]{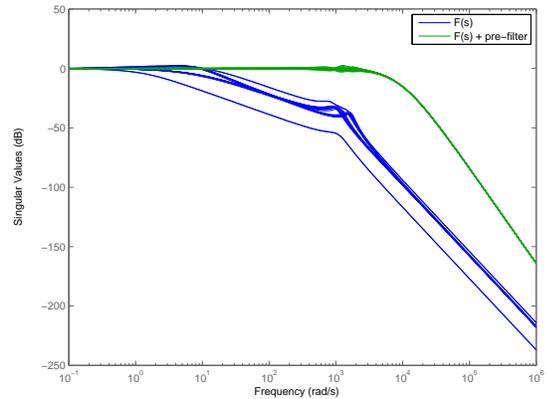}
                        \caption{Singular values of $F(s)$ with (green) and without (blue) pre-filters.}
                        \label{fig:CL11areasCL}
                      \end{subfigure}
                      \caption{Eigenvalues and singular values of the closed-loop QSL microgrid after plug-in operation.}
                      \label{fig:CL11areas}
                    \end{figure}

                    Next, we simulate the plug-in of $\subss{\hat{\Sigma}}{11}^{DGU}$. Until the plug-in of $\subss{\hat{\Sigma}}{11}^{DGU}$ references for DGUs 1-10 are the same described in Section \ref{sec:voltagetrackScenario2} and $\subss{\hat{\Sigma}}{11}^{DGU}$ is assumed to work isolated, connected to its nominal RL load (R $76~\Omega$ and L $111.9$ mH). The \emph{dq} components of the reference signal are set to $V_{d,ref}=1$ pu and $V_{q,ref}=0.4$ pu, respectively. At $t=2$ s, when the step changes of the previous case study are in steady state, we connect $\subss{\hat{\Sigma}}{11}^{DGU}$. Finally, at $t=2.3$ s, the \emph{d} component of the voltage reference for $\subss{\hat{\Sigma}}{11}^{DGU}$ steps down to 0.6 pu. Figure \ref{fig:plugsim} shows the \emph{dq} component of the load voltages for $\subss{\hat{\Sigma}}{11}^{DGU}$, and its neighbours $\subss{\hat{\Sigma}}{1}^{DGU}$ and $\subss{\hat{\Sigma}}{6}^{DGU}$. In particular, from Figures \ref{fig:plugsimdqpcc1} and \ref{fig:plugsimdqpcc6}, we note that at the plug-in time ($t=2$ s), the load voltages of $\subss{\hat{\Sigma}}{1}^{DGU}$ and $\subss{\hat{\Sigma}}{6}^{DGU}$ deviate from the respective reference signals. Thanks to the retuning of the controllers $\subss{\CC}{j}$, $\subss{\tilde{C}}{j}$ and $\subss{N}{j}$, $j\in\NN_{11}$, this deviation is immediately compensated and, after a short transient, the load voltages at $PCC_1$ and $PCC_6$ converge to the respective steady state values. Similar remarks can be done for the new DGU $\subss{\hat{\Sigma}}{11}^{DGU}$: as shown in Figure \ref{fig:plugsimdqpcc11}, there is a short transient at the time of the plug-in, that is effectively compensated by the control action. Moreover, the controller $\subss{\CC}{11}$ and compensators $\subss{\tilde{C}}{11}$ and $\subss{N}{11}$ ensure desired tracking when the the reference signal  $V_{d,ref}$ steps down at $t=2.3$.
                    
                    \begin{figure}[!htb]
                      \centering
                      \begin{subfigure}[!htb]{0.48\textwidth}
                        \centering
                        \includegraphics[width=1\textwidth]{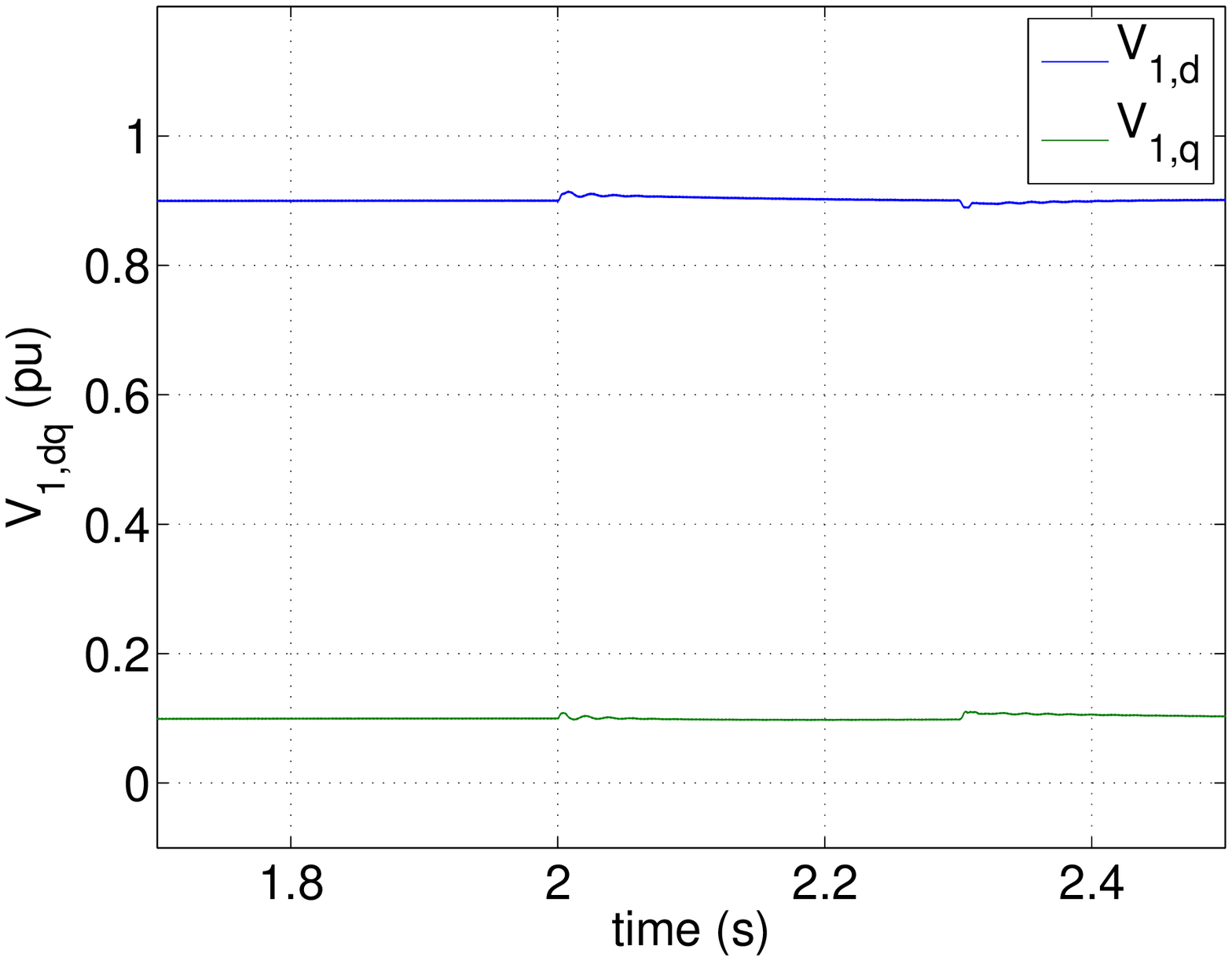}
                        \caption{\emph{d} and \emph{q} components of the voltage at $PCC_1$.}
                        \label{fig:plugsimdqpcc1}
                      \end{subfigure}
                      \begin{subfigure}[!htb]{0.48\textwidth}
                        \centering
                        \includegraphics[width=1\textwidth]{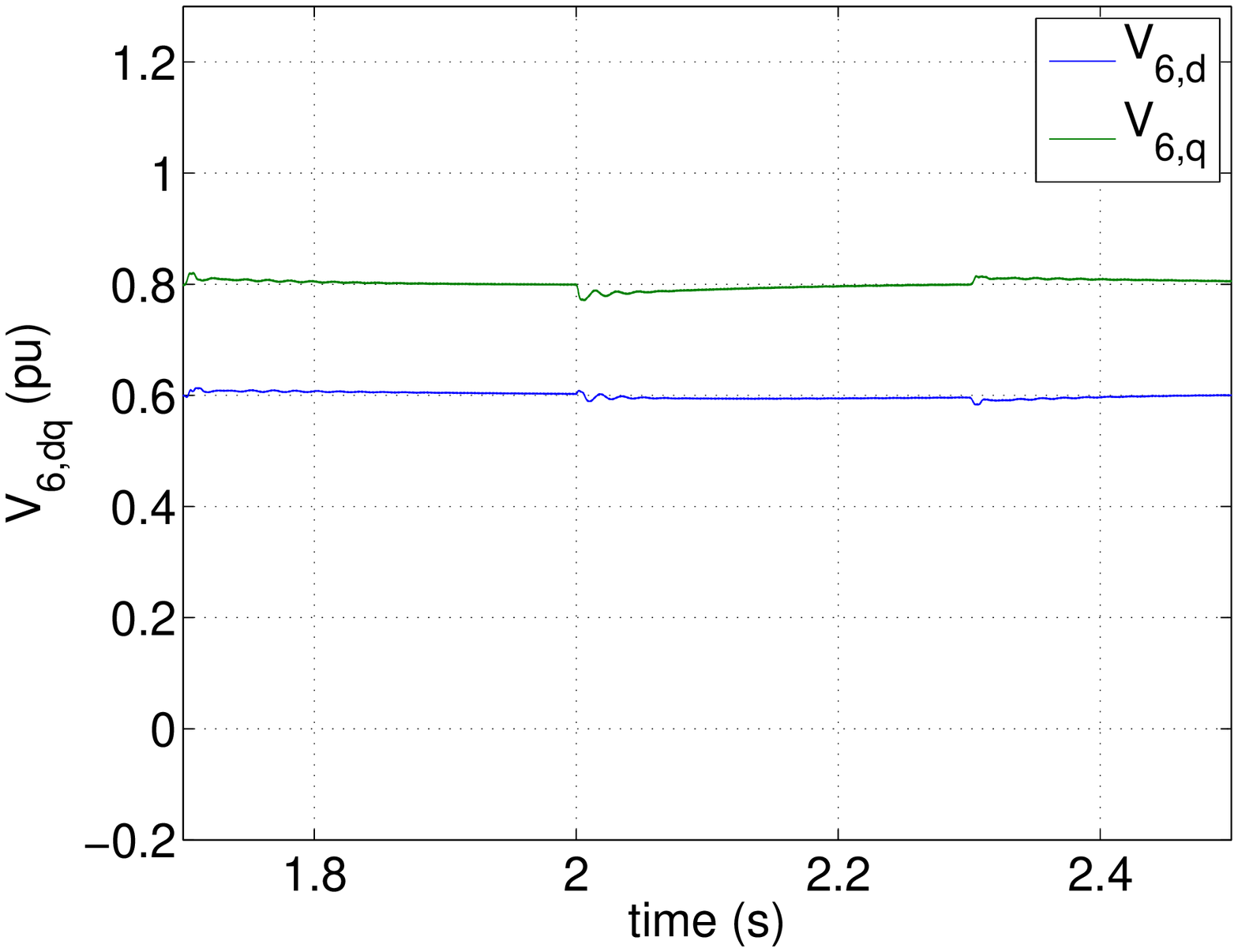}
                        \caption{\emph{d} and \emph{q} components of the voltage at $PCC_6$.}
                        \label{fig:plugsimdqpcc6}
                      \end{subfigure}
                      \begin{subfigure}[!htb]{0.48\textwidth}
                        \centering
                        \includegraphics[width=1\textwidth]{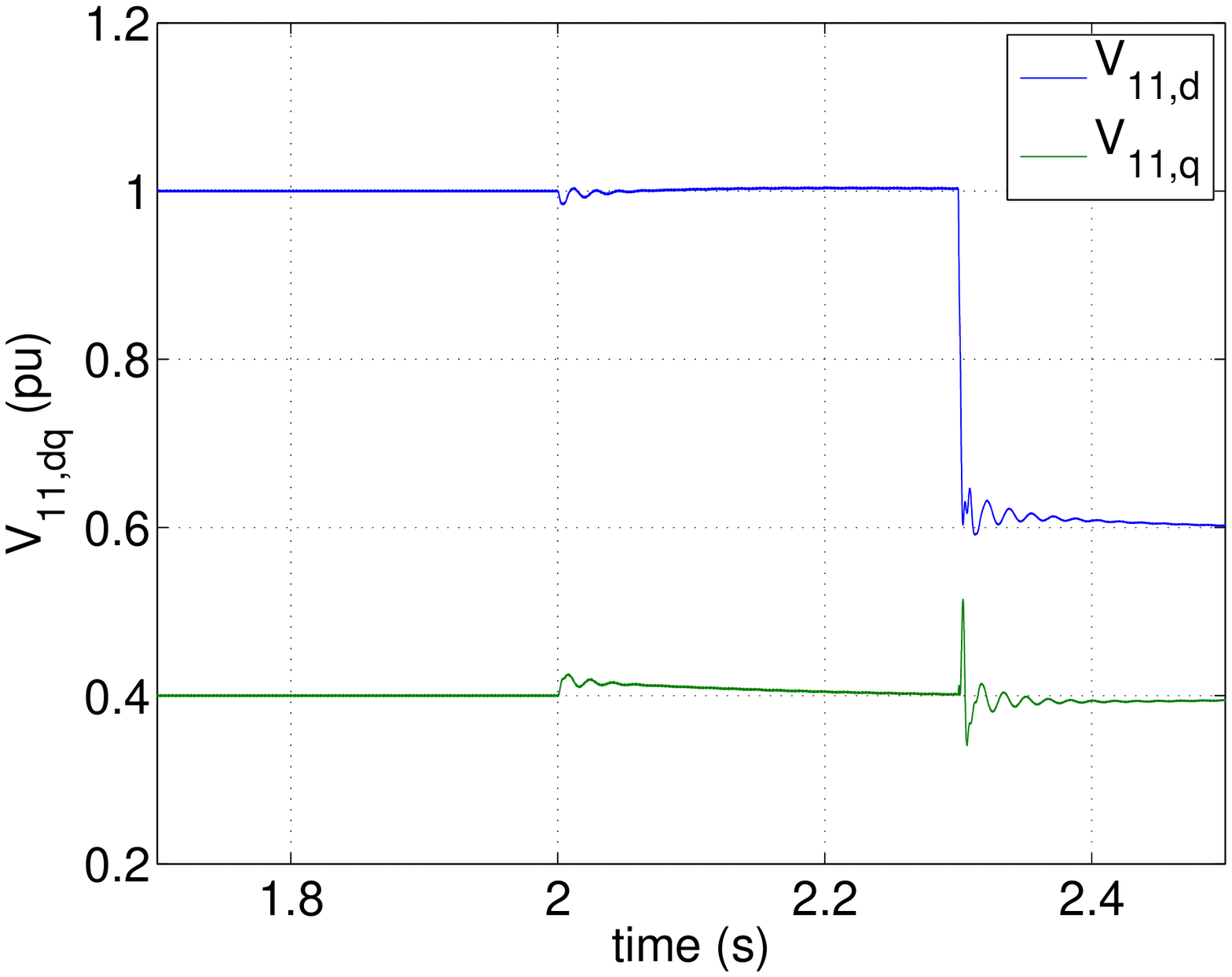}
                        \caption{\emph{d} and \emph{q} components of the voltage at $PCC_{11}$.}
                        \label{fig:plugsimdqpcc11}
                      \end{subfigure}
                      \caption{Performance of the decentralized voltage control during the plug-in operation.}
                      \label{fig:plugsim}
                    \end{figure}

               \clearpage

               \subsubsection{Unplugging of a DGU}
                    We consider the microgrid topology of Figure \ref{fig:10areasplug} and disconnect $\subss{\hat{\Sigma}}{2}^{DGU}$, hence obtaining the DGUs connected as in Figure \ref{fig:10areasunplug}. The set of neighbours of DGU 2 is $\NN_{2}=\{1,4\}$. Because of the disconnection, DGU $\subss{\hat{\Sigma}}{j}^{DGU}$, $j\in\NN_2$ change their neighbours and local dynamics $\hat{A}_{jj}$. Then, as described in Section \ref{sec:PnP}, each subsystem $\subss{\hat{\Sigma}}{j}^{DGU}$, $j\in\NN_2$ must redesign controller $\subss{\CC}{j}$ and compensators $\subss{\tilde{C}}{j}$ and $\subss{N}{j}$. Hence, matrices $\hat{A}_{jj}$, $j\in\NN_{2}$, are updated and then Algorithm \ref{alg:ctrl_design} is re-executed. Since Algorithm \ref{alg:ctrl_design} never stops in Step \ref{enu:stepAalgCtrl} the unplugging of $\subss{\hat{\Sigma}}{2}^{DGU}$ is allowed.
                    \begin{figure}[!htb]
                      \centering
                      \begin{tikzpicture}[scale=0.8,transform shape,->,>=stealth',shorten >=1pt,auto,node distance=3cm, thick,main node/.style={circle,fill=blue!20,draw,font=\sffamily\bfseries}]
					  
  \node[main node] (1) {\scriptsize{DGU $1$}};
  \node[main node] (2) [below right of=1] {\scriptsize{DGU $2$}};
  \node[main node] (3) [above right of=1] {\scriptsize{DGU $3$}};
  \node[main node] (4) [right of=3] {\scriptsize{DGU $4$}};
  \node[main node] (5) [below right of=4] {\scriptsize{DGU $5$}};
  \node[main node] (6) [below of=5] {\scriptsize{DGU $6$}};
  \node[main node] (7) [above right of=5] {\scriptsize{DGU $7$}};
  \node[main node] (8) [right of=7] {\scriptsize{DGU $8$}};
  \node[main node] (9) [right of=5] {\scriptsize{DGU $9$}};
  \node[main node] (10) [right of=6] {\scriptsize{DGU $10$}};
  \node[main node] (11) [below of=2] {\scriptsize{DGU $11$}};
  
  \path[every node/.style={font=\sffamily\small}]
  (1) edge node [left] {} (3)
  (3) edge node [right] {} (1)

  (3) edge node [left] {} (4)
  (4) edge node [right] {} (3)
  
  (4) edge node [left] {} (5)
  (5) edge node [right] {} (4)
  
  (5) edge node [left] {} (6)
  (6) edge node [right] {} (5)
  
  (6) edge node [left] {} (10)
  (10) edge node [right] {} (6)
  
  (5) edge node [left] {} (9)
  (9) edge node [right] {} (5)
  
  (5) edge node [left] {} (7)
  (7) edge node [right] {} (5)
  
  (7) edge node [left] {} (8)
  (8) edge node [right] {} (7)

  (1) edge node [left] {} (11)
  (11) edge node [right] {} (1)
  
  (6) edge node [left] {} (11)
  (11) edge node [right] {} (6);

  \draw[red,dashed] (1) to (2);
  \draw[red,dashed] (2) to (1);
  \draw[red,dashed] (2) to (4);
  \draw[red,dashed] (4) to (2);

\end{tikzpicture}
                      \caption{Scheme of the microgrid after the unplugging of $\subss{\hat{\Sigma}}{2}^{DGU}$.}
                      \label{fig:10areasunplug}
                    \end{figure}
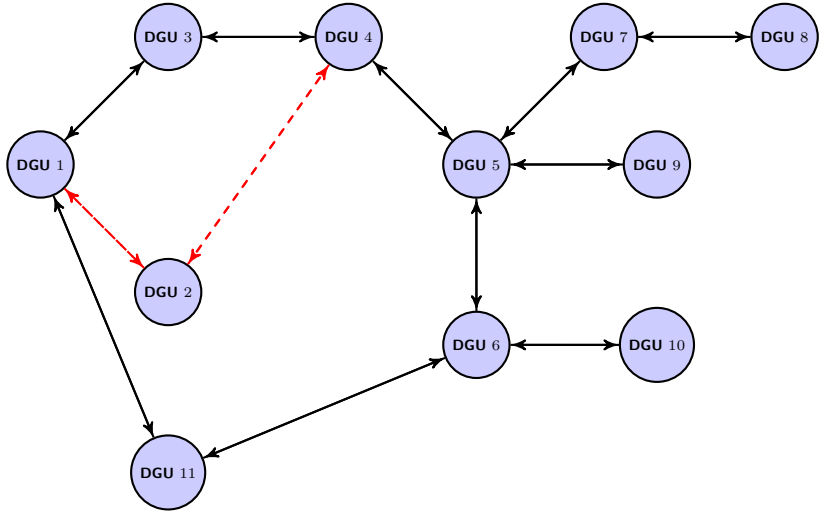

                    In spite of the unplugging operations, the closed-loop model of the QSL microgrid (now composed of 10 DGUs) is still asymptotically stable, as shown in Figure \ref{fig:CL10areasunplugeigen}. Moreover, the closed-loop transfer function $F(s)$ is the desired transfer function, as shown in Figure \ref{fig:CL10areasunplugCL}. 
                    \begin{figure}[!htb]
                      \centering
                      \begin{subfigure}[!htb]{0.48\textwidth}
                        \centering
                        \includegraphics[width=1\textwidth]{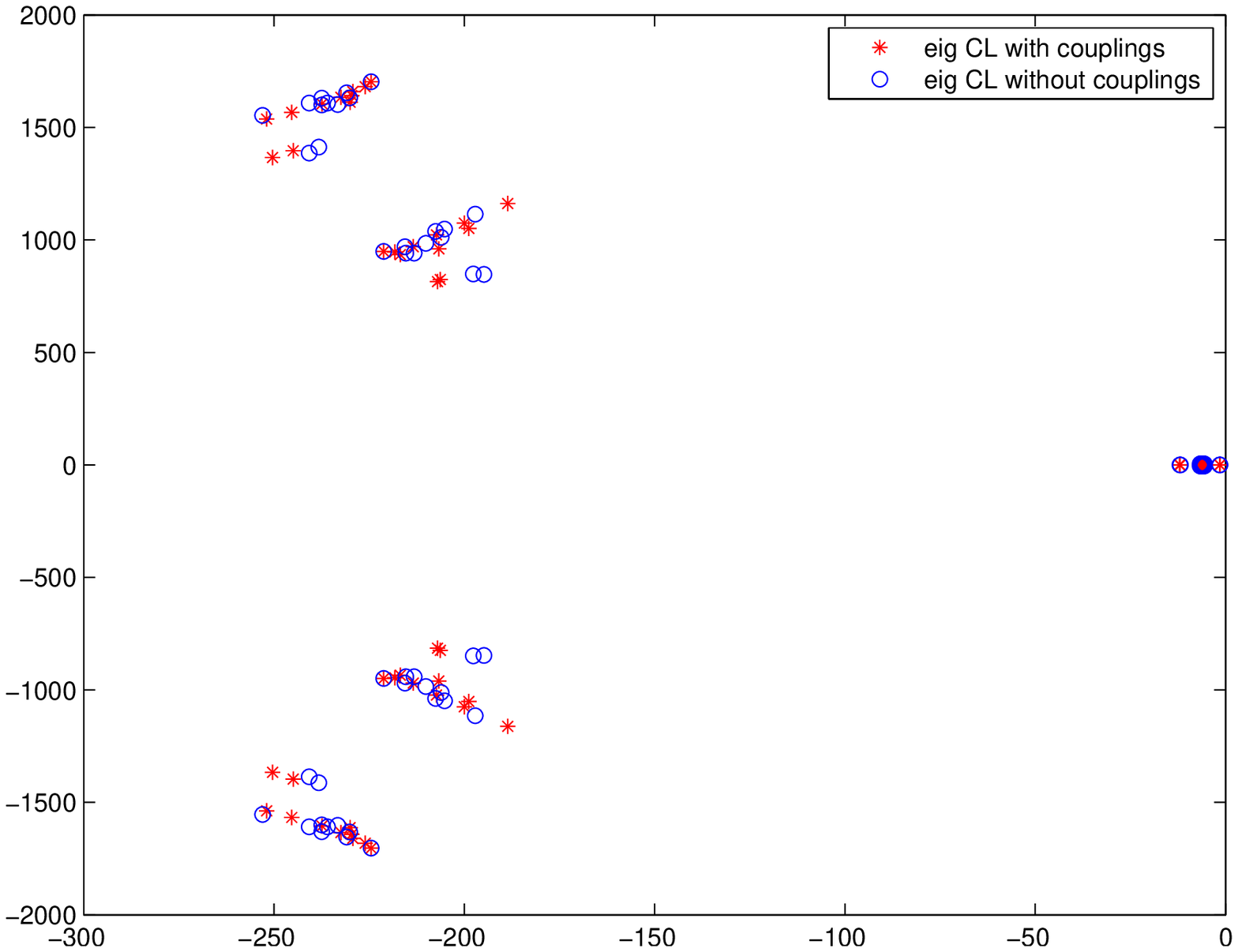}
                        \caption{Eigenvalues.}
                        \label{fig:CL10areasunplugeigen}
                      \end{subfigure}
                      \begin{subfigure}[!htb]{0.48\textwidth}
                        \centering
                        \includegraphics[width=1\textwidth]{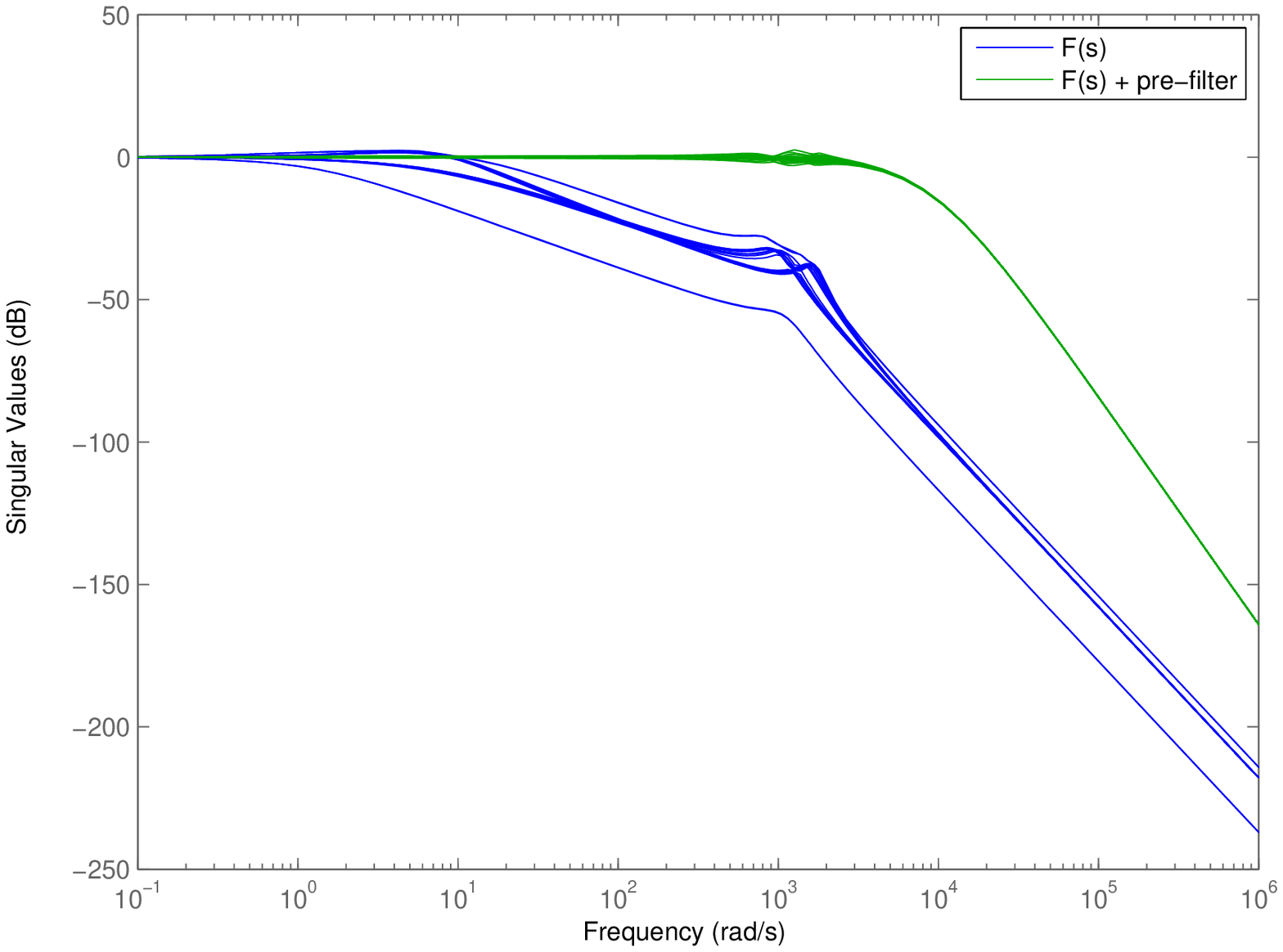}
                        \caption{Singular values of $F(s)$ with (green) and without (blue) pre-filters.}
                        \label{fig:CL10areasunplugCL}
                      \end{subfigure}
                      \caption{Eigenvalues and singular values of the closed-loop QSL microgrid after unplugging operation.}
                      \label{fig:CL11areasUnplug}
                    \end{figure}
                    
                    Finally, we simulate the unplugging operation. At $t=2.6$ s, after that the plug-in operation described in Section \ref{sec:plugintest} has been completed, DGU $\subss{\hat{\Sigma}}{2}^{DGU}$ is disconnected. As shown in Figure \ref{fig:unplugsim}, the \emph{dq} components of load voltages of DGU $\subss{\hat{\Sigma}}{j}^{DGU}$, $j\in\NN_2$ deviate from the respective reference signals. Thanks to the retuning of the controllers $\subss{\CC}{j}$ and compensators $\subss{\tilde{C}}{j}$ and $\subss{N}{j}$, $j\in\NN_{2}$, this deviation is immediately compensated and, after a short transient, the load voltages at $PCC_1$ and $PCC_4$ converge at the respective steady state values. Also in this case, stability of the microgrid is preserved despite the disconnection of $\subss{\hat{\Sigma}}{2}^{DGU}$.
                    \begin{figure}[!htb]
                      \centering
                      \begin{subfigure}[!htb]{0.48\textwidth}
                        \centering
                        \includegraphics[width=1\textwidth]{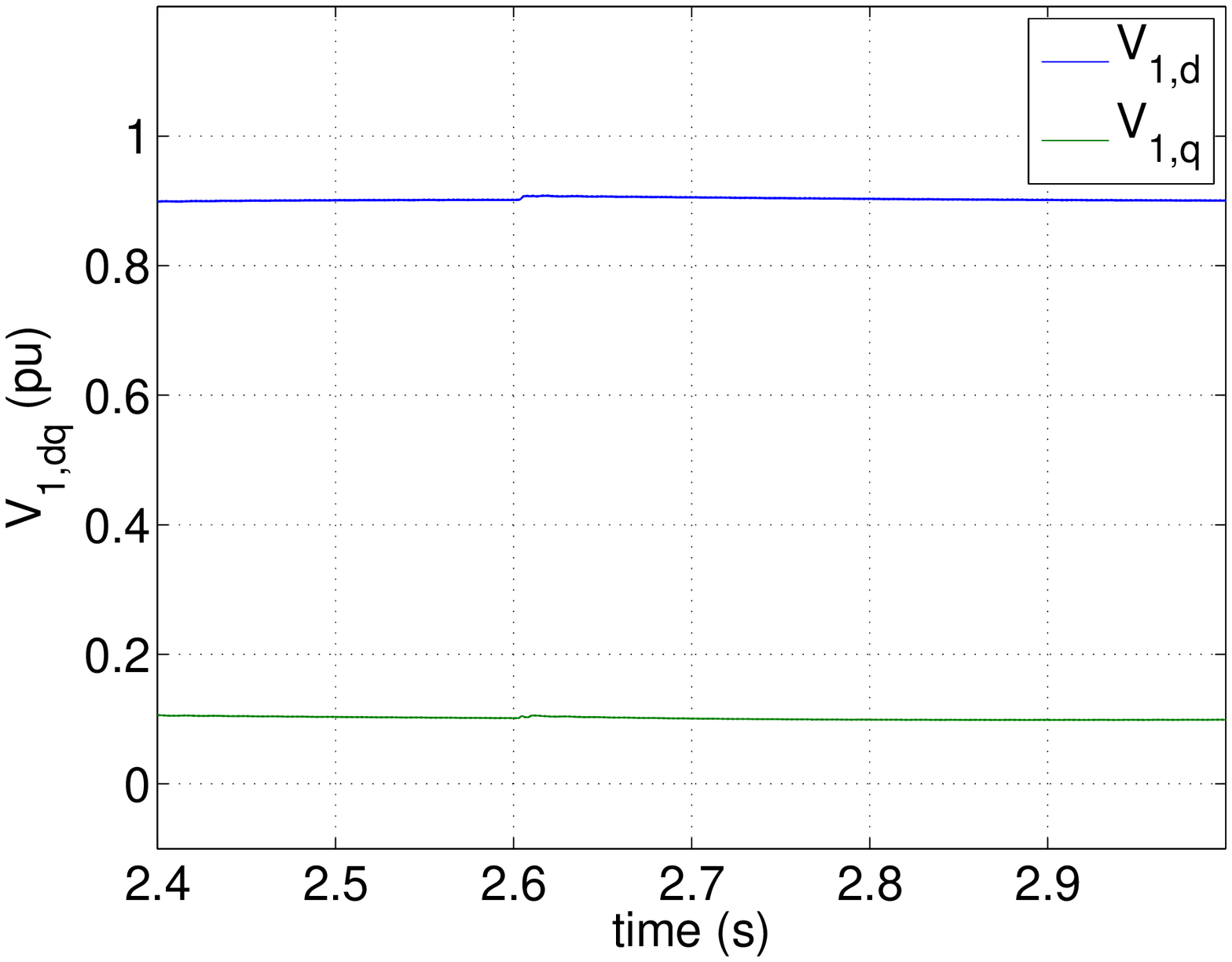}
                        \caption{\emph{d} and \emph{q} components of the voltage at $PCC_1$.}
                        \label{fig:unplugsimdqpcc1}
                      \end{subfigure}
                      \begin{subfigure}[!htb]{0.48\textwidth}
                        \centering
                        \includegraphics[width=1\textwidth]{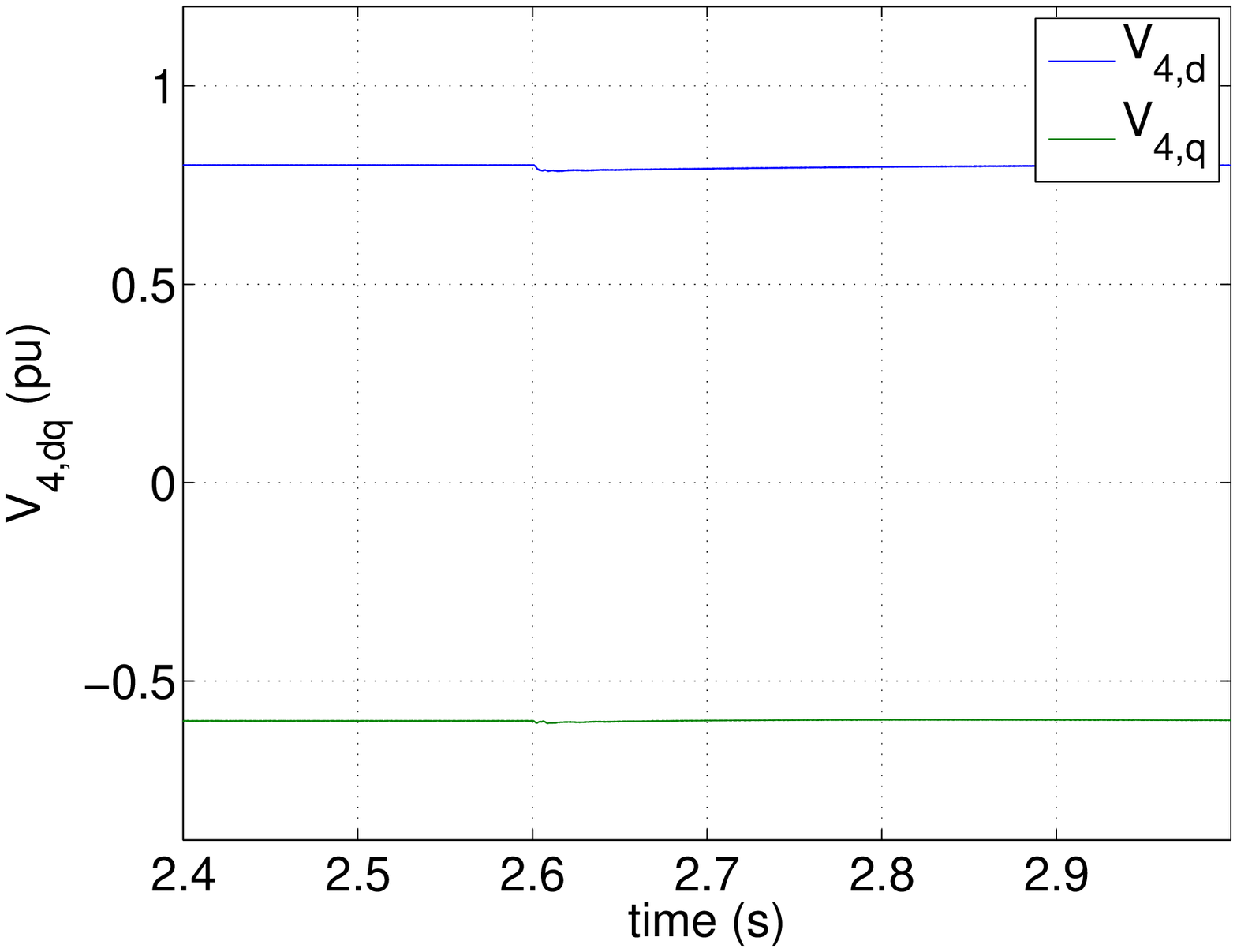}
                        \caption{\emph{d} and \emph{q} components of the voltage at $PCC_4$.}
                        \label{fig:unplugsimdqpcc4}
                      \end{subfigure}
                      \caption{Performance of decentralized voltage control during the unplugging operation.}
                      \label{fig:unplugsim}
                    \end{figure}

     \clearpage

     \section{Conclusions}
          \label{sec:conclusions}
          In this paper, we presented a decentralized control scheme for guaranteeing voltage and frequency stability in ImGs. Differently from other decentralized controllers available in the literature (e.g. \cite{Etemadi2012a,Etemadi2012,Guerrero2013}), a key feature of our approach is that plugging-in and -out of DGUs requires to update only a limited number of local controllers. Furthermore, a global model of the ImG is not required in any design step. Numerical results in Section~\ref{sec:Simresults} confirm effectiveness of PnP control even for ImGs with meshed topology and components accounting for nonlinearities commonly found in practice. Voltage and frequency control takes place at a very fast timescale where renewable sources (commonly equipped with storage devices) can be modeled as constant voltage generators. However, this approximation is no longer valid for describing the behaviour of ImGs over longer time horizons. In this case, dynamics and stochasticity of the sources plays an important role. This topic will be addressed in future research. Furthermore, local voltage controllers should be coupled with a higher control layer devoted to power flow regulation so as to orchestrate mutual help among DGUs. To this purpose, we will study if and how ideas from primary control of ImGs \cite{Guerrero2013} can be reappraised in our context.

     \clearpage

     \appendix
     \section{Matrices in microgrid models}
     \label{sec:AppMatrices}

     In this appendix, we provide all matrices appearing in Section \ref{sec:Model}.
     
     \subsection{Master-Slave model \eqref{eq:sysdistABCDM}}
          \label{sec:AppMasterSlave}
	\begin{equation*}
          \setlength{\arraycolsep}{1pt}
          \setcounter{MaxMatrixCols}{12}
          A=\matr{
            0 & \omega_0 & \frac{k_i}{C_{ti}} & 0 & \frac{1}{C_{ti}} & 0 & 0 & 0 & 0 & 0 & 0 & 0\\
            -\omega_0 & 0 & 0 & \frac{k_i}{C_{ti}} & 0 & \frac{1}{C_{ti}} & 0 & 0 & 0 & 0 & 0 & 0\\
            -\frac{k_i}{L_{ti}} & 0 & -\frac{R_{ti}}{L_{ti}} & \omega_0 & 0 & 0 & 0 & 0 & 0 & 0 & 0 & 0\\
            0 & -\frac{k_i}{L_{ti}} & -\omega_0 & -\frac{R_{ti}}{L_{ti}} & 0 & 0 & 0 & 0 & 0 & 0 & 0 & 0 \\
            -\frac{1}{L_{ij}} & 0 & 0 & 0 & -\frac{R_{ij}}{L_{ij}} & \omega_0 & 0 & 0 &\frac{1}{L_{ij}} & 0 & 0 & 0 \\
            0 & -\frac{1}{L_{ij}} & 0 & 0 & -\omega_0 & -\frac{R_{ij}}{L_{ij}} & 0 & 0 & 0 & \frac{1}{L_{ij}} & 0 & 0 \\
            \frac{1}{L_{ji}} & 0 & 0 & 0 & -\frac{R_{ji}}{L_{ji}} & \omega_0 & 0 & 0 &-\frac{1}{L_{ji}} & 0 & 0 & 0 \\
            0 & \frac{1}{L_{ji}} & 0 & 0 & -\omega_0 & -\frac{R_{ji}}{L_{ji}} & 0 & 0 & 0 & -\frac{1}{L_{ji}} & 0 & 0 \\  
            0 & 0 & 0 & 0 & 0 & 0 & \frac{1}{C_{tj}} & 0 & 0 & \omega_0 & \frac{k_j}{C_{tj}} & 0 \\
            0 & 0 & 0 & 0 & 0 & 0 & 0 & \frac{1}{C_{tj}} & -\omega_0 & 0 & 0 & \frac{k_j}{C_{tj}} \\
            0 & 0 & 0 & 0 & 0 & 0 & 0 & 0 & -\frac{k_j}{L_{tj}} & 0 & -\frac{R_{tj}}{L_{tj}} & \omega_0\\
            0 & 0 & 0 & 0 & 0 & 0 & 0 & 0 & 0 & -\frac{k_j}{L_{tj}} & -\omega_0 & -\frac{R_{tj}}{L_{tj}}
          }
        \end{equation*}
        
	\begin{equation*}
          \label{ABCDM}
          B=\matr{
            0 & 0 & 0 & 0 \\
            0 & 0 & 0 & 0 \\
            \frac{1}{L_{ti}} & 0 & 0 & 0 \\
            0 & \frac{1}{L_{ti}} & 0 & 0 \\
            0 & 0 & 0 & 0 \\
            0 & 0 & 0 & 0 \\
            0 & 0 & 0 & 0 \\
            0 & 0 & 0 & 0 \\
            0 & 0 & 0 & 0 \\
            0 & 0 & 0 & 0 \\
            0 & 0 & \frac{1}{L_{tj}} & 0 \\
            0 & 0 & 0 & \frac{1}{L_{tj}}
          }\qquad
          C^T =\matr{
            1 & 0 & 0 & 0 \\
            0 & 1 & 0 & 0 \\
            0 & 0 & 0 & 0 \\
            0 & 0 & 0 & 0 \\
            0 & 0 & 0 & 0 \\
            0 & 0 & 0 & 0 \\
            0 & 0 & 0 & 0 \\
            0 & 0 & 0 & 0 \\
            0 & 0 & 1 & 0 \\
            0 & 0 & 0 & 1 \\
            0 & 0 & 0 & 0 \\
            0 & 0 & 0 & 0
          }\qquad
          M=\matr{
            \frac{1}{C_{ti}} & 0 & 0 & 0 \\
            0 & \frac{1}{C_{ti}} & 0 & 0 \\
            0 & 0 & 0 & 0 \\
            0 & 0 & 0 & 0 \\
            0 & 0 & 0 & 0 \\
            0 & 0 & 0 & 0 \\
            0 & 0 & 0 & 0 \\
            0 & 0 & 0 & 0 \\
            0 & 0 & \frac{1}{C_{tj}} & 0 \\
            0 & 0 & 0 & \frac{1}{C_{tj}}\\
            0 & 0 & 0 & 0 \\
            0 & 0 & 0 & 0  
          }
	\end{equation*}
	
     \subsection{QSL model \eqref{eq:subsysDGUi} and \eqref{eq:subsysLine}}
          \label{sec:AppMasterMaster}
	
          \paragraph{DGU-$i$, $i\in\{1,2\}$}
          \begin{equation*}
            \renewcommand\arraystretch{1.5}
            A_{ii}=\begin{bmatrix}
              -\frac{1}{C_{ti}}\bigl(\frac{R_{ij}}{Z_{ij}^2}\bigr) & \omega_0-\frac{1}{C_{ti}}\bigl(\frac{X_{ij}}{Z_{ij}^2}\bigr) & \frac{k_i}{C_{ti}} & 0 \\
              -\omega_0+\frac{1}{C_{ti}}\bigl(\frac{X_{ij}}{Z_{ij}^2}\bigr) & -\frac{1}{C_{ti}}\bigl(\frac{R_{ij}}{Z_{ij}^2}\bigr) & 0 & \frac{k_i}{C_{ti}} \\
              -\frac{k_i}{L_{ti}} & 0 & -\frac{R_{ti}}{L_{ti}} & \omega_0 \\
              0 & -\frac{k_i}{L_{ti}} & -\omega_0 & -\frac{R_{ti}}{L_{ti}} 
            \end{bmatrix}, \mbox{with }X_{ij} = \omega_0L_{ij}\mbox{ and } Z_{ij} = \abs{R_{ij}+jX_{ij}}
          \end{equation*}
          \begin{equation*}
            \renewcommand\arraystretch{1.5}
            A_{ij}=\frac{1}{C_{ti}}
            \begin{bmatrix}
              \frac{R_{ij}}{Z_{ij}^2} & \frac{X_{ij}}{Z_{ij}^2} & 0 & 0 \\
              -\frac{X_{ij}}{Z_{ij}^2} & \frac{R_{ij}}{Z_{ij}^2} & 0 & 0 \\
              0 & 0 & 0 & 0\\
              0 & 0 & 0 & 0
            \end{bmatrix}
          \end{equation*}
          
          \begin{equation*}
            B_{i}=\begin{bmatrix}
              0 & 0\\
              0 & 0\\
              \frac{1}{L_{ti}} & 0\\
              0 & \frac{1}{L_{ti}}
            \end{bmatrix}
            \qquad
            M_{i}=\begin{bmatrix}
              -\frac{1}{C_{ti}} & 0 \\
              0 & -\frac{1}{C_{ti}} \\
              0 & 0 \\
              0 & 0 \\
            \end{bmatrix}
            \qquad
            C_{i}=\begin{bmatrix}
              1&0&0&0\\
              0&1&0&0\\
              0&0&1&0\\
              0&0&0&1
            \end{bmatrix} 
            \qquad
            H_{i}=\begin{bmatrix}
              1 & 0 & 0 & 0\\
              0 & 1 & 0 & 0
            \end{bmatrix}
          \end{equation*}
          
          \paragraph{Line $i\neq j$}
          \begin{equation}
            \renewcommand\arraystretch{1.2}
            \label{matrixss3}
            A_{li,ij}=\begin{bmatrix}
              -\frac{1}{L_{ij}} & 0 & 0 & 0\\
              0 & -\frac{1}{L_{ij}} & 0 & 0
            \end{bmatrix}\quad	
            A_{lj,ij}=\begin{bmatrix}
              \frac{1}{L_{ij}} & 0 & 0 & 0\\
              0 & \frac{1}{L_{ij}} & 0 & 0
            \end{bmatrix}\quad
            A_{ll,ij}=\begin{bmatrix}
              -\frac{R_{ij}}{L_{ij}} & \omega_0\\
              -\omega_0 & -\frac{R_{ij}}{L_{ij}}
            \end{bmatrix}
          \end{equation}
          
     \subsection{QSL model of microgrid composed by $N$ DGUs}
          \label{sec:AppNDGunit}          
          \paragraph{DGU-$i$, $i\in\DD$}
          \begin{equation}
            \label{eq:Aii}
            \renewcommand\arraystretch{2}
            A_{ii}=\begin{bmatrix}
              -\frac{1}{C_{ti}}\biggl(\sum_{j\in\NN_i}\frac{R_{ij}}{Z_{ij}^2}\biggr) & \omega_0-\frac{1}{C_{ti}}\biggl(\sum_{j\in\NN_i}\frac{X_{ij}}{Z_{ij}^2}\biggr) & \frac{k_{i}}{C_{ti}} & 0 \\
              -\omega_0+\frac{1}{C_{ti}}\biggl(\sum_{j\in\NN_i}\frac{X_{ij}}{Z_{ij}^2}\biggr) & -\frac{1}{C_{ti}}\biggl(\sum_{j\in\NN_i}\frac{R_{ij}}{Z_{ij}^2}\biggr) & 0 & \frac{k_i}{C_{ti}} \\
              -\frac{k_i}{L_{ti}} & 0 & -\frac{R_{ti}}{L_{ti}} & \omega_0 \\
              0 & -\frac{k_i}{L_{ti}} & -\omega_0 & -\frac{R_{ti}}{L_{ti}} 
            \end{bmatrix}
          \end{equation}          
          \begin{equation}
            \label{eq:Aij}
            \renewcommand\arraystretch{2}
            A_{ij}=\frac{1}{C_{ti}}\begin{bmatrix}
              \frac{R_{ij}}{Z_{ij}^2} & \frac{X_{ij}}{Z_{ij}^2} & 0 & 0 \\
              -\frac{X_{ij}}{Z_{ij}^2} & \frac{R_{ij}}{Z_{ij}^2} & 0 & 0 \\
              0 & 0 & 0 & 0\\
              0 & 0 & 0 & 0
            \end{bmatrix}
	\end{equation}
        where $R_{ij}$ and $L_{ij}$ are the resistence and the inductance of the line among DGU $i$ and DGU $j$. Moreover $X_{ij} = \omega_0L_{ij}$  and $Z_{ij} = \abs{R_{ij}+j X_{ij}}$. Matrices $B_i$, $C_i$, $M_i$ and $H_i$ are equal to those in Section \ref{sec:AppMasterMaster}.
	
        \paragraph{Overall model of the ImG with $N$ DGUs}
             \label{sec:AppOverallsys}

             \begin{equation}
               \label{TheSystem}
               \begin{aligned}
                 \begin{bmatrix}
                   \subss{\dx}{1} \\
                   \subss{\dx}{2} \\
                   \subss{\dx}{3} \\
                   \vdots \\
                   \subss{\dx}{N}
                 \end{bmatrix} 
                 &= 
                 \underbrace{\left[\begin{array}{ccccc}
                       A_{11} & A_{12} & A_{13} & \dots  & A_{1N} \\
                       A_{21} & A_{22} & A_{23} & \dots  & A_{2N} \\
                       A_{31} & A_{32} & A_{3l} & \dots  & A_{3N} \\
                       \vdots & \vdots & \vdots & \ddots & \vdots\\
                       A_{N1} & A_{N2} & A_{N3} & \dots  & A_{NN}
                     \end{array}
                   \right]}_{\mbf{A}}
                 \begin{bmatrix}
                   \subss{x}{1} \\
                   \subss{x}{2} \\
                   \subss{x}{3} \\
                   \vdots \\
                   \subss{x}{N}
                 \end{bmatrix} 
                 +\\
                 &+\,
                 \underbrace{\begin{bmatrix}
                     B_{1} & 0 & \dots & 0\\
                     0 & B_{2} & \ddots & \vdots\\
                     \vdots & \ddots & \ddots & 0\\
                     0& \dots & 0  & B_{N}
                   \end{bmatrix}}_{\mbf{B}}
                 \begin{bmatrix}
                   \subss{u}{1}\\
                   \subss{u}{2}\\
                   \vdots\\
                   \subss{u}{N}
                 \end{bmatrix}
                 + \underbrace{\begin{bmatrix}
                     M_{1} & 0 & \dots & 0\\
                     0 & M_{2} & \ddots & \vdots\\
                     \vdots & \ddots & \ddots & 0\\
                     0& \dots & 0  & M_{N}
                   \end{bmatrix}}_{\mbf{M}}
                 \begin{bmatrix}
                   \subss{d}{1}\\
                   \subss{d}{2}\\
                   \vdots\\
                   \subss{d}{N}
                 \end{bmatrix}\\    
                 \begin{bmatrix}
                   \subss{y}{1}\\
                   \subss{y}{2}\\
                   \subss{y}{3}\\
                   \vdots\\
                   \subss{y}{N}
                 \end{bmatrix}
                 &=
                 \underbrace{\left[\begin{array}{ccccc}
                       C_{1} & 0 & 0 & \dots & 0 \\
                       0 & C_{2} & 0 & \ddots & \vdots \\
                       0 & 0 & C_{3} & \ddots & 0 \\
                       \vdots & \ddots & \ddots &\ddots & 0\\
                       0 & \dots & 0 & 0  & C_{N}
                     \end{array}
                   \right]}_{\mbf{C}}
                 \begin{bmatrix}
                   \subss{x}{1} \\
                   \subss{x}{2} \\
                   \subss{x}{3} \\
                   \vdots \\
                   \subss{x}{N}
                 \end{bmatrix}\\
                 \begin{bmatrix}
                   \subss{z}{1}\\
                   \subss{z}{2}\\
                   \subss{z}{3}\\
                   \vdots\\
                   \subss{z}{N}
                 \end{bmatrix}
                 &=
                 \underbrace{\begin{bmatrix}
                     H_{1} & 0 & 0 & \dots & 0 \\
                     0 & H_{2} & 0 & \ddots & \vdots \\
                     0 & 0 & H_{3} & \ddots & 0 \\
                     \vdots & \ddots & \ddots &\ddots & 0\\
                     0& \dots & 0 & 0  & H_{N}
                   \end{bmatrix}}_{\mbf{H}}\begin{bmatrix}
                   \subss{y}{1}\\
                   \subss{y}{2}\\
                   \subss{y}{3}\\
                   \vdots\\
                   \subss{y}{N}
                 \end{bmatrix}.
               \end{aligned} 
             \end{equation}
     \clearpage
     \section{Relevance of coupling in the decentralized design of regulators for the Master-Slave model}
\label{sec:AppUnstable}
	In this appendix, we show why the decentralized design of
        stabilizing controllers must take into account how DGUs are
        coupled through matrices $A_{ij}$, $i,j\in\NN, i\neq j$. We consider the master-slave system \eqref{eq:sysdistdq} and, for simplicity, set $i=1$ and $j=2$. Augmenting each DGU model with integrators (as in Section \ref{sec:ctrlint}) one obtains
	\begin{equation}
          \label{eq:subsystemsdec}
          \begin{aligned}
            \subss{\hat{\Sigma}}{1} &: \left\lbrace \begin{aligned}
		\subss{\dot{\hat{x}}}{1}(t) &= \hat{A}_{11}\subss{\hat{x}}{1}(t)+\hat{A}_{12}\subss{\hat{x}}{2}(t)+\hat{B}_1 \subss{u}{1}(t)+\hat{M}_1 \subss{\hat{d}}{1}(t)\\
		\subss{y}{1}(t)       &= \hat{C}_1 \subss{\hat{x}}{1}(t)
              \end{aligned}
            \right.\\
            \subss{\hat{\Sigma}}{2} &: \left\lbrace \begin{aligned}
		\subss{\dot{\hat{x}}}{2}(t) &= \hat{A}_{22}\subss{\hat{x}}{2}(t)+\hat{A}_{21}\subss{\hat{x}}{1}(t)+\hat{B}_2 \subss{u}{2}(t)+\hat{M}_2 \subss{\hat{d}}{2}(t)\\
		\subss{y}{2}(t)       &= \hat{C}_2 \subss{\hat{x}}{2}(t)
              \end{aligned}
            \right.
          \end{aligned}
	\end{equation}
        where $\subss{\hat{x}}{1}=[V_{1,d},V_{1,q},I_{t1,d},I_{t1,q},I_{s,d},I_{s,q},v_{1,d},v_{1,q}]^T\in\Rset^8$, $\subss{u}{1}=[V_{t1,d},V_{t1,q}]^T\in\Rset^2$, $\subss{y}{1}=[V_{1,d},V_{1,q}]^T\in\Rset^2$, $\subss{\hat{d}}{1}=[I_{L1_d},I_{L1_q},V_{1,d\,ref},V_{1,q\,ref}]^T\in\Rset^4$ are, respectively, the state, the control input, the controlled variable and the exogenous input of the Master area. Similarly, for the Slave area we have $\subss{\hat{x}}{2}=[V_{2,d},V_{2,q},I_{t2,d},I_{t2,q},v_{2,d},v_{2,q}]^T\in\Rset^6$, $\subss{u}{2}=[V_{t2,d},V_{t2,q}]^T\in\Rset^2$, $\subss{y}{2}=[V_{2,d},V_{2,q}]^T\in\Rset^2$, $\subss{d}{2}=[I_{L2_d},I_{L2_q},V_{2,d\,ref},V_{2,q\,ref}]^T\in\Rset^4$. Matrices in \eqref{eq:subsystemsdec} are obtained from \eqref{eq:sysdistdq} as explained in Section \ref{sec:ctrlint}. Moreover, we use the same electrical parameters in \cite{Moradi2010}.\\
        Next, we design decentralized controllers for each DGU assuming that they are dynamically decoupled, hence $\hat{A}_{12}=\hat{A}_{21}=0$. From the definition of matrix $\hat{C}_i$ in \eqref{eq:augith}, the state $\subss\hx i$ is measured and therefore we design state-feedback decentralized controllers
	\begin{equation}
          \label{eq:ctrlLaws}
          \begin{aligned}
            \subss{u}{1}(t)&=K_{1}\subss{\hat{x}}{1}(t)\\
            \subss{u}{2}(t)&=K_{2}\subss{\hat{x}}{2}(t)
          \end{aligned}
	\end{equation}
        where $K_{1}$ and $K_{2}$ are designed as Linear Quadratic Regulators (LQRs) using the weights $Q_{1}=\diag(I_8,10I_2)$, $Q_{2}=\diag(I_6,10I_2)$, $R_{1}=I_2$ and $R_{2}=I_2$. Control laws \eqref{eq:ctrlLaws} guarantee that the closed-loops decoupled DGUs
	\begin{equation}
          \label{eq:sysdecex}
          \begin{bmatrix}
            \subss{\dot{\hat{x}}}{1}(t)\\
            \subss{\dot{\hat{x}}}{2}(t)
          \end{bmatrix}=\begin{bmatrix}
            \hat{A}_{11} & 0\\
            0 & \hat{A}_{22}\\
          \end{bmatrix}
          \begin{bmatrix}
            \subss{\hat{x}}{1}(t)\\
            \subss{\hat{x}}{2}(t)
          \end{bmatrix}
          +\begin{bmatrix}
            \hat{B}_{1}K_{1} & 0\\
            0 & \hat{B}_{2}K_{2}
          \end{bmatrix}
          \begin{bmatrix}
            \subss{\hat{x}}{1}(t)\\
            \subss{\hat{x}}{2}(t)
          \end{bmatrix}.
	\end{equation}
        are asymptotically stable. 

        Figure \ref{fig:eigOpenL} shows the eigenvalues of the open-loop system \eqref{eq:subsystemsdec}: they are asymptotically stable except for the eigenvalues associated with the integrators, i.e. eigenvalues in the origin. Using the stabilizing controllers \eqref{eq:ctrlLaws}, the eigenvalues of the decoupled closed-loop system \eqref{eq:sysdecex} are asymptotically stable (see black stars in Figure \ref{fig:eigCL}).

	\begin{figure}[!htb]
          \centering
          \includegraphics[width=0.5\textwidth]{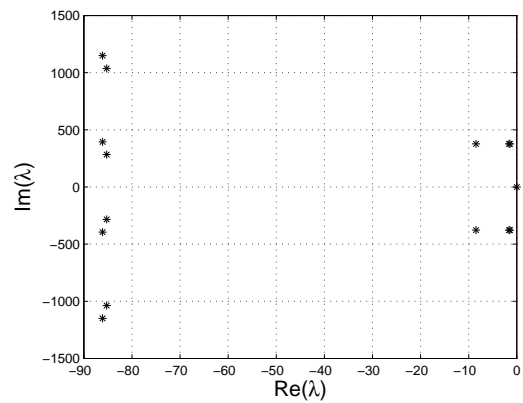}
          \caption{Eigenvalues of the overall open-loop system \eqref{eq:subsystemsdec}.}
          \label{fig:eigOpenL}
        \end{figure} 
	\begin{figure}[!htb]
          \centering
          \includegraphics[width=0.5\textwidth]{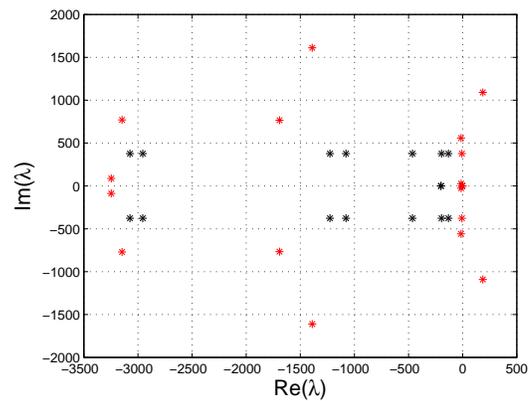}
          \caption{Eigenvalues of the overall closed-loop system \eqref{eq:sysdecex} (in black) and of the overall closed-loop system \eqref{eq:syscouplex} (in red).}
          \label{fig:eigCL}
	\end{figure}
        
        Considering coupling terms, the closed-loop system is  
	\begin{equation}
          \label{eq:syscouplex}
          \begin{bmatrix}
            \subss{\dot{\hat{x}}}{1}(t)\\
            \subss{\dot{\hat{x}}}{2}(t)
          \end{bmatrix}=\begin{bmatrix}
            \hat{A}_{11} & \hat{A}_{12} \\
            \hat{A}_{21}  & \hat{A}_{22}\\
          \end{bmatrix}
          \begin{bmatrix}
            \subss{\hat{x}}{1}(t)\\
            \subss{\hat{x}}{2}(t)
          \end{bmatrix}
          +\begin{bmatrix}
            \hat{B}_{1}K_{1} & 0\\
            0 & \hat{B}_{2}K_{2}
          \end{bmatrix}
          \begin{bmatrix}
            \subss{\hat{x}}{1}(t)\\
            \subss{\hat{x}}{2}(t)
          \end{bmatrix}.
	\end{equation}
        Since the controllers have been designed without taking into account the coupling terms, we can not ensure that system \eqref{eq:syscouplex} is asymptotically stable. Indeed, in this case, some eigenvalues of \eqref{eq:syscouplex} have positive real part (see red stars in Figure \ref{fig:eigCL}).

     \clearpage
     \section{Electrical and simulation parameters of Scenario 1 and 2}
     \label{sec:AppElectrPar}
     This appendix collects all the electrical and simulation parameters of Scenarios 1 and 2 described in Sections \ref{sec:scenario1} and \ref{sec:scenario2}, respectively.

     \begin{table}[!htb]
       \caption{Electrical parameters for microgrid with two DGUs in Scenario 1.}	
       \label{tbl:par2areas}
       \centering
       \begin{tabular}{*{2}{c}}
         \toprule
         Quantity & Values \\
         \midrule
         $R_{t1}$ (VSC filter resistance in DGU 1) & 1.5 m$\Omega$\\
         $L_{t1}$ (VSC filter inductance in DGU 1) & 100 $\mu$H\\
         $R_{t2}$ (VSC filter resistance in DGU 2) & 1.8 m$\Omega$\\
         $L_{t2}$ (VSC filter inductance in DGU 2) & 120 $\mu$H\\
         $C_{t}$ (Shunt capacitance) & 62.86 $\mu$F\\
         VSC rated power (RMS) & 3 MVA\\
         VSC terminal voltage (line to line RMS) & 600 V\\
         VSC modulation index ($m_{f}$) & 33\\
         \midrule
         $f_0$ (microgrid frequency) & 60 Hz\\
         $V_{DC}$ (DC bus voltage) & 2000 V\\
         Transformer voltage ratio k (Y/$\Delta$) & 0.6/13.8\\
         \midrule
         $R_{s}$ (transmission line resistance) & 1$\Omega$\\
         $L_{s}$ (transmission line inductance) & 600 mH\\
         \midrule
         $S_{base}$ (power base value) & 6 MVA (1 pu)\\
         $V_{base}$ (voltage base value) & $\frac{13000\sqrt 2}{\sqrt 3}$ (1 pu)\\
         \bottomrule
       \end{tabular}
     \end{table}

     \begin{table}[!htb]
       \caption{VSC filter parameters for DGUs $\subss{\hat{\Sigma}}{i}^{DGU}$, $i=\{1,\dots,11\}$ in Scenario 2.}	
       \label{tbl:diffpar10}
       \centering
       \begin{tabular}{*{3}{c}}
         \toprule
         DGU & Resistance $R_t (m\Omega)$ & Inductance $L_t (\mu H)$ \\
         \midrule
         $\subss{\hat{\Sigma}}{1}^{DGU}$& 1.2 & 93.7\\
         $\subss{\hat{\Sigma}}{2}^{DGU}$& 1.6 & 94.8\\
         $\subss{\hat{\Sigma}}{3}^{DGU}$& 1.5 & 107.7\\
         $\subss{\hat{\Sigma}}{4}^{DGU}$& 1.5 & 90.6\\
         $\subss{\hat{\Sigma}}{5}^{DGU}$& 1.7 & 99.8\\
         $\subss{\hat{\Sigma}}{6}^{DGU}$& 1.6 & 93.4\\
         $\subss{\hat{\Sigma}}{7}^{DGU}$& 1.6 & 109.6\\
         $\subss{\hat{\Sigma}}{8}^{DGU}$& 1.7 & 104.3\\
         $\subss{\hat{\Sigma}}{9}^{DGU}$& 1.7 & 100.0\\
         $\subss{\hat{\Sigma}}{10}^{DGU}$& 1.5 & 99.4\\
         \midrule
         $\subss{\hat{\Sigma}}{11}^{DGU}$& 1.5 & 100.0\\
         \bottomrule
       \end{tabular}
     \end{table}

     \begin{table}[!htb]
       \caption{Connection three-phase lines parameters for Scenario 2.}	
       \label{tbl:linespar10}
       \centering
       \begin{tabular}{*{3}{c}}
         \toprule
         Connected DGUs (i,j) & Resistance $R_s (\Omega)$ & Inductance $L_s (mH)$ \\
         \midrule
         $(1,2)$ & 1.1 & 600 \\
         $(1,3)$ & 0.9 & 400 \\
         $(3,4)$ & 1.0 & 500 \\
         $(2,4)$ & 1.2 & 700 \\
         $(4,5)$ & 1.0 & 550 \\
         $(5,7)$ & 0.7 & 350 \\
         $(5,6)$ & 1.3 & 800 \\
         $(5,9)$ & 1.2 & 650 \\
         $(7,8)$ & 1.0 & 450 \\
         $(6,10)$ & 0.4 & 600 \\
         \midrule
         $(1,11)$ & 1.0 & 700 \\
         $(6,11)$ & 1.1 & 600 \\
         \bottomrule
       \end{tabular}
     \end{table}

     \begin{table}[!htb]
       \caption{Common parameters of DGUs $\subss{\hat{\Sigma}}{i}^{DGU}$, $i=\{1,\dots,11\}$ in Scenario 2.}	
       \label{tbl:commpar10}
       \centering
       \begin{tabular}{*{2}{c}}
         \toprule
         Quantity & Values \\
         \midrule
         VSC rated power (RMS) & 6 MVA\\
         VSC terminal voltage (line to line RMS) & 600 V\\
         VSC modulation index ($m_{f}$) & 33\\
         \midrule
         $f_0$ (microgrid frequency) & 60 Hz\\
         $V_{DC}$ (DC bus voltage) & 2000 V\\
         Transformer voltage ratio k (Y/$\Delta$) & 0.6/13.8\\
         \midrule
         $S_{base}$ (power base value) & 6 MVA (1 pu)\\
         $V_{base}$ (voltage base value) & $\frac{13000\sqrt 2}{\sqrt 3}$ (1 pu)\\
         \bottomrule
       \end{tabular}
     \end{table}

        \begin{table}[!htb]
          \caption{Parameter of step references in simulations of Scenario 2. $V_{d,ref}$ and $V_{q,ref}$ are the final values of reference signals after time Tstep.}	
          \label{tbl:parsim10}
          \centering
          \begin{tabular}{*{6}{c}}
            \toprule
            DGU & $R (\Omega)$ & $L (mH)$ & $V_{d,ref}$ (pu) & $V_{q,ref}$ (pu) & Tstep (s) \\
            \midrule
            $\subss{\hat{\Sigma}}{1}^{DGU}$& 76 & 111.9 & 0.9 & 0.1 & 0.8\\
            $\subss{\hat{\Sigma}}{2}^{DGU}$& 85 & 134.3 & 0.9 & -0.1 & 0.9\\
            $\subss{\hat{\Sigma}}{3}^{DGU}$& 93 & 123.1 & 0.8 & 0.6 & 1.0\\
            $\subss{\hat{\Sigma}}{4}^{DGU}$& 80 & 167.9 & 0.8 & -0.6 & 1.1\\
            $\subss{\hat{\Sigma}}{5}^{DGU}$& 125 & 223.8 & 0.8 & 0.1 & 1.2\\
            $\subss{\hat{\Sigma}}{6}^{DGU}$& 90 & 156.7 & 0.6 & 0.8 & 1.3\\
            $\subss{\hat{\Sigma}}{7}^{DGU}$& 103 & 145.5 & 0.7 & 0.7 & 1.4\\
            $\subss{\hat{\Sigma}}{8}^{DGU}$& 150 & 179.0 & 0.9 & 0.2 & 1.5\\
            $\subss{\hat{\Sigma}}{9}^{DGU}$& 81 & 190.2 & 0.9 & -0.3 & 1.6\\
            $\subss{\hat{\Sigma}}{10}^{DGU}$& 76 & 111.9 & 0.8 & 0.4 & 1.7\\
            \bottomrule
          \end{tabular}
	\end{table}
     \clearpage

     \bibliographystyle{IEEEtran}
     \bibliography{microgrids-report}

\end{document}